\newcommand{\mcG}[1][]{\ensuremath{\mathcal G#1}\xspace}
\newcommand{\mcH}[1][]{\ensuremath{\mathcal H#1}\xspace}
\newcommand{\tw}{\ensuremath{\operatorname{tw}}}
\newcommand{\pw}{\ensuremath{\operatorname{pw}}}
\newcommand*\crossingType[4]{%
  \begin{tikzpicture}[scale=0.08, every path/.style={draw, line cap=round}]
    \coordinate (a) at (-1,-1);
    \coordinate (b) at ( 1,-1);
    \coordinate (c) at ( 1, 1);
    \coordinate (d) at (-1, 1);
    \path (c) -- (a);
    \path (b) -- (d);
    \ifthenelse{\equal{#1}{1}}{\path (a) -- (b);}{}
    \ifthenelse{\equal{#2}{1}}{\path (b) -- (c);}{}
    \ifthenelse{\equal{#3}{1}}{\path (c) -- (d);}{}
    \ifthenelse{\equal{#4}{1}}{\path (d) -- (a);}{}
  \end{tikzpicture}%
}
\newcommand*\fullCt{\crossingType{1}{1}{1}{1}\xspace}
\newcommand*\almostFullCt{\crossingType{0}{1}{1}{1}\xspace}
\newcommand*\bowtieCt{\crossingType{0}{1}{0}{1}\xspace}
\newcommand*\arrowCt{\crossingType{1}{1}{0}{0}\xspace}
\newcommand*\chairCt{\crossingType{0}{0}{1}{0}\xspace}
\newcommand*\XCt{\crossingType{0}{0}{0}{0}\xspace}
\newcommand*\srestricted[1][]{\(\mathcal{S}\)-restricted 1-planar#1\xspace}
\newcommand*\geosrestricted[1][]{geometric \(\mathcal{S}\)-restricted 1-planar#1\xspace}
\newcommand*\geosprimerestricted[1][]{geometric \(\mathcal{S}'\)-restricted 1-planar#1\xspace}
\newcommand*\KuraSub[1][]{Kuratowski subdivision#1\xspace}
\newcommand*\role[1]{\ensuremath{\mathtt{#1}}\xspace}
\newcommand*\mso[2]{\textsc{#1}\ensuremath{#2}}
\newtheorem{obs}[theorem]{Observation}
\newcommand{\threePartition}{\textsc{3-Partition}\xspace}
\newcommand{\fence}[1][u,v]{\ensuremath{F_{#1}}\xspace}
\newcommand*{\tpWidth}{pathwidth\xspace}
\newcommand*{\ChordDescription}{\ensuremath{\Psi}\xspace}
\newcommand*{\sk}[1]{\ensuremath{\mathrm{sk}(#1)}\xspace}
\newcommand*{\skPlus}[1]{\ensuremath{\mathrm{sk}^+(#1)}\xspace}
\newcommand*{\true}{\texttt{true}}
\newcommand*{\false}{\texttt{false}}
\crefname{req}{Condition}{Condition}
\crefname{lemma}{Lemma}{Lemmata}
\title{A Dichotomy for 1-Planarity with Restricted Crossing Types Parameterized by Treewidth}
\titlerunning{1-Planarity with Restricted Crossing Types}
\author{Sergio Cabello}{Faculty of Mathematics and Physics, University of Ljubljana, Slovenia \and Institute~of~Mathematics, Physics and Mechanics, Slovenia}{sergio.cabello@fmf.uni-lj.si}{https://orcid.org/0000-0002-3183-4126}{Funded in part by the Slovenian Research and Innovation Agency (P1-0297, N1-0218, N1-0285). Funded in part by the European Union (ERC, KARST, project number 101071836). Views and opinions expressed are however those of the authors only and do not necessarily reflect those of the European Union or the European Research Council. Neither the European Union nor the granting authority can be held responsible for them.}
\author{Alexander Dobler}{TU Wien, Austria}{adobler@ac.tuwien.ac.at}{https://orcid.org/0000-0002-0712-9726}{Vienna Science and Technology Fund (WWTF)  grant [10.47379/ICT19035]}
\author{Ga\v{s}per Fijav\v{z}}{Faculty of Computer and Information Science, University of Ljubljana, Slovenia}{gasper.fijavz@fri.uni-lj.si}{https://orcid.org/0000-0003-1276-2494}{}
\author{Thekla Hamm}{Department of Mathematics and Computer Science, TU Eindhoven, the Netherlands}{t.l.s.hamm@tue.nl}{https://orcid.org/0000-0002-4595-9982}{}
\author{Mirko H. Wagner}{Theoretical Computer Science, Osnabrück University, Germany}{mirko.wagner@uos.de}{https://orcid.org/0000-0003-4593-8740}{}
\authorrunning{S.\ Cabello \and A.\ Dobler \and G.\ Fijav\v{z} \and T.\ Hamm \and M.H.\ Wagner}
\keywords{1-planar, crossing type, treewidth, pathwidth} 
\begin{document}

\maketitle  
\begin{abstract}
A drawing of a graph is 1-planar if each edge participates 
in at most one crossing and adjacent edges do not cross.
Up to symmetry, each crossing in a 1-planar drawing
belongs to one out of six possible crossing types,
where a type characterizes the subgraph induced
by the four vertices of the crossing edges.
Each of the \( 63 \) possible nonempty subsets \(\mathcal{S}\) 
of crossing types gives a recognition problem: does a given graph 
admit an \(\mathcal{S}\)-restricted drawing, that is,
a 1-planar drawing where the crossing type of each crossing 
is in \(\mathcal{S}\)?

We show that there is a set \(\mathcal{S}_{\rm bad}\) with three 
crossing types and the following properties:
\begin{itemize}
\item If \(\mathcal{S}\) contains no crossing type 
	from \(\mathcal{S}_{\rm bad}\), then the recognition of
	graphs that admit an \(\mathcal{S}\)-restricted drawing
	is fixed-parameter tractable with respect to the treewidth of the input graph.
\item If \(\mathcal{S}\) contains any crossing type
	from \(\mathcal{S}_{\rm bad}\), then 
	it is NP-hard to decide whether 
	a graph has an \(\mathcal{S}\)-restricted drawing,
	even when considering graphs of constant pathwidth.
\end{itemize}
We also extend this characterization of crossing types to 1-planar
straight-line drawings and show the same complexity behaviour parameterized by treewidth.
\end{abstract}
\section{Introduction}

Drawings of graphs are encountered early and used often:
children are challenged to draw \(K_{3,3}\) without crossings,
drawings of graphs are regularly used when teaching graphs, they appear
in any graph-related textbook, and we use them in our research discussions. 
Drawings of graphs naturally led to the concept of planar graphs and,
more generally, an interest to control the crossings in the drawings. 

Planar graphs enjoy a rich structural theory, have spurred research 
in graph theory and graph algorithms, and are relatively well understood. 
However, they are a quite restrictive class of graphs, and non-planar
graphs still have to be drawn. For this reason, a range of beyond-planarity 
concepts have been suggested and investigated~\cite{DidimoLM19,HT2020,Zehavi22}. 
In this work, we will focus on 1-planarity, one of the extensions 
of planarity that has attracted much interest~\cite{KobourovLM17}.
A drawing of a graph is \emph{1-planar} (or has \emph{local crossing number} \(1\))
if each edge participates in at most one crossing and adjacent edges 
do not cross\footnote{The condition of non-crossing adjacent edges
is usually added because whenever the edges~$uv$ and~$uv'$ cross in a drawing, 
we can get a drawing with fewer crossings by redrawing the start of
the edges $uv,uv'$.}; see \Cref{fig:1-planar}, left, for an example.  
A graph is 1-planar iff it admits a 1-planar drawing. 

Generally speaking, the class of 1-planar graphs is not 
well understood.
We know that recognizing $1$-planar graphs is computationally 
hard~\cite{GrigorievB07,KorzhikM13}, even for graphs that are obtained
from a planar graph by adding a single edge~\cite{CabelloM13},
or for graphs with constant treewidth, pathwidth or even 
bandwidth~\cite{BannisterCE18}. A fixed-parameter algorithm for 
deciding the existence of 1-planar drawings has been obtained recently 
by Hamm and Hlinen{\'{y}}~\cite{HammH22} (and actually is an easy 
consequence of \cite{Grohe04}) using the total number 
of crossings in the drawing as a parameter.

Biedl and Murali~\cite{maininspoBiedlMurali23} noticed that
the crossings in a 1-planar drawing can be classified into different
types, and the (non-)existence of some types may have important
consequences.
We will call these different types \emph{crossing types},
and they describe adjacencies 
between the endpoints of the edges involved in that crossing.
Consequently, there are six different crossing types (up to symmetry), 
for which we follow the terminology from~\cite{maininspoBiedlMurali23}:
(1) A \emph{full} crossing is one in which each endpoint of an edge involved 
in that crossing is connected to each other such endpoint (\(\fullCt\)),
(2) an \emph{almost full} crossing is one in which all but one pair of endpoints 
of the edges involved in that crossing are connected to each other 
(\(\almostFullCt)\),
(3) a \emph{bowtie} crossing is one in which the graph induced by 
the endpoints of the edges involved in that crossing is a \(4\)-cycle (\(\bowtieCt\)),
(4) an \emph{arrow} crossing is one in which there is precisely one endpoint of 
an edge involved in that crossing that is connected to all other such endpoints 
(\(\arrowCt\)),
(5) a \emph{chair} crossing is one in which all but one pair of endpoints 
of the edges involved in that crossing are independent if one were to remove 
the edges involved in the crossing (\(\chairCt\))
(6) a \emph{\(\times\)} crossing is one in which all endpoints of the edges 
involved in that crossing are independent if one were to remove the edges 
involved in the crossing (\(\XCt\)).
See \Cref{fig:1-planar}, left, for an example.

\begin{figure}
    \centering
    \includegraphics[page=1]{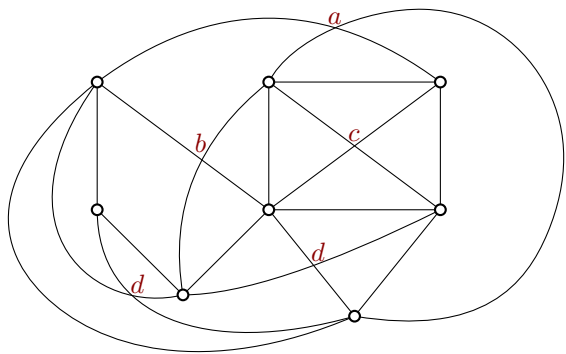}
    \caption{A 1-planar drawing of a graph where 
		the crossing marked with \(a\) is of type \(\bowtieCt\) (bowtie),
		the crossing marked with \(b\) is of type \(\arrowCt\) (arrow),
		the crossing marked with \(c\) is of type \(\fullCt\) (full), and
		the two crossings marked with \(d\) are of type \(\almostFullCt\) (almost full).}
    \label{fig:1-planar}
\end{figure}

For each non-empty \(\mathcal{S}\subseteq \{\fullCt, \almostFullCt, 
\bowtieCt, \arrowCt,\chairCt,\XCt \}\), we say that a drawing is \srestricted 
if it is 1-planar and the type of each crossing belongs to \(\mathcal{S}\), 
and a graph is \srestricted if it admits an \srestricted drawing.  

We already know that limiting the crossing types can have consequences.
For instance, Biedl and Murali~\cite{maininspoBiedlMurali23} described
an algorithm to compute the vertex connectivity of a graph that
is given with a 1-planar drawing without \(\XCt\) crossings.
This was extended by Biedl, Bose and Murali~\cite{BiedlB024}
to allow some \(\XCt\) crossings in a controlled manner.
Bose et al.~\cite{BoseCMM25} have shown that graphs that
have a 1-planar \(\XCt\)-crossing-free drawing have 
bounded cop-number. However, it was not clear how difficult it is
to decide if a given graph has such a 1-planar drawing without 
\(\XCt\) crossings, and this is posed explicitly 
as an open problem in~\cite{maininspoBiedlMurali23}.

On the other end of the spectrum when it comes to crossing types, 
Brandenburg~\cite{Brandenburg15,Brandenburg19} has shown that the class of 
1-planar graphs that admit 1-planar drawings where \emph{all} crossings 
are \(\fullCt\)-crossings is equivalent to the class 
of \(4\)-map graphs with holes, and thus \(\{\fullCt\}\)-restricted 1-planar
graphs are recognizable in polynomial time.
M{\"{u}}nch and Rutter~\cite{MunchR24} show that, for each subset \(\mathcal{S}\)
of crossing types, recognizing \srestricted is \FPT\ with respect to the 
\emph{number} of crossings.

Another important thread in our work is that of straight-line drawings,
which we refer to as \emph{geometric drawings}.
It is well-known that planar graphs admit a geometric planar drawing~\cite{Fary,Wagner}.
However, we know that whether in the drawing we use arbitrary
curves for the edges or we use straight-line segments does affect
the crossings~\cite{BienstockD93}.
Similar phenomena occur for $1$-planar graphs: not all
graphs that have a $1$-planar drawing have a geometric $1$-planar drawing. 
Thomassen~\cite{Thomassen88a} provided a criterion to know when
a $1$-planar drawing can be continuously deformed into
an equivalent $1$-planar drawing with straight-line edges.
The characterization is based on forbidden configurations in the drawing;
we will explain this in detail below.
How to transform efficiently a 1-planar drawing into a geometric 
one has been considered by Hong et al.~\cite{HongELP12}.

\subparagraph{Our contribution.}
We systematically investigate the complexity of recognizing 
\srestricted graphs for every fixed, nonempty subset \(\mathcal{S}\) of crossing types.
Fortunately, the characterization is neat, as it does not need
to go over the \(63\) possible cases, and shows an interesting
dichotomy when considering graphs parameterized by the treewidth:
\begin{itemize}
\item If \(\mathcal{S} \subseteq \{\fullCt,\almostFullCt,\bowtieCt\}\),
  it is \FPT\ parameterized by the treewidth to recognize \(\mathcal{S}\)-restricted 
  1-planar graphs.
\item If \(\mathcal{S} \cap \{\arrowCt,\chairCt,\XCt\} \neq \emptyset\), then it is 
  \NP-hard to recognize 1-planar graphs that are \srestricted, even for graphs
  that have the treewidth bounded by constant.
\end{itemize}
Thereby we do not only resolve the open question from 
\cite{maininspoBiedlMurali23} (negatively), but we also obtain 
a very detailed perspective on how hard is recognizing 
1-planar graphs with constant treewidth that
admit 1-planar drawings with certain crossing types.
The key difference between a group of crossing types and the other group 
is the existence of a \(4\)-cycle on the endpoints of the crossing. 
Of course, the role of such cycles is not clear a priori.

We also show that our dichotomy extends to the 
\emph{geometric} setting, i.e.\ when all edges are required to 
be drawn as straight-line segments.
Surprisingly, the same characterization works. However,
extending the result to the geometric setting requires substantial effort.

\subparagraph{Techniques.}
To obtain our positive result, we first use a sequence of reductions 
to show that it suffices to solve the problem for instances with 
enough connectivity. Here, enough connectivity means that
the graph is internally \(3\)-connected: it can be obtained 
from a \(3\)-connected graph by subdividing some edges once.
This step is quite standard but tedious, especially in the case of 
geometric drawings, where we have to slightly generalize the 
problem we consider so that a prescribed vertex has to lie on the 
outer face of the target 1-planar drawing.
The reductions do use that the crossing types are a subset 
of~\(\{\fullCt,\almostFullCt,\bowtieCt\}\).

To solve the instances with enough connectivity, we use the fact 
that we target treewidth parameterizations 
to be able to invoke Courcelle's theorem~\cite{MR1042649}.
However, expressing 1-planarity is something that is not 
possible efficiently in MSO (the logic required to formulate 
a problem in so that Courcelle's theorem implies a fixed-parameter 
algorithm), unless \(\FPT = \W[1]\).
This is where restricting crossing types becomes essential.
We are able to show that restricting to crossing types that 
force \(4\)-cycles to be present on the endpoints of the crossing 
edges allows for encoding 1-planarity in MSO in a constant-length formula.
The key insight to do so is to realize an elegant way in which 
we can identify pairs of crossing edges without explicitly coding 
them into an MSO-formula.

To adapt this to the geometric setting, we use the fact that small 
topological obstructions to geometric 1-planarity are 
known~\cite{Thomassen88a}.
As MSO cannot speak about topology, it is useful that we can first 
reduce to instances whose prospective planarizations have unique embeddings.
Here it is again handy to have enough connectivity.

Our \NP-hardness results are all based on a single reduction 
from \threePartition which are modularly adaptable for each crossing 
type that does not contain a \(4\)-cycle.

\subparagraph{Organization.}
The rest of the paper is organized as follows.
We start providing some preliminaries in \Cref{sec:preliminaries}.
The positive \FPT\ results follow from \Cref{sec:3con}, where we
show that it suffices to consider instances with enough connectivity,
and \Cref{sec:fpt}, where we provide an \FPT-algorithm 
parameterized by the treewidth, assuming enough connectivity. 
The NP-hardness reductions are presented in \Cref{sec:hardness}.

\section{Preliminaries}
\label{sec:preliminaries}
We use standard terminology and notation from graph theory~\cite{Diestel-GT-2025}.
We also assume general familiarity with parameterized complexity 
theory~\cite{CyganFKLMPPS15,DowneyF13}.
We explain here concepts and notation that we use; some of them may be 
considered standard by several readers, but others are perhaps less common.

\subparagraph{Graph terminology.}
All graphs we consider are simple and undirected.
We denote an edge with vertices \(u\) and \(v\) as \(uv\). For a set \(X\) 
of vertices of a graph \(G\), the graph induced by \(X\) is \(G[X]\). 
For a vertex \(u\) of a graph \(G\) the graph resulting 
from the removal of \(u\) is denoted by \(G-u\). 
For two vertices \(u\) and \(v\) of a graph \(G\), the graph resulting 
from the removal of the edge \(uv\) is denoted by \(G-uv\); note that
we leave the option that \(uv\) is not present in \(G\), and in such
a case \(G-uv=G\). The graph resulting from the addition 
of \(uv \not \in E(G)\) is denoted by \(G + uv\).
For \(i \in \mathbb{N}\), we denote by \(P_i\), \(C_i\) and  \(K_i\) a path, 
cycle and clique on \(i\) vertices respectively and by \(K_{i,i}\) a complete 
bipartite graph with \(i\) vertices on each side.

For a graph \(G\), a vertex set \(S\) whose removal results in at least 
two connected components is called a \emph{separator}.
If \(|S| = k\), then \(S\) is called a \(k\)-separator, and if \(|S|=1\),
then its only element is called a \emph{cutvertex}.
A separator \(S\) of \(G\) induces a \emph{separation} \((X_1,X_2)\) if
\(X_1\cup X_2 = V(G)\), \(X_1\cap X_2 = S\), 
\( X_1\setminus S\neq \emptyset\), \( X_2\setminus S\neq \emptyset\),
and each path in \(G\) from a vertex in \(X_1\setminus S\) to a vertex 
in \(X_2\setminus S\) contains some vertex of \(S\).
A separation induced by a separator is not necessarily unique.
For a separation \((X_1,X_2)\) induced by a separator \(S\), each
edge \(uv\) of \(G\) belongs to \(G[X_1]\), \(G[X_2]\) or both (when 
\(\{u,v\}\subseteq S\)).

The (\emph{vertex}-)\emph{connectivity} of a graph is the smallest 
number of vertices whose removal results in at least two connected 
components or a single vertex.
The graph \(G\) is \emph{\(k\)-connected} if its connectivity is at least \(k\).
We call \(G\) \emph{internally 3-connected} if \(G\) can be obtained from a 
3-connected graph \(G'\) by subdividing (with single vertices) a 
subset of edges of \(G'\). (Here \(G'\) may have some parallel edges and 
at most one edge in each group of parallel edges is not subdivided.) 
An internally 3-connected graph \(G\) is 2-connected and,
if \(S\) is a 2-separator in \(G\), then \(G-S\) has a single
component with more than one vertex.

If \(G\) is a connected graph, then a \emph{block-tree} or \emph{BC-tree} 
of \(G\) is a tree \(T\) whose vertices are cutvertices and blocks of \(G\), 
and adjacency in \(T\) is defined as containment in \(G\). 

A tree \(T\) is the \emph{SPR-tree} of a \(2\)-connected graph \(G\), 
if every node \(t\) of \(T\) is labeled with S-, P-, or R-label, 
and it represents a graph, called the \emph{skeleton of \(t\)} and 
denoted \sk{t}, whose edges are labeled \emph{real} or \emph{virtual}. 
S-nodes represent cycles, P-nodes represent dipoles, and R-nodes 
represent 3-connected graphs that are not a triangle, which are 
considered S-nodes. Every edge \(xy \in E(T)\) pairs a pair 
of directed virtual edges from \sk{x} and \sk{y}, respectively, 
and every virtual edge \(e\) from \sk{x} is paired with another 
virtual edge. We obtain \(G\) from its SPR-tree representation 
by identifying pairs of paired virtual edges and deleting the 
identified pairs. The graphs represented by R-nodes are called 
\emph{3-connected components} of \(G\). 

The \emph{skeleton+} of a node \(x\) of \(T\), denoted \skPlus{x}, 
is obtained as follows: 
we put into \skPlus{x} all vertices and all edges of \sk{x};
for each virtual edge \(uv\) of \sk{x}, we add \(uv\) to
\skPlus{x} and subdivide it once;
for each virtual edge \(uv\) of \sk{x} that is an actual
edge of \(G\), we add \(uv\) to \skPlus{x}. 
See \Cref{fig:sk+} for an example.
Note that the skeleton+ of an R-node is internally 3-connected.

\begin{figure}
    \centering
    \includegraphics[page=3]{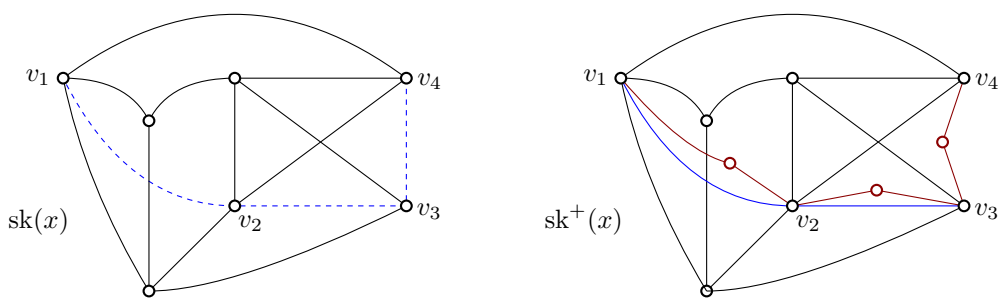}
    \caption{Left: Example of \sk{x} for an R-node \(x\) where the
		virtual edges are dashed blue.
		Right: the corresponding skeleton+ of \(x\), \skPlus{x},
		assuming that \( v_1v_2\) and \(v_2v_3\) are edges of 
		the graph \(G\) but \(v_3v_4\) is not an edge of \(G\).
		The red part comes from subdividing the virtual edges 
		and the blue edges come from the virtual edges of \sk{x} that
		exist in \(G\).
	}
    \label{fig:sk+}
\end{figure}

While BC-trees are a classical concept in graph theory, 
SP(Q)R-trees were first described by 
Di~Battista and Tamassia~\cite{BattistaT89}. 
Both BC-trees and SPR-trees of a graph \(G\) have size \(O(|G|)\) 
and can be computed in linear time~\cite{GutwengerM00}.

\subparagraph{Graph drawing.}
A \emph{drawing \mcG of a graph} \(G\) is an 
injective mapping of \(V(G)\) to \(\mathbb R^2\) and of each edge 
\(e = uv \in E(G)\) to a simple curve \(\mcG(e)\) starting 
in \(\mcG(u)\), ending in \(\mcG(v)\) and not intersecting 
\(\mcG(V(G)) \setminus \{\mcG(u),\mcG(v)\}\).
We call the values of vertices and edges under the drawing of a 
graph the \emph{drawings of these vertices and edges}.
In the following we often identify vertices and edges with 
their respective drawings. We restrict drawings to only allow 
intersections of edges in finite sets and that whenever two edges 
intersect in a set outside of \(\mcG(V(G))\) they do so 
transversally (i.e.\ not tangentially).
We call such intersections \emph{crossings} and say that the 
respective edges \emph{cross} or \emph{are involved} in that crossing.
A drawing is \emph{planar} if it has no crossings and 
a graph is \emph{planar} if it has a planar drawing.

A characterization of planar graphs, which will be useful to us, 
is that a graph is planar, if and only if it does not contain 
a subgraph that is a subdivision of 
\(K_5\) or \(K_{3,3}\)~\cite{Kuratowski30}; such subgraphs 
are also called \emph{\KuraSub[s]}.

\begin{figure}
    \centering
    \includegraphics[page=2]{more-figures}
    \caption{Planarization $\mcG^\times$ of the 
		drawing \mcG of \Cref{fig:1-planar}.}
    \label{fig:planarization}
\end{figure}

The \emph{planarization} \(\mcG^\times\) of a 1-planar 
drawing \mcG of a graph \(G\) is obtained from 
\mcG by replacing each pair of crossing edges 
with a new vertex positioned in a crossing which is linked 
to the endpoints of the crossing edges. Essentially this 
means we replace pairs of crossing edge segments by \(4\)-claws.
See \Cref{fig:planarization}.

A \emph{face} of a drawing is a maximal connected subset 
of~\(\mathbb{R}^2\) that does not intersect the drawing (this 
implies that faces are always open sets, and this definition 
is also for drawings with crossings). Note that the faces
of a 1-planar drawing \mcG are the same
as the faces of its planarization \(\mcG^\times\).
The \emph{outer face} is the unique unbounded face.
For a simple closed curve in a drawing, a vertex is drawn 
\emph{inside} that curve if it is separated from the outer 
face by that curve, and \emph{outside} of that curve otherwise.

We overload the term `\emph{boundary} of a face' to mean 
all the vertices, edges, edge segments, and crossings whose drawings 
lie on the (pointwise set) boundary of a face. We say that 
two vertices are \emph{cofacial} in a drawing if they lie 
together on the boundary of an arbitrary face of the drawing.
It is known and easy to show that for any planar graph \(G\)
and any subset \(U\) of the vertices of \(G\), there is an embedding where
the vertices of \(U\) are in the same face if and only if the
graph with vertex set \(V(G)\cup\{t\}\), where \(t\) is a new
vertex, and edge set \(E(G)\cup \{ut\mid u\in U\}\) is planar.

We will often use the following simple observation about 1-planar
drawings: if the edges \(uv\) and \(u'v'\) cross in a 1-planar 
drawing \mcG, then the vertices \(u\) and \(u'\) are cofacial in \mcG. 
Indeed, we can find a curve in the plane from \(u\) to \(v\)
that closely follows \(uv\) until the crossing with \(u'v'\),
and then closely follows \(u'v'\) until \(u'\). Since 
the drawings of \(uv\) and \(u'v'\) cannot be crossed by
any other edge, because of 1-planarity, the curve
does not cross \mcG at any point.

Two drawings of the same graph are \emph{equivalent} if there 
is a homeomorphism of the plane onto itself that transforms
one drawing into the other.

\subparagraph{Geometric 1-planar drawings.}
A \emph{geometric drawing} of a graph is a drawing where the drawing
of each edge is a straight-line segment. 
Thomassen~\cite{Thomassen88a} showed that a drawing \(\mcG\) 
is equivalent to a geometric 1-planar drawing if and only if 
\(\mcG\) is 1-planar and neither of the 
following is contained in \(\mcG\) (see \Cref{fig:bwConfig} for reference):
\begin{itemize}
    \item Three edges \(ss'\), \(sb\) and \(s'b'\) such that 
		\(sb\) and \(s'b\) cross and \(b\) and \(b'\) are inside 
		the closed curve given by the drawings of \(ss'\), \(s'b'\) 
		from \(s'\) up to its crossing with \(sb\) and \(sb\) from 
		its crossing with \(s'b'\) up to \(s\); 
		such a situation is called a \emph{B-configuration}.
    \item Four edges \(sw_1\), \(sw_2\), \(s'w'_1\) and \(s'w'_2\) 
		such that \(sw_1\) and \(s'w'_1\) cross, \(sw_2\) 
		and \(s'w'_2\) cross, and \(w_1\), \(w_2\), \(w'_1\) and \(w'_2\) 
		lie inside the closed curve given by the drawings of the 
		four edges between their crossings and \(s\) and \(s'\); 
		such a situation is called a \emph{W-configuration}. 
\end{itemize}
\begin{figure}
    \centering
    \includegraphics[page=4]{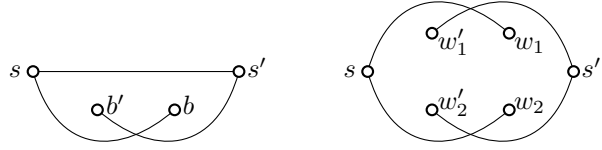}
    \caption{B-configuration (left) and W-configuration (right).}
    \label{fig:bwConfig}
\end{figure}

In both configurations above we call the pair of vertices \(s,s'\) 
the \emph{spine}.
Note that a B- or a W-configuration is a clear obstruction to obtain
an equivalent geometric 1-planar drawing. The other implication is
far from obvious. See \Cref{fig:W-configuration} for an example.

\begin{figure}
    \centering
    \includegraphics[page=5]{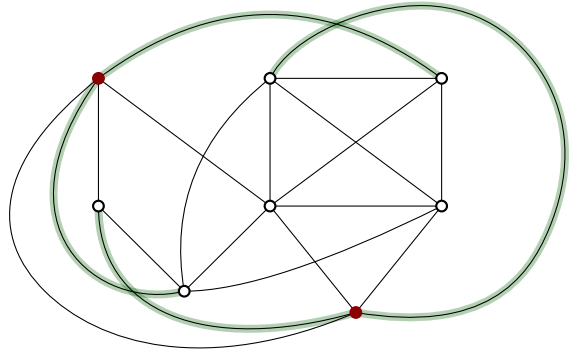}
    \caption{The 1-planar drawing of \Cref{fig:1-planar} has 
		a W-configuration, as highlighted here. The spine is marked 
		with red dots.}
    \label{fig:W-configuration}
\end{figure}

\subparagraph{Parameterized complexity.}
We briefly recall the definition of treewidth and pathwidth.
A \emph{tree decomposition} of a graph \(G\) is a tuple \((T,\chi)\) 
consisting of a tree \(T\) and a mapping 
\(\chi : V(T) \to 2^{V(G)}\) such that, for each \(e \in E(G)\), 
there is some \(t \in V(T)\) such that \(e \subseteq \chi(t)\) 
and, for each \(v \in V(G)\), the induced subgraph 
\(T[\{t \in V(T) \mid v \in \chi(t)\}]\) is connected and non-empty.
A \emph{path decomposition} is defined in the same way, but 
instead of \(T\) being a tree we require that \(T\) is a path.
The \emph{width} of a tree or path decomposition is defined 
as \(\tw(T,\chi) = \max_{t \in V(T)}|\chi(t)| - 1\). 
The \emph{treewidth} (denoted \(\tw(G)\)) and the \emph{pathwidth} 
(denoted \(\pw(G)\)) of \(G\) are defined as the minimum width 
of a tree and path decomposition of \(G\), respectively.

Given a tree decomposition \((T,\chi)\) of a graph \(G\), 
the graph with vertex set \((V(G)\) and edge set 
\(\{uv \mid \exists t \in V(T) \ \{u,v\} \subseteq \chi(t)\}\) 
is called a \emph{chordal completion} of \(G\). By construction, 
\(G\) is a subgraph of any chordal completion and such a chordal 
completion has treewidth at most \(\tw(T,\chi)\). 
Chordal completions are \emph{chordal} graphs, i.e. do not contain 
induced \(4\)-cycles; see for 
example~\cite[Proposition 12.3.11]{Diestel-GT-2025}.

MSO\(_2\) allows quantification over vertex, edge, vertex set 
and edge set variables on expressions built from logical operations, 
containment (\(\in\)), inclusion (\(\subseteq\)) and the 
graph-specific adjacency (\mso{adj}) and incidence (\mso{inc}) 
predicates that describe whether two vertices are adjacent 
to each other and whether a vertex is incident to an edge, respectively.

Courcelle's theorem~\cite{MR1042649,DowneyF13} states that 
any problem encodable by an MSO\(_2\)-sentence \(\varphi\) over 
input graph \(G\) can be decided in \FPT-time parameterized 
by \(\tw(G) + |\varphi|\), where \(|\varphi|\) is the length 
of the sentence \(\varphi\).

\section{Reducing to Internally 3-Connected Instances}
\label{sec:3con}

The first aim of this section is to show that for recognizing
\(\mathcal{S}\)-restricted 1-planar graphs, when 
\(\mathcal{S} \subseteq \{\fullCt,\almostFullCt,\bowtieCt\}\),
it suffices to consider the problem for graphs with enough 
connectivity. More precisely, we will show the following in
\Cref{sec:3con:usual}.

\begin{theorem}
    \label{thm:3con}
    Assume that \(\mathcal{S} \subseteq \{\fullCt,\almostFullCt,\bowtieCt\}\).
	A graph is \srestricted if and only if the skeleton+ of each 
	of its 3-connected components is \srestricted.
\end{theorem}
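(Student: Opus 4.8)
The plan is to reduce to highly connected graphs and then argue one split pair at a time. First I would pass to the 2-connected case using the block-cut tree. The key point is that every block of a graph \(G\) is an \emph{induced} subgraph of \(G\), so the type of any crossing — which depends only on the adjacencies among its four endpoints, and these all lie in a single block — is the same whether measured in \(G\) or in that block. Hence restricting an \srestricted drawing of \(G\) to a block gives an \srestricted drawing of that block; conversely, \srestricted drawings of all blocks can be glued along the block-cut tree by nesting each block's drawing into a face incident to the relevant cutvertex, which we may place on the outer boundary (and drawings of the connected components of a disconnected graph can simply be placed side by side). Since the 3-connected components of \(G\) are exactly those of its blocks, this reduces the statement to 2-connected \(G\).

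For 2-connected \(G\) with SPR-tree \(T\), the core is a splitting lemma: if \(\{u,v\}\) is a split pair (an edge of \(T\)) with separation \((X_1,X_2)\), and \(G_i\) is \(G[X_i]\) augmented by the skeleton+ gadget at \(\{u,v\}\) — a fresh degree-\(2\) vertex \(w_i\) with edges \(uw_i\) and \(w_iv\), together with the edge \(uv\) exactly when \(uv\in E(G)\) — then \(G\) is \srestricted if and only if \(G_1\) and \(G_2\) both are. Iterating this (each split strictly shrinks \(T\)) decomposes \(G\) into pieces, each either the skeleton+ of a 3-connected component of \(G\) or a series-parallel graph; the latter are planar, hence trivially \srestricted and contain no 3-connected component, and the gadget is set up so that the real-edge status of \(uv\) is consistent on the two sides, matching the skeleton+ definition, so the iterated lemma yields exactly the claimed equivalence. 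The forward direction of the lemma is easy and does not use the hypothesis on \(\mathcal{S}\): from an \srestricted drawing \mcG of \(G\), delete \(X_2\setminus\{u,v\}\) to get an \srestricted drawing of the induced subgraph \(G[X_1]\) (types are preserved since \(G[X_1]\) is induced); as \(G[X_2]-\{u,v\}\) is connected and attached only at \(u\) and \(v\), its vacated region merges into a single face incident to both, through which we route the path \(uw_1,w_1v\), obtaining an \srestricted drawing of \(G_1\) (the edge \(uv\), if present, is already there); symmetrically for \(G_2\).

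The backward direction is where \(\mathcal{S}\subseteq\{\fullCt,\almostFullCt,\bowtieCt\}\) is essential, and I expect it to be the main obstacle. Given \srestricted drawings \(\mcG_1,\mcG_2\) of \(G_1,G_2\), one wants to delete \(w_1\) from \(\mcG_1\), carve out a face incident to \(u\) and \(v\), and insert there the drawing of \(G[X_2]\) read off from \(\mcG_2\) with its own gadget removed (keeping a single copy of \(uv\) when it is real), routing the \(G[X_2]\)-edges at \(u\) and \(v\) along the empty arcs left by the removed gadget paths; since \(G[X_1]\) and \(G[X_2]\) are induced and the insertion lives inside one face, no crossing is created and none changes type. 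The one thing that could break this is a crossing on a gadget edge: if, say, \(uw_1\) stays crossed by an edge \(e\), then after the insertion \(e\) would cross an edge of \(G[X_2]\), and the four endpoints of that crossing — one of \(u,v\), a vertex of \(X_1\setminus\{u,v\}\), and two vertices of \(X_2\setminus\{u,v\}\) — support no \(4\)-cycle, because no edge joins \(X_1\setminus\{u,v\}\) to \(X_2\setminus\{u,v\}\); so its type would fall outside \(\{\fullCt,\almostFullCt,\bowtieCt\}\), a contradiction. Hence the crux is to choose \(\mcG_1\) (and \(\mcG_2\)) with all gadget edges uncrossed. Here the crossing-type restriction is used a second time: because \(w_1\) has degree \(2\) with neighbours \(u,v\), any edge crossing \(uw_1\) must — to give a \(4\)-cycle on its endpoints — be of the form \(vb\) with \(b\) adjacent to both \(u\) and \(v\), whence the path \(u\)–\(w_1\)–\(v\) can be rerouted to run alongside the path \(u\)–\(b\)–\(v\); a minimality choice of \(\mcG_1\) together with an exchange argument iterating this reroute removes all gadget-edge crossings. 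I expect making this exchange argument terminate cleanly (or replacing it by a direct argument that \(u\) and \(v\) lie on a common face of the gadget-free drawing) to be the most delicate point, as this is exactly where the absence of \arrowCt-, \chairCt- and \XCt-crossings does the real work.
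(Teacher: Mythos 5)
Your proposal follows essentially the same route as the paper: reduce to blocks, then split at the \(2\)-separators given by the SPR-tree via a lemma stating that \(G\) is \srestricted if and only if both augmented sides \(G_1,G_2\) are, where the augmentation is exactly the skeleton+ gadget; the backward direction is handled, as in the paper, by first normalising the drawings so that the edges incident to the degree-\(2\) gadget vertex are uncrossed, and then nesting one drawing into the face of the other that contained the uncrossed gadget path. Your worry about making the ``exchange argument'' terminate is unfounded: take an \srestricted drawing with the minimum total number of crossings; since any edge crossing \(uw_1\) must be incident to \(v\) (forced by the \(4\)-cycle), \(u\) and \(v\) are cofacial and the reroute strictly decreases the number of crossings, so a minimum drawing already has all gadget edges uncrossed. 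This is precisely the paper's \Cref{lem:3con}.

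The one genuine flaw is in your forward direction, which you claim ``does not use the hypothesis on \(\mathcal{S}\)'' and justify by saying that \(G[X_2]-\{u,v\}\) is ``attached only at \(u\) and \(v\)'', so that its vacated region is a single face incident to both. This ignores that edges of the two sides may cross each other in \mcG: an edge with an endpoint in \(X_1\setminus\{u,v\}\) can a priori cross an edge with an endpoint in \(X_2\setminus\{u,v\}\), and then deleting the \(X_2\)-side need not obviously leave \(u\) and \(v\) cofacial in the restricted drawing. The hypothesis \(\mathcal{S}\subseteq\{\fullCt,\almostFullCt,\bowtieCt\}\) is exactly what rescues this step: any such cross-side crossing lies on a \(4\)-cycle, which must have the form \(uavbu\) with \(a\in X_1\) and \(b\in X_2\) because \(\{u,v\}\) separates the two sides, so \(u\) and \(v\) are incident to a pair of crossing edges and hence cofacial; and if no cross-side crossing exists, a curve running through the drawing of \(G[X_2]\) from \(u\) to \(v\) witnesses their cofaciality directly (taking a small detour along \(uv\) if that edge is present). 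With that repair --- which is how the paper argues \Cref{lem:3con:alt} --- your proof goes through.
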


The second aim is to show a similar result in the geometric setting.
Because the choice of the outer face matters for \geosrestricted drawings, 
we consider a generalization of the problem that takes this 
into account appropriately, as follows.

A graph \(G=(V,E)\) is \emph{\(O\)-\geosrestricted} for 
some \(O\subset V\) with \(|O| \leq 1\), if there is a \geosrestricted 
drawing of \(G\) where the vertex from \(O\), if it exists, 
is on the outer face. The case where \(O=\emptyset\) corresponds
to the original problem, \geosrestricted.
In this geometric setting we will show the following 
in \Cref{sec:3con:geom}.

\begin{theorem}
    \label{thm:3con:geom}
    Assume that \(\mathcal{S} \subseteq \{\fullCt,\almostFullCt,\bowtieCt\}\).
	Given an \FPT-algorithm parameterized by treewidth that decides 
	whether internally \(3\)-connected graphs are \(O\)-\geosrestricted, 
	we can formulate an \FPT-algorithm parameterized by treewidth
	to decide whether general graphs are \geosrestricted.
\end{theorem}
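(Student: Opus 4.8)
My plan is to build the algorithm around the same decomposition that underlies \Cref{thm:3con} --- split $G$ into connected components, root the BC-tree of each component, and root the SPR-tree of each block --- adding the two ingredients the geometric setting needs on top of the topological statement: a consistent choice of outer faces, and the exclusion of W- and B-configurations. The decomposition produces $O(|G|)$ pieces of total size $O(|G|)$: for each R-node $t$ the piece $\skPlus{t}$, which is internally $3$-connected by the remark following the definition of skeleton+, and for each S- or P-node a subdivided cycle or theta-graph, which is planar and hence trivially \geosrestricted. To each piece I would attach at most one designated outer vertex: at the BC-tree level, the cutvertex through which a block hangs from its parent; at the SPR-tree level, the subdivision vertex of the virtual edge through which a node hangs from its parent (the piece at the overall root gets none, matching the fact that the outer face is free for \geosrestricted). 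The reduction runs the given black box on each R-node piece, answers ``yes'' for each S-/P-node piece outright, and reports that $G$ is \geosrestricted iff all calls return ``yes''. For the running time I would use that a block is an induced subgraph of $G$, that the treewidth of a $2$-connected graph equals --- up to an additive constant --- the largest treewidth over its SPR-tree skeletons, and that skeleton+ changes treewidth by $O(1)$; hence every piece has treewidth $\tw(G)+O(1)$, and since BC- and SPR-trees are computable in linear time and the black box is \FPT\ in treewidth, the whole algorithm is \FPT\ in $\tw(G)$. Correctness of the crossing-type bookkeeping is exactly what \Cref{thm:3con} already supplies; what remains is to show that \emph{geometric} realizability together with the outer-face requirement composes along the decomposition.

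\textbf{The gluing direction.} Given \geosrestricted drawings of all pieces with the prescribed vertex on the outer face, I would reassemble a drawing of $G$ by composing bottom-up along the BC- and SPR-trees as in the proof of \Cref{thm:3con}, with the addition that each child piece is uniformly scaled down and placed near the attachment pole whose subdivision vertex was forced onto the child's outer face, so that the whole interface between a child and its parent is glued inside a single face of the parent. Uniform scaling keeps edges straight, a child drawn inside an open face of the parent adds no crossings with edges drawn outside that face, and the crossing types are preserved exactly as in \Cref{thm:3con}. The one new obligation is to rule out W- and B-configurations. Since the two edges of any crossing lie in a common piece --- indeed a common block, as two blocks share at most one vertex while the two spine vertices of a configuration would otherwise be shared by two blocks --- the only way a fresh configuration could arise is for a child placed inside a face, or for an interface arc along which a child is attached, to accidentally complete the ``drawn inside the closed curve'' clause of a configuration whose edges belong to the host.

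\textbf{The restriction direction, and the main obstacle.} For the converse, from a geometric \geosrestricted drawing of $G$ I would restrict to each piece: restriction keeps edges straight; the pole pair of every virtual edge is cofacial in the drawing (which follows from the SPR-tree structure, using the essentially unique embeddings of the internally $3$-connected pieces), so every subdivided virtual edge can be reinserted as an un-crossed arc along a face boundary; and the designated vertex is brought to the outer face by declaring a suitable incident face to be the outer one. W- and B-freeness is inherited because any such configuration in a piece would extend to one in $G$. The step I expect to be the main obstacle is controlling the global ``inside a closed curve'' clause across the gluings. I would handle it by: (i) placing each child inside a tiny disk that meets the host face boundary only at the intended pole and otherwise avoids every closed region bounded by configuration-relevant edges of the host --- possible because that pole lies on the host face's boundary, leaving positional freedom within the face --- so that the child's vertices never lie inside a host configuration; and (ii) routing each reinserted subdivided-virtual-edge arc so tightly along an existing boundary that it stays un-crossed, so that it can be neither one of the crossing edges of a W- or B-configuration nor --- being a two-edge path through a degree-two vertex --- the single un-crossed edge of a B-configuration. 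Turning this geometric bookkeeping into a clean argument, rather than the combinatorial skeleton of the reduction (which mirrors that of \Cref{thm:3con}), is where the real effort goes.
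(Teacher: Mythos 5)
There is a genuine gap, and it is in the algorithmic skeleton itself: you fix a rooting of the BC-/SPR-trees, give every non-root piece its parent-attachment vertex as the designated outer vertex, and declare \(G\) \geosrestricted iff \emph{all} calls succeed. But the correct characterization at a cutvertex \(s\) (the paper's \Cref{lem:geom-2con}, and analogously \Cref{lem:geom-3con-R} at a \(2\)-separator) is a \emph{disjunction}: either \(G_1\) is \(O\)-\geosrestricted and \(G_2\) is \(\{s\}\)-\geosrestricted, or vice versa. These two options do not collapse into a single conjunction under an a priori rooting, because \(\{s\}\)-\geosrestricted[ity] is strictly stronger than \(\emptyset\)-\geosrestricted[ity] (the paper's Figure illustrating \Cref{lem:geom-3con-R} shows that a pole can be forced off the outer face). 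Concretely, if \(G_1\) is \(\{s\}\)-\geosrestricted while \(G_2\) is \(\emptyset\)- but not \(\{s\}\)-\geosrestricted, then \(G\) \emph{is} \geosrestricted, yet your algorithm rooted at \(G_1\) asks whether \(G_2\) is \(\{s\}\)-\geosrestricted, gets ``no'', and wrongly rejects. The paper avoids this with the adaptive \Cref{cor:geom-2con-for-algo,cor:geom-3con-R-for-algo}: first test whether the peeled-off leaf can play the ``inner'' role, and branch on the answer; the greedy is justified by the implication \(\{s\}\)-\geosrestricted \(\Rightarrow\) \(O\)-\geosrestricted for \(O\subseteq\{s\}\). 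Your fixed rooting would have to be replaced by this adaptive recursion (or by a search over rootings), and that is not a cosmetic change --- it is the content of the two corollaries.

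A second, related gap is in your restriction direction: ``the designated vertex is brought to the outer face by declaring a suitable incident face to be the outer one'' is not available in the geometric setting, because whether a drawing contains a B- or W-configuration \emph{depends on which face is outer} (a B-configuration is exactly a crossing whose free region is bounded; cf.\ \Cref{lem:confidnessB,lem:confidnessW}). Moreover the designated vertex of a child piece is a \emph{new} subdivision vertex \(t_2\) of a virtual edge, and the paper's forward direction of \Cref{lem:geom-3con-R} has to prove that at least one pole lies on the outer face of the restricted drawing and then \emph{redraw} an edge \(s'v\) to place \(t_2\) there --- this is the hardest geometric content of the section and is entirely elided by your phrasing. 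Finally, a smaller point: gluing by ``uniform scaling keeps edges straight'' works at a cutvertex (one pinned point) but not at a \(2\)-separator (two pinned points inside a prescribed region); the paper instead deforms topologically, verifies the absence of B-/W-configurations combinatorially, and only then invokes Thomassen's theorem to recover a straight-line drawing. Adopting that ``go topological, finish with Thomassen'' discipline would also simplify the bookkeeping you flag as the main obstacle in your gluing direction.
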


\textbf{Throughout this section, we assume that 
\(\mathcal{S} \subseteq \{\fullCt, \almostFullCt, \bowtieCt\}\) is fixed};
in the statements we do not quantify over \(\mathcal{S}\) anymore. 
As observed earlier, in each \srestricted drawing every pair of edges 
defining a crossing lie on a common \(4\)-cycle. 

\subsection{The usual, topological setting}
\label{sec:3con:usual}

The aim of this section is to prove~\Cref{thm:3con}. 
First, we make the easy observation that we can decide each \(2\)-connected 
component on its own and combine their \srestricted drawings by identifying 
the corresponding cut vertices.

\begin{observation}
\label{obs:2con}
	A graph is \srestricted if and only if all of its \(2\)-connected 
	components are \srestricted.
\end{observation}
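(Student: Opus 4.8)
The plan is to exploit the block structure of \(G\). Recall that the \emph{blocks} of \(G\) (its \(2\)-connected components, where single bridge-edges and isolated vertices also count as blocks) partition \(E(G)\), that two distinct blocks share at most one vertex — necessarily a cutvertex — and hence that every block \(B\) is an \emph{induced} subgraph \(G[V(B)]\) of \(G\): if \(u,v\in V(B)\) and \(uv\in E(G)\), then the block containing the edge \(uv\) shares both \(u\) and \(v\) with \(B\), so it coincides with \(B\). The only further ingredient I would use is the observation recalled above: in every \srestricted drawing the four endpoints of any crossing span a common \(4\)-cycle, which, being \(2\)-connected, lies entirely in one block; consequently the two crossing edges lie in that block too, so a crossing never ``straddles'' two blocks.

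For the forward direction I would take an \srestricted drawing \(\mcG\) of \(G\) and, for each block \(B\), form the drawing \(\mcG_B\) obtained by restricting \(\mcG\) to the vertices and edges of \(B\). Deleting edges introduces no new crossings, so \(\mcG_B\) is still \(1\)-planar with no two adjacent edges crossing. Every crossing of \(\mcG_B\) is a crossing of \(\mcG\) both of whose edges lie in \(B\); by the previous paragraph its four endpoints lie in \(V(B)\), and since \(B\) is induced the subgraphs of \(G\) and of \(B\) on those four vertices coincide. Because the crossing type depends only on that induced subgraph, the type is the same in \(\mcG_B\) as in \(\mcG\), hence still in \(\mathcal{S}\); so \(\mcG_B\) is an \srestricted drawing of \(B\).

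For the backward direction I would paste together \srestricted drawings \(\mcG_B\) of all blocks along the block--cutvertex tree (or forest, if \(G\) is disconnected). Rooting this forest and proceeding from the root outward: when a block \(B'\) is to be attached at a cutvertex \(v\) to the already-drawn part, shrink a copy of \(\mcG_{B'}\) into an arbitrarily small disk, apply a homeomorphism of the plane so that a face of \(\mcG_{B'}\) incident to \(v\) becomes its outer face, and insert that disk inside a face of the current drawing incident to \(v\), identifying the two copies of \(v\); several child blocks at the same \(v\) are inserted into pairwise disjoint small disks inside such a face. In the final drawing of \(G\), two edges of different blocks share at most a common cutvertex (an endpoint of both) and never cross, so every crossing lies entirely inside some \(\mcG_B\); by the induced-subgraph argument its type is in \(\mathcal{S}\), and the drawing is \(1\)-planar with no crossing adjacent edges. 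Hence \(G\) is \srestricted. The degenerate cases — \(G\) disconnected, or a block that is a single edge or an isolated vertex — are immediate.

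The one step I would write out with care is the gluing: one must verify that inside a face of the partially built drawing incident to a cutvertex \(v\) there is always enough room to insert the (arbitrarily small) drawings of \emph{all} child blocks at \(v\) without creating crossings. This is a routine fact about plane drawings — a face incident to \(v\) contains an open wedge at \(v\) that can be split into as many pairwise disjoint sub-wedges as needed — and it is the only genuine obstacle in the argument.
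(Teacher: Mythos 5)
Your proof is correct and follows exactly the route the paper intends: the paper states this as an unproved observation, justified only by the remark that one can decide each block separately and glue the drawings at cutvertices, and it uses the same key fact (crossing edges lie on a common \(4\)-cycle, hence in a single block) elsewhere, e.g.\ in the proof of \Cref{lem:geom-2con}. Your write-up simply fills in the details (blocks are induced subgraphs, so crossing types are preserved under restriction; gluing along the BC-tree) in the standard way.
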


For \(3\)-connected components we want to proceed similarly and combine 
\srestricted drawings of \(3\)-connected components by identifying their 
corresponding separators. To this end, we first show that we may restrict 
our attention to certain drawings.

\begin{lemma}
\label{lem:3con}
	A graph \(G\) is \srestricted if 
	and only if there is an \srestricted drawing of \(G\) where no edge
	incident to degree-two vertices is crossed.
\end{lemma}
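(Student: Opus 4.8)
The plan is as follows. The ``if'' direction is immediate: a drawing as described is in particular an \srestricted drawing, so $G$ is \srestricted. For the ``only if'' direction I would start from an \srestricted drawing \mcG\ of $G$ with the fewest crossings and show that in \mcG\ no edge incident to a degree-two vertex is crossed. Suppose not, and let $w$ be a degree-two vertex, $N(w)=\{a,b\}$, such that the edge $wa$ is crossed by some edge $e$.

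The first step is to pin down the local structure using the restriction on $\mathcal{S}$. Since \mcG\ is \srestricted with $\mathcal{S}\subseteq\{\fullCt,\almostFullCt,\bowtieCt\}$, the four endpoints of the crossing pair $\{wa,e\}$ lie on a common $4$-cycle, and since crossing edges are non-adjacent, $wa$ and $e$ are \emph{opposite} edges of that cycle: for $\bowtieCt$ the induced graph on the four endpoints \emph{is} a $4$-cycle and hence contains the two disjoint crossing edges as opposite edges; for $\almostFullCt$ and $\fullCt$ it is $K_4$ minus at most one edge, which still contains a $4$-cycle through the two crossing edges because deleting a single edge destroys at most one of the two ``connecting'' pairs that complete $wa$ and $e$ to a $4$-cycle. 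Write this cycle as $w$--$a$--$p$--$q$--$w$ with $e=pq$. Then $wq\in E(G)$, so $q\in N(w)\setminus\{a\}=\{b\}$; that is, $e=pb$, and in particular the partner edge of the crossing is incident to $w$'s other neighbour $b$ (and $ap\in E(G)$, though this is not needed below).

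The main step is a local rerouting that removes this crossing and creates no new one. Let $x=wa\cap pb$ be the crossing point. As the crossing is transversal, a small disk around $x$ is cut by the four edge-segments leaving $x$ (towards $a,w,p,b$) into four sectors, and the segments towards $a$ and towards $b$ are consecutive in the rotation at $x$: the $a$-segment lies opposite the $w$-segment and the $b$-segment opposite the $p$-segment, which forces $\{a\text{-seg},b\text{-seg}\}$ to be a consecutive pair. I would then move $w$ into the sector $\sigma$ bounded by the $a$- and $b$-segments, redraw $wa$ so that it leaves $w$ inside $\sigma$ and then closely follows, on the $\sigma$-side, the sub-arc of the old $wa$ from $x$ to $a$, and redraw $wb$ so that it leaves $w$ inside $\sigma$ and then closely follows, on the $\sigma$-side, the sub-arc of $pb$ from $x$ to $b$. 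Both sub-arcs are uncrossed (each of $wa$ and $pb$ is crossed only at $x$) and contain no vertex, and the parts inside $\sigma$ can be made to avoid everything by taking the disk around $x$ small and the new curves close enough; hence the new $wa$ and $wb$ cross nothing, and they do not cross each other. The resulting drawing is still a drawing of $G$; its set of crossings is a subset of that of \mcG\ (we destroyed $x$, and also the crossing on $wb$ if $wb$ happened to be crossed, and created none), and the adjacencies of $G$ are untouched, so it is still $1$-planar and still \srestricted, with strictly fewer crossings than \mcG\ --- contradicting the choice of \mcG.

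I expect the only conceptually delicate point to be the first step, where the hypothesis $\mathcal{S}\subseteq\{\fullCt,\almostFullCt,\bowtieCt\}$ is used essentially: it is precisely the forced $4$-cycle on the crossing endpoints that guarantees the partner edge $e$ is incident to $b$, so that there is an edge-segment at $x$ along which $wb$ can be rerouted. The remaining work --- choosing the sector and making ``closely follows'' precise so that no new crossings or self-intersections are introduced, and noting that iterating over degree-two vertices is subsumed by the minimality argument --- is routine topological bookkeeping.
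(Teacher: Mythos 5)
Your proposal is correct and follows essentially the same route as the paper: take a crossing-minimal \srestricted drawing, use the forced $4$-cycle on the endpoints of a crossing to conclude that the edge crossing $wa$ must be incident to the other neighbour $b$ of the degree-two vertex, and then reroute both edges at $w$ through the crossing point to obtain a drawing with strictly fewer crossings. The only difference is presentational: the paper invokes the standing observation that endpoints of crossing edges are cofacial and redraws $wa,wb$ in that common face, whereas you carry out the same rerouting explicitly via the sector at the crossing point.
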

\begin{proof}
	The backward implication is clear. For the forward implication,
	consider an \srestricted drawing \mcG of \(G\) with the minimum
	number of crossings. Assume, for the sake of reaching a contradiction,
	that in \mcG there is a degree-two vertex \(t\) and an edge 
	incident to \(t\) that is crossed. Let \(s\) and \(s'\) be the 
	two neighbors of \(t\), labeled in such a way that in \mcG the 
	edge \(st\) is crossed. 
	As crossing edges must be on a 4-cycle and \(t\) has degree two, 
	the edge crossing \(st\) must be incident to \(s'\).
	Therefore, in \mcG the vertices \(s\) and \(s'\) are incident 
	to edges that cross each other, and by 1-planarity the vertices
	\(s\) and \(s'\) are cofacial in \mcG. Thus, \(st\) and \(ts'\) 
	can be redrawn without crossings. The resulting drawing is also
	\srestricted and thus \mcG could not be one with the minimum 
	number of crossings.
\end{proof}

We next describe how to combine drawings of parts of 
$2$-connected graphs. See~\cite[Lemma 1]{Brandenburg15} 
for a similar statement in a more restricted setting.

\begin{lemma}
\label{lem:3con:alt}
	Let \(G\) be a 2-connected graph, \(\{s,s'\}\) a 2-separator 
	of \(G\), and \((X_1,X_2)\) a separation induced by \(\{s,s'\}\).
	Then \(G\) is \srestricted if and only if both 
	\(G_1\coloneqq G[X_1]+\{st_1,t_1s'\}\) and 
	\(G_2\coloneqq G[X_2]+\{st_2,t_2s'\}\) are \srestricted, 
	where \(t_1\) and \(t_2\) are new vertices.
	\textup(Note that if \(ss'\) is an edge of \(G\), then it is also
	an edge of \(G_1\) and \(G_2\).\textup)
\end{lemma}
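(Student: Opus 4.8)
The plan is to prove both directions by surgery on drawings along the separator $\{s,s'\}$, using \Cref{lem:3con} and the observation (recalled in the preliminaries) that the two endpoints of two crossing edges taken from different edges are cofacial.

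For the backward implication I would start from \srestricted drawings $\mcG_1$ of $G_1$ and $\mcG_2$ of $G_2$ and, since $t_1,t_2$ have degree two, invoke \Cref{lem:3con} to assume that the two-edge paths $s$–$t_1$–$s'$ and $s$–$t_2$–$s'$ are uncrossed. Deleting $t_1$ and its two edges from $\mcG_1$ merges the faces on either side of that uncrossed path into a single face $F_1$ of the resulting drawing of $G[X_1]$, and $s,s'$ lie on its boundary. I would then take $\mcG_2$ restricted to $G[X_2]$, delete $s$, $s'$ and (if present) the shared edge $ss'$, shrink the remainder into a small disk placed inside $F_1$, and reconnect the loose ends of the inserted edges to the vertices $s,s'$ of the drawing of $G[X_1]$ by arcs routed inside $F_1$; this is possible because $F_1$ is a disk carrying $s$ and $s'$ on its boundary, so the reconnection is just the joining of two pairs of points across an annulus. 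I expect this step to be routine: no new crossing is created, every crossing of the result lies entirely in the drawing of $G[X_1]$ or of $G[X_2]$, and the four endpoints of any such crossing induce the same subgraph in $G$, $G_1$ and $G_2$, so the crossing type is unchanged; hence the glued drawing of $G=G[X_1]\cup G[X_2]$ is \srestricted.

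For the forward implication, assume an \srestricted drawing $\mcG$ of $G$; by symmetry it suffices to treat $G_1$. First I would restrict $\mcG$ to $G[X_1]$, obtaining a $1$-planar drawing $\mcG_1$ that is again \srestricted, since every crossing survives with the same four endpoints, and those endpoints lie in $X_1$ and hence induce the same subgraph in $G[X_1]$ as in $G$. Everything then reduces to adding an uncrossed path $s$–$t_1$–$s'$, i.e.\ to showing that $s$ and $s'$ lie on a common face of $\mcG_1$; once this is known, placing $t_1$ in that face and drawing its two edges inside the face gives an \srestricted drawing of $G_1$ with the same crossings as $\mcG_1$. To get cofaciality I would pick an $s$–$s'$ path $Q$ in $G[X_2]$ and argue that its drawing is a witness: the internal vertices of $Q$ lie in $X_2\setminus\{s,s'\}$ and hence are not vertices of $G[X_1]$, and I would choose $Q$ so that it avoids the edge $ss'$ — this is possible because $2$-connectivity of $G$ together with $X_1\setminus\{s,s'\},X_2\setminus\{s,s'\}\neq\emptyset$ forces $G[X_2]$ to be connected and $ss'$ not to be a bridge of it (otherwise $s$ or $s'$ would be a cut vertex of $G$). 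If no edge of $Q$ crosses an edge of $G[X_1]$, then the drawing of $Q$ meets $\mcG_1$ only in its endpoints, so that drawing minus those two points is a connected set disjoint from $\mcG_1$, hence contained in a single face of $\mcG_1$ whose boundary contains $s$ and $s'$. If some edge $e_Q$ of $Q$ does cross an edge $f$ of $G[X_1]$, I would use that $\mathcal{S}\subseteq\{\fullCt,\almostFullCt,\bowtieCt\}$: the four crossing endpoints span a $4$-cycle, so $E(G)$ contains a perfect matching between the ends of $e_Q$ and the ends of $f$; since there is no edge between $X_1\setminus\{s,s'\}$ and $X_2\setminus\{s,s'\}$, such a matching is only possible if either $f=ss'$ (in which case both ends of $e_Q$ lie in $X_2\setminus\{s,s'\}$, so $e_Q$ is absent from $\mcG_1$ and $ss'$ is uncrossed there, immediately yielding a common face of $\mcG_1$), or $s$ is an endpoint of one of $e_Q,f$ and $s'$ of the other (in which case the cofaciality observation gives that $s,s'$ are cofacial in $\mcG$, hence in $\mcG_1$, since deleting vertices and edges only merges faces). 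In all cases $s$ and $s'$ are cofacial in $\mcG_1$, completing the argument.

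The step I expect to be the main obstacle is the cofaciality claim in the forward direction. The difficulty is that the two sides of the separation need not occupy disjoint regions of $\mcG$: an edge of $G[X_1]$ may cross an edge of $G[X_2]$, and one must show this can happen only in a ``bowtie across the separator'' pinned at $s$ and $s'$, or via the edge $ss'$ in a way that vanishes upon restriction to $G[X_1]$. This is exactly where the hypothesis on crossing types is essential, and it is also the reason the path $Q$ must be chosen to avoid the edge $ss'$ — that choice is what rules out the otherwise awkward case in which $ss'$ is crossed by an edge internal to $G[X_1]$.
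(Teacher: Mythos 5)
Your proof is correct and follows essentially the same route as the paper's: the backward direction invokes \Cref{lem:3con} to uncross the paths through \(t_1,t_2\) and then glues the two drawings inside a common face, while the forward direction establishes cofaciality of \(s,s'\) in each restriction via a connecting path through the other side of the separation together with the \(4\)-cycle argument (plus the observation that endpoints of crossing edges are cofacial) for crossings that straddle the separator. The only differences are cosmetic: you route a concrete path \(Q\) avoiding the edge \(ss'\) and treat a crossing with \(ss'\) as a separate subcase, whereas the paper uses a curve inside the drawing of \(G[X_2]-ss'\) and reroutes it along \(ss'\) when necessary.
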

\begin{proof}
    For the forward implication, consider an \srestricted drawing \mcG 
	of \(G\) and its restrictions to \(G[X_1]=G_1-t_1\) and 
	\(G[X_2]=G_2-t_2\), denoted \mcG[_1] and \mcG[_2], respectively.
	Since \(G\) is 2-connected, the graph \(G[X_2]-ss'\)
	is connected. (The edge \(ss'\) is removed only if it exists.)
	This means that there is a curve \(\gamma_2\)
	from \(s\) to \(s'\) that is contained in \(\mcG[_2]-ss'\). 
	
    If there is a crossing in \mcG between an edge from \(G\setminus G_1\) 
	and an edge from \(G\setminus G_2\), that crossing must be between 
	edges of a 4-cycle \(C\) that contains \(s\) and \(s'\) because 
	\((X_1,X_2)\) is a separation induced by \(\{s,s'\}\). Note that
	the edge \(ss'\), if it exists in \(G\), is not an edge of 
	\(G\setminus G_1\) or \(G\setminus G_2\). Therefore the cycle
	\(C\) is of the form \(sv_1s'v_2s\) with vertices \(v_1\in X_1\)
	and \(v_2\in X_2\). In this case, \(s\) and \(s'\) are incident
	to edges that cross in \mcG, which means that they are cofacial in \mcG.
	Thus, \(s\) and \(s'\) also cofacial in \(\mcG[_1]\).
	If there is no crossing in \mcG between an edge from \(G\setminus G_1\) 
	and an edge from \(G\setminus G_2\), then the curve \(\gamma_2\)
	witnesses that \(s\) and \(s'\) are cofacial in \mcG[_1]; it could
	happen that \(\gamma_2\) crosses the drawing of the edge \(ss'\) in \mcG[_1],
	if \(ss'\) exists, but in such a case we can follow \(\gamma_2\)
	from \(s\) until the crossing with \(ss'\) and then follow closely
	the drawing of \(ss'\) until \(s'\). We conclude that, 
	independently of whether there is a crossing between an edge 
	of \(G\setminus G_1\) and an edge from \(G\setminus G_2\),
	the vertices \(s\) and \(s'\) are cofacial in \mcG[_1].
	We can then add \(st_1,t_1s'\) to \mcG[_1] without crossings.
	The same holds for \mcG[_2] because of symmetry.
	This finishes the proof of the forward implication.
	
    For the backward implication, we use that by \Cref{lem:3con} there 
	are \srestricted drawings \mcG[_1] and \mcG[_2] of \(G_1\) and \(G_2\),
	respectively, where \(st_1\) and \(t_1s'\) are not crossed in \mcG[_1]
	and \(st_2\) and \(t_2s'\) are not crossed in \mcG[_2]. For \(i\in\{1,2\}\),
	the removal of the edges \(st_i\) and \(t_is'\) in the drawing \mcG[_i] 
	does not affect the \srestricted property of the drawing because \(t_i\) 
	has degree two and thus \(st_1\) and \(t_1s'\) cannot be part of any 4-cycle defined by a crossing.
	We can then combine these two drawings into a single one, for example 
	placing a drawing equivalent to \(\mcG[_2]-t_2\) inside the 
	(closure of the) face of \(\mcG[_1]-t_1\) that contained 
	the non-crossed 3-path \(\mcG[_1](st_1)\cup \mcG[_1](t_1s')\).
\end{proof}

We are now ready to prove \Cref{thm:3con}.

\begin{proof}[Proof of \Cref{thm:3con}]
	Because of \Cref{obs:2con} we can restrict our attention to each
	\(2\)-connected component separately. Thus, we can just assume
	that \(G\) is \(2\)-connected.
	
	Consider the SPR-tree \(T\) of \(G\). Each virtual edge \(ss'\)
	(in some skeleton) of the tree defines a \(2\)-separator \(\{s,s'\}\)
	of \(G\). Because of \Cref{lem:3con:alt} we can split \(G\) into 
	two subinstances, where the edges \(st_{ss'}\) and \(t_{ss'}s'\) are added 
	to each of the subinstances. (Each time with a new vertex \(t_{s,s'}\).) 
	We repeat this procedure for all (paired) virtual edges, and we obtain
	that \(G\) is \srestricted if and only if the following holds
	for each node \(x\) of \(T\): the subgraph of \(G\) induced by
	the vertices of \sk{x} and with the virtual edges inserted
	and subdivided once, is \srestricted. We note that for each node 
	\(x\) of \(T\), the graph we are considering is actually the skeleton+
	of node \(x\), \skPlus{x}. 
	For each P- and S- nodes, it is obvious that its skeleton+ is \srestricted, 
	as it is planar. Therefore, we obtain the following intermediary statement: 
	A graph is \srestricted if and only if the skeleton+ of each 
	of its 3-connected components (R-nodes) is \srestricted.	
	
    Let us finish the proof with a warning about possible further simplifications.
	Consider an R-node \(x\) of the SPR-tree and a virtual edge \(uv\)
    of its skeleton \sk{x}. One may be tempted to think that, if
    the edge \(uv\) is in \(E(G)\), whether we add it to \skPlus{x} 
	or not does not affect to whether \skPlus{x} is \srestricted.
	After all, in \skPlus{x} we have a 3-path \(ut_{uv}v\), 
	where \(t_{uv}\) is a new degree-two vertex, by \Cref{lem:3con}
	we may assume that in a potential \srestricted drawing  
	of \(\skPlus{x}-uv\) the path \(ut_{uv}v\) is not crossed,
	and thus the edge \(uv\) can be	inserted into the drawing 
	following closely the drawing of the non-crossed path \(ut_{uv}v\). 
	This is true as far as the drawing of \(uv\) goes. 
	However, whether the edge \(uv\) is in \skPlus{x} 
	affects whether some other edges, say \(uu'\) and \(vv'\), have
	a crossing of type in \(\mathcal{S}\) in the drawing, 
	and therefore we cannot remove or insert such an edge \(uv\) 
	from \skPlus{x} without affecting the property of being \srestricted.	
    See e.g.\ the right side  of \cref{fig:sk+}, where the presence or absense of \(v_2 v_3\) and \(v_3 v_4\) together determine whether the shown crossing is of type \arrowCt, \almostFullCt, or \fullCt.
\end{proof}

\subsection{The geometric setting}
\label{sec:3con:geom}

The aim of this section is to prove~\Cref{thm:3con:geom}. 
To ultimately arrive at internally 3-connected instances, 
we again use a sequence of lemmas. The statements are more
cumbersome because we have to keep track of the vertex that
has to be on the outer face, when $O$ is nonempty.

\begin{lemma}
\label{lem:geom-2con}
    Let \(G\) be a connected graph, \(O \subset V(G)\) with \(|O| \leq 1\), 
	\(s\) a cutvertex of \(G\), \((X_1,X_2)\) a separation induced by \(s\),
	\(G_1\coloneqq G[X_1]\) and \(G_2\coloneqq G[X_2]\).
    Then \(G\) is \(O\)-\geosrestricted if and only if at least one of the 
	following holds:
	\begin{enumerate}[(i)]
	\item \label{lem:geom-2con:option1}
		\(O\subset X_1\), the graph \(G_1\) is \(O\)-\geosrestricted, 
		and the graph \(G_2\) is \(\{s\}\)-\geosrestricted; or
	\item \label{lem:geom-2con:option2}
		\(O\subset X_2\), the graph \(G_1\) 
		is \(\{s\}\)-\geosrestricted, and the graph \(G_2\) is 
		\(O\)-\geosrestricted.
	\end{enumerate}
\end{lemma}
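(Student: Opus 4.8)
The plan is to exploit two things: first, since \(\mathcal S\subseteq\{\fullCt,\almostFullCt,\bowtieCt\}\), every crossing pair of edges lies on a common \(4\)-cycle, which forces an \srestricted drawing to split cleanly at the cutvertex \(s\); second, by Thomassen's characterization the geometric realizability of a \(1\)-planar drawing is governed only by the absence of B- and W-configurations, a condition that transfers across the split.

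First I would record the structural backbone, used in both directions. In an \srestricted drawing of \(G\), no edge of \(G_1=G[X_1]\) crosses an edge of \(G_2=G[X_2]\): the four endpoints of such a crossing would span a \(4\)-cycle containing a vertex of \(X_1\setminus\{s\}\) and a vertex of \(X_2\setminus\{s\}\), forcing the cycle to pass through \(s\) twice. Hence restricting an \srestricted drawing \mcG of \(G\) to \(G_1\) or to \(G_2\) again gives an \srestricted drawing \(\mcG[_1]\) resp.\ \(\mcG[_2]\): the crossings and their types are unchanged, since each crossing keeps its four endpoints inside one side \(X_i\) and \(G_i=G[X_i]\) is induced. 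Moreover, restriction only enlarges faces, so a vertex on the outer face of \mcG stays on the outer face of \(\mcG[_i]\); and every B- or W-configuration of \mcG lies entirely inside one part, because its two crossing edges cannot cross across the cut, so all vertices and edges of the configuration lie on one side.

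For the backward direction, suppose option~(i) holds (option~(ii) is symmetric): fix geometric \srestricted drawings \(\mcG[_1]\) of \(G_1\) with \(O\) on the outer face and \(\mcG[_2]\) of \(G_2\) with \(s\) on the outer face. I would take a face \(h\) of \(\mcG[_1]\) incident to \(s\) — a bounded one if \(s\) has any, and otherwise the outer face — and place into \(h\), near \(s\), a homeomorphic copy of \(\mcG[_2]\), identifying the two copies of \(s\) and arranging that the outer face of \(\mcG[_2]\) opens into \(h\) and that the vertex of \(O\) (if any) stays on the outer face of the resulting drawing \mcG. Then \mcG is an \srestricted \(1\)-planar drawing of \(G\) with \(O\) on the outer face, and it has no B- or W-configuration: such a configuration lies inside one part \(\mcG[_i]\) by the backbone, and for every closed curve \(\gamma\) of \(\mcG[_i]\) the outer face of \mcG lies on the same side of \(\gamma\) as the outer face of \(\mcG[_i]\) (the other part is confined to a single face of \(\mcG[_i]\)), so a vertex lies inside \(\gamma\) in \mcG iff it does in \(\mcG[_i]\); thus the configuration would already occur in the geometric \(\mcG[_i]\), a contradiction. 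By Thomassen's characterization \mcG is equivalent to a geometric drawing, and since the equivalence is a homeomorphism of the plane it keeps \(O\) on the outer face; so \(G\) is \(O\)-\geosrestricted.

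For the forward direction, let \mcG be a geometric \srestricted drawing of \(G\) with the vertex of \(O\) (if any) on the outer face \(F\), and restrict it to \(\mcG[_1],\mcG[_2]\). If \(O=\{s\}\), then \(s\in\partial F\) is on the outer face of both restrictions and options~(i) and~(ii) both hold. If \(O=\{o\}\) with \(o\in X_1\setminus\{s\}\) (symmetrically for \(X_2\)), then \(o\) lies in a component \(C\) of \(G[X_1\setminus\{s\}]\); since \(C\) is adjacent to \(s\) and its edges do not cross \(G_2\)-edges, \(C\) together with its connecting edges to \(s\) is drawn inside a single face \(f\) of \(\mcG[_2]\) with \(s\in\partial f\); as \(o\in f\) and \(o\in\partial F\), the whole outer face \(F\) of \mcG lies in \(f\), whence \(f\) is the outer face of \(\mcG[_2]\) and \(s\) lies on it — so \(\mcG[_1]\) witnesses that \(G_1\) is \(O\)-\geosrestricted and \(\mcG[_2]\) witnesses that \(G_2\) is \(\{s\}\)-\geosrestricted, giving option~(i). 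For \(O=\emptyset\) the same idea applies to whichever part contains the outer face of the other part near \(s\); the remaining, delicate subcase is when \(s\) is buried in \mcG and the outer faces of \mcG, \(\mcG[_1]\) and \(\mcG[_2]\) all coincide, where one must re-choose the outer face of one restriction so that \(s\) surfaces while verifying — again using that the two parts occupy disjoint territory in \mcG, so that no closed curve of a part has its inside and outside swapped — that no B- or W-configuration is created before appealing to Thomassen's characterization. I expect this last subcase, i.e.\ exposing the cutvertex on the outer face without introducing a Thomassen obstruction, to be the main obstacle; the remaining arguments are routine.
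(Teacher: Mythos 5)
Your overall strategy coincides with the paper's: the \(4\)-cycle forced by \(\mathcal{S}\subseteq\{\fullCt,\almostFullCt,\bowtieCt\}\) rules out crossings between \(G_1\)- and \(G_2\)-edges; in the forward direction the restrictions of the given geometric drawing already witness the two subinstances; and in the backward direction one glues a homeomorphic copy of \mcG[_2] into a small region of \mcG[_1] at \(s\) (the paper uses a small triangle with apex \(s\)) and invokes Thomassen's characterization after checking that no B- or W-configuration arises. Your backward direction and your forward argument for \(O\neq\emptyset\) are sound and essentially identical to the paper's.

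The one place you stop short is the forward direction with \(O=\emptyset\), where you flag a ``delicate subcase'' (\(s\) buried, all outer faces coinciding) as the main remaining obstacle. That subcase is vacuous, and the argument you already used for \(O=\{o\}\) closes it verbatim: the boundary of the outer face \(F\) of \mcG contains some point \(q\neq s\) of the drawing, and \(q\) belongs to, say, \(\mcG[_1]\). The piece of \(\mcG[_1]\) containing \(q\) and attached to \(s\) avoids \(\mcG[_2]\), hence lies in a single open face \(f\) of \mcG[_2] with \(s\) on its boundary; since \(q\) lies both in \(f\) and in the closure of \(F\subseteq f_{2,\mathrm{outer}}\), and distinct faces of \mcG[_2] have disjoint closures off the drawing, \(f\) must be the outer face of \mcG[_2]. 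So \(s\) is on the outer face of \mcG[_2] and option~(i) holds with \(O=\emptyset\subset X_1\); symmetrically if \(q\in\mcG[_2]\). In particular \(s\) can never fail to be on the outer face of at least one restriction, and no re-choice of outer face is ever needed---which is fortunate, because re-choosing the outer face of a geometric 1-planar drawing can itself create B- or W-configurations, so the repair you sketch for this phantom subcase would be the one genuinely delicate step had it been necessary.
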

\begin{proof}
    For the forward implication, consider an \(O\)-\geosrestricted drawing \mcG 
	of \(G\), let \mcG[_1] be its restriction to \(G_1\) and let 
	\mcG[_2] be its restriction to \(G_2\).	
	Note that there cannot be any crossing between an edge from \(G_1\) and 
	an edge from \(G_2\) because crossings are contained in a 4-cycle but \(s\) is
	a cutvertex of \(G\). Therefore \mcG[_1] and \mcG[_2] are \geosrestricted drawings.
	Moreover, the drawing \mcG[_1] has to be inside	a face of \mcG[_2] 
	that has \(s\) on its boundary. 
	
	Because of the symmetry between \(X_1\) and \(X_2\),
	we may assume without loss of generality that either 
	\(O\neq \emptyset\) and \(O\subset X_1\) or that
	\(O=\emptyset\) and a part of \mcG[_1] contributes to the outer face of \mcG.
	In either case, we have the property that \mcG[_1] contributes some part 
	to the outer face of \mcG.
	Clearly, in both cases \mcG[_1] witnesses that \(G_1\) is 
	\(O\)-\geosrestricted. Moreover, \mcG[_2] witnesses that 
	\(G_2\) is \(\{s\}\)-\geosrestricted; indeed, if in the 
	drawing \mcG[_2] the vertex \(s\) is not on the outer face, 
	then \mcG[_1] would have to be drawn inside a bounded face of \mcG[_2], 
	contradicting that \mcG[_1] contributes to the outer face of \mcG.
	This finishes the proof of the forward implication.

	For the backward implication, let us assume without loss of generality
	that \(O \subset X_1\), that \mcG[_1] is an \(O\)-\geosrestricted drawing 
	of \(G_1\), and that \mcG[_2] is a \(\{s\}\)-\geosrestricted drawing 
	of \(G_2\). Obviously, \mcG[_1] and \mcG[_2] do not have any B- or 
	W-configuration.
	Let \(\Delta_s\) be a geometric triangle with vertex \(s\) such that 
	\(\Delta_s \setminus\{s\}\) is contained inside a face of \mcG[_1]; 
	such a triangle exists because \mcG[_1] is a geometric drawing.
	
	We treat the drawing \mcG[_2] as a non-geometric drawing, thus 
	allowing curves for the edges, deform it continuously to place it 
	inside the triangle \(\Delta_s\), identifying \(s\) in both drawings. 
	Let \mcG be the resulting (possibly non-geometric) drawing 
	of \(G\). If \(O\) has a vertex, then that vertex is on the 
	outer face of \mcG. (Here it is relevant that \mcG[_2] was placed 
	in the triangle \(\Delta_s\) and thus does not enclose \mcG[_1].)
	Since we do not introduce any crossings between \mcG[_1]
	and the deformed \mcG[_2], the drawing \mcG is \srestricted. 
	We note that \mcG has no B- or W-configuration because we did not introduce
	any additional crossings and \mcG[_1] and the deformed \mcG[_2] have
	a single point in common, namely the vertex \(s\).
	Therefore \mcG is an \srestricted drawing of \(G\) without 
	B- or W-configuration, and the vertex
	of \(O\), if there is one, is on the outer face of \mcG.
	Then there is a drawing of \(G\) equivalent to \mcG that is
	\(O\)-\geosrestricted.
\end{proof}

\Cref{lem:geom-2con} is written a bit in a compact way that showcases 
the symmetry. For our forthcoming algorithm, the following non-symmetric
statement will be more convenient.

\begin{corollary}
\label{cor:geom-2con-for-algo}
    With the assumptions and the notation of \Cref{lem:geom-2con}, the
    following holds.
    \begin{enumerate}[(i)]
	\item If \(G_1\) is \(\{s\}\)-\geosrestricted and \(O\subset X_2\), possibly 
		with \(O=\emptyset\), then \(G\) is \(O\)-\geosrestricted if and only
		if \(G_2\) is \(O\)-\geosrestricted.
	\item If \(G_1\) is not \(\{s\}\)-\geosrestricted or \(O\not\subset X_2\),
		then \(G\) is \(O\)-\geosrestricted if and only if
		\(O\subset X_1\),
		the graph \(G_1\) is \(O\)-\geosrestricted, and
		the graph \(G_2\) is \(\{s\}\)-\geosrestricted.
	\end{enumerate}
\end{corollary}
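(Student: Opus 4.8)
The plan is to obtain both items as immediate consequences of \Cref{lem:geom-2con} by a short case distinction, relying on a single trivial observation: since \(s\in X_1\cap X_2\), the conditions \(O\subset X_1\) and \(O\subset X_2\) can hold simultaneously only when \(O\subseteq\{s\}\); moreover, any \(\{s\}\)-\geosrestricted drawing is in particular a \(\emptyset\)-\geosrestricted drawing (the outer-face constraint is vacuous when \(O=\emptyset\)), so being \(\{s\}\)-\geosrestricted implies being \(O\)-\geosrestricted whenever \(O\subseteq\{s\}\).

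For item~(i) I would assume that \(G_1\) is \(\{s\}\)-\geosrestricted and \(O\subset X_2\). The backward direction is immediate: if \(G_2\) is \(O\)-\geosrestricted, then together with these hypotheses the second option of \Cref{lem:geom-2con} is met, so \(G\) is \(O\)-\geosrestricted. For the forward direction, if \(G\) is \(O\)-\geosrestricted then \Cref{lem:geom-2con} yields one of its two options. The second option already asserts that \(G_2\) is \(O\)-\geosrestricted, so we are done. If instead the first option holds, then \(O\subset X_1\); combined with \(O\subset X_2\) this forces \(O\subseteq\{s\}\), and since the first option also gives that \(G_2\) is \(\{s\}\)-\geosrestricted, the observation above lets us conclude that \(G_2\) is \(O\)-\geosrestricted.

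For item~(ii) I would assume that \(G_1\) is not \(\{s\}\)-\geosrestricted or that \(O\not\subset X_2\). The backward direction is again immediate, now from the first option of \Cref{lem:geom-2con}: if \(O\subset X_1\), \(G_1\) is \(O\)-\geosrestricted, and \(G_2\) is \(\{s\}\)-\geosrestricted, then \(G\) is \(O\)-\geosrestricted. For the forward direction, suppose \(G\) is \(O\)-\geosrestricted; \Cref{lem:geom-2con} gives one of its two options, but the second option would require simultaneously that \(G_1\) is \(\{s\}\)-\geosrestricted and that \(O\subset X_2\), contradicting the hypothesis of item~(ii). Hence the first option holds, which is exactly the claimed characterization.

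I do not expect any genuine obstacle: the corollary is essentially a bookkeeping reformulation of \Cref{lem:geom-2con} tailored to how the eventual algorithm will recurse along the block--cutvertex structure. The only point needing a moment of care is the branch in the forward direction of item~(i) where the first option of \Cref{lem:geom-2con} fires rather than the second, and this is dispatched by observing that \(O\) is then pinned down to a subset of \(\{s\}\), so the two relevant notions of restricted drawing coincide for \(G_2\).
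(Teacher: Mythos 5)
Your proof is correct and takes essentially the same route as the paper: a purely logical case analysis on the two options of \Cref{lem:geom-2con}, hinging on the same observation that \(O\subset X_1\cap X_2=\{s\}\) forces \(O\subseteq\{s\}\) and that \(\{s\}\)-restrictedness then implies \(O\)-restrictedness. The paper organizes this as a truth-value evaluation over three sub-cases while you argue each biconditional directly, but the content is identical.
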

\begin{proof}
	The statement follows logically from \Cref{lem:geom-2con}, using
	a case analysis and without any geometric insights.

	Consider first the case where \(G_1\) is \(\{s\}\)-\geosrestricted 
	and \(O\subseteq \{s\}\), possibly with \(O=\emptyset\).
	In this case, \(G_1\) is also \(O\)-\geosrestricted. This means that
	both options in \Cref{lem:geom-2con} become
	\begin{enumerate}[(i)]
	\item \true, \true, and the graph \(G_2\) is \(\{s\}\)-\geosrestricted; or
	\item \true, \true, and the graph \(G_2\) is \(O\)-\geosrestricted.
	\end{enumerate}
	Whenever option (\ref{lem:geom-2con:option1}) occurs, 
	also option (\ref{lem:geom-2con:option2}) occurs.
	Therefore, in this case, \(G\) is \(O\)-\geosrestricted if and only if
	the graph \(G_2\) is \(O\)-\geosrestricted.
	
	Consider now the case where \(G_1\) is \(\{s\}\)-\geosrestricted, 
	\(O\neq\emptyset\), and \(O \subset X_2\setminus \{s\}\).
	In this case, \(G_1\) is not \(O\)-\geosrestricted. This means that
	both options in \Cref{lem:geom-2con} become
	\begin{enumerate}[(i)]
	\item \false, \false, and the graph \(G_2\) is \(\{s\}\)-\geosrestricted; or
	\item \true, \true, and the graph \(G_2\) is \(O\)-\geosrestricted.
	\end{enumerate}	
	Therefore, in this case, \(G\) is \(O\)-\geosrestricted if and only if
	the graph \(G_2\) is \(O\)-\geosrestricted.
	The two cases considered so far cover the first item to be proved.
	
	Consider now the remaining case, covered in the second item.
	Thus, we assume that \(G_1\) is not \(\{s\}\)-\geosrestricted
	or \(O\not\subset X_2\). 
	This means that option (\ref{lem:geom-2con:option2}) in 
	\Cref{lem:geom-2con} cannot occur,
	and therefore \(G\) is \(O\)-\geosrestricted if and only if
	the conditions of option (\ref{lem:geom-2con:option1}) in 
	\Cref{lem:geom-2con} occur:
	\(O\subset X_1\),
	the graph \(G_1\) is \(O\)-\geosrestricted, and
	the graph \(G_2\) is \(\{s\}\)-\geosrestricted.	
\end{proof}

\begin{lemma}
\label{lem:3con-geom}
	Let \(t\) be a vertex of degree two in a graph \(G\)
	and let \(O \subset V(G)\) with \(|O| \leq 1\).
	Then \(G\) is \(O\)-\geosrestricted 
	if and only if there is a \(O\)-\geosrestricted drawing 
	of \(G\) where the two edges incident to \(t\) are not crossed.
\end{lemma}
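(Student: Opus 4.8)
The backward implication is immediate, so I focus on the forward one, following the idea of \Cref{lem:3con} but carrying everything out with straight-line edges and keeping track of the outer face. First I would take a geometric \srestricted drawing \mcG of $G$ witnessing that $G$ is $O$-\geosrestricted, that is, with the vertex of $O$ (if any) on the outer face; if neither edge at $t$ is crossed we are done, so let $s,s'$ be the neighbours of $t$, named so that $st$ is crossed, by an edge $e$. Since crossing edges lie on a common $4$-cycle (recall $\mathcal S\subseteq\{\fullCt,\almostFullCt,\bowtieCt\}$) and $\deg(t)=2$, this cycle must be $s\,t\,s'\,x$ for some vertex $x$, so $e=s'x$. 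As \mcG is geometric, $st$ and $s'x$ are straight segments $\overline{st}$, $\overline{s'x}$ meeting in a single point $c$ in the relative interior of each, and by $1$-planarity the subsegments $\overline{sc}$ and $\overline{cs'}$ are crossed by no edge and contain no vertex in their relative interiors; hence the face $f$ of \mcG lying, near $c$, in the angular sector spanned by the rays from $c$ towards $s$ and towards $s'$ has all of $\overline{sc}\cup\overline{cs'}$ on its boundary, and in particular $s,s'\in\partial f$.

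The plan is then to perform a local surgery near $c$: delete $t$ and the drawn edges $st,ts'$; insert a new vertex $t^\ast$ in the relative interior of $\overline{sc}$ at distance $\varepsilon$ from $c$; and draw $st$ as $\overline{st^\ast}\subseteq\overline{sc}$ and $ts'$ as $\overline{t^\ast s'}$. The claim to be proved is that for every sufficiently small $\varepsilon>0$ the resulting drawing \mcG['] is again a geometric \srestricted drawing of $G$ in which both edges at $t$ are uncrossed and the vertex of $O$, if any, still lies on the outer face — so a single such surgery already proves the lemma.

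The verification has three parts. The segment $\overline{st^\ast}$ is a subsegment of the uncrossed, vertex-interior-free segment $\overline{sc}$, hence is uncrossed in \mcG['], and $\overline{st^\ast}$ and $\overline{t^\ast s'}$ meet only at $t^\ast$ (since $s'$ is not on the line through $s$ and $t$). For $\overline{t^\ast s'}$ one argues by compactness: if no small $\varepsilon$ works, some fixed edge $g$ of \mcG['] meets $\overline{t^\ast s'}$ for a sequence $\varepsilon\to 0$, hence meets $\overline{cs'}$ in the limit; but $g\ne s'x$ can neither cross the uncrossed segment $\overline{cs'}$, nor pass through $c$ (only $st$ and $s'x$ do, and neither crosses $\overline{t^\ast s'}$), nor leave $s'$ along the ray of $\overline{s'c}$ (only finitely many edges leave $s'$, none in that direction) — a contradiction. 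Consequently the crossings of \mcG['] are exactly those of \mcG not involving $st$ or $ts'$, so \mcG['] is $1$-planar with both edges at $t$ uncrossed, and every surviving crossing uses the same pair of edges — hence the same four vertices with the same mutual adjacencies — as in \mcG, so its type remains in $\mathcal S$.

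It remains to track the outer face, which is where I expect the main difficulty. Assume $O=\{v\}$, else there is nothing to check. The intermediate drawing obtained by deleting $t,st,ts'$ and inserting $t^\ast$ on the old segment $\overline{st}$ is, as a point set, contained in \mcG, so its outer face contains that of \mcG and $v$ lies on it. Adding the single segment $\overline{t^\ast s'}$ either leaves the outer face unchanged (if $\overline{t^\ast s'}$ is drawn in a bounded face) or splits it into two faces sharing the new edge; in the latter case the only parts of the former outer boundary that can fall strictly on the \emph{bounded} side are relatively open pieces of $\overline{cs'}$ and of $\overline{ct^\ast}$, which lie in the relative interiors of the edges $s'x$ and (old) $st$ and therefore contain no vertex, while their endpoints $s'$ and $t^\ast$ lie on $\overline{t^\ast s'}$ and hence on both new faces. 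So $v$ still lies on the outer face of \mcG['], \emph{provided} $v\ne t$. The remaining case $v=t$ — where one must instead reinsert $t$ so that it itself lands on the outer face, exploiting that $t$ already lies on the outer face of \mcG — is the one genuinely delicate point, and, together with making the "sufficiently small $\varepsilon$" claims precise, it is the principal obstacle of the argument.
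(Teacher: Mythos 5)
Your forward argument for the case \(O\neq\{t\}\) is sound and matches the paper's (the paper likewise just slides \(t\) to a point near the crossing \(p\) of \(st\) with \(s'v\); your \(\varepsilon\)-compactness verification is a more careful version of the same step). But the lemma is not proved: you explicitly leave open the case \(O=\{t\}\), and that case is the actual content of the paper's proof and is genuinely needed downstream --- \Cref{lem:geom-3con-R} and \Cref{cor:geom-3con-R-for-algo} apply this lemma precisely to drawings in which the degree-two vertex \(t_2\) itself must stay on the outer face. Your surgery necessarily pulls \(t\) to a point \(t^\ast\) adjacent to the crossing \(c\), and there is no reason \(t^\ast\) lies on the outer face, so the approach as stated cannot be completed for \(O=\{t\}\) without a substantially different idea.

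The paper's resolution of that case does not move \(t\) at all; instead it redraws the \emph{third} edge \(vs'\) of the \(4\)-cycle \(sts'vs\). Concretely: one first shows the straight edge \(s't\) cannot be crossed (a crossing would force a vertex inside the triangle \(\triangle(p t s')\) joined to \(s\) by a segment crossing \(ts'\), which is geometrically impossible), deduces that \(v,t,s'\) all lie on the outer face, and then reroutes \(vs'\) either around the outer face (if no edges \(vu\) and \(s'u'\) cross) or hugging the innermost of the nested crossing pairs \((vu_i,s'u'_i)\). This removes the crossing on \(st\) while keeping \(t\) outside; since the rerouted \(vs'\) is a curve, one must then check that no B- or W-configuration is created and invoke Thomassen's characterization to recover an equivalent straight-line drawing. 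None of this machinery (in particular the use of the B/W-configuration criterion) appears in your proposal, so the missing case is not a routine extension of what you wrote.
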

\begin{proof}
	The backward implication is clear. For the forward implication,
	consider an \(O\)-\geosrestricted drawing \mcG of \(G\).
	If in \mcG the two edges incident to \(t\) are not crossed,
	we are done. The rest of the proof is about the situation where
	in \mcG at least one of the two edges incident to \(t\) is crossed
	and how to modify the drawing.
	
	Let \(s\) and \(s'\) be the two neighbors of \(t\), labeled in 
	such a way that in \mcG the edge \(st\) is crossed at a point \(p\). 
	As crossing edges must be on a 4-cycle \(C\) and \(t\) has 
	degree two, the edge crossing \(st\) must be incident to \(s'\).
	Let \(v\) be the fourth vertex in \(C\), so that \(C=sts'vs\).
	\Cref{fig:lem:3con-geom:1} might help to follow the discussion.
	If \(O\neq \{t\}\), we can redraw the edges \(st\) and \(s't\)
	placing \(t\) sufficiently near \(p\), and we are done.
	See \Cref{fig:lem:3con-geom:1}, left.
	However, if \(O= \{t\}\), we have to redraw differently to
	keep \(t\) in the outer face. The rest of the proof is about
	this case, where \(O= \{t\}\). 
	See \Cref{fig:lem:3con-geom:1}, center.
	
	\begin{figure}
		\centering
		\includegraphics[page=6, width=\textwidth]{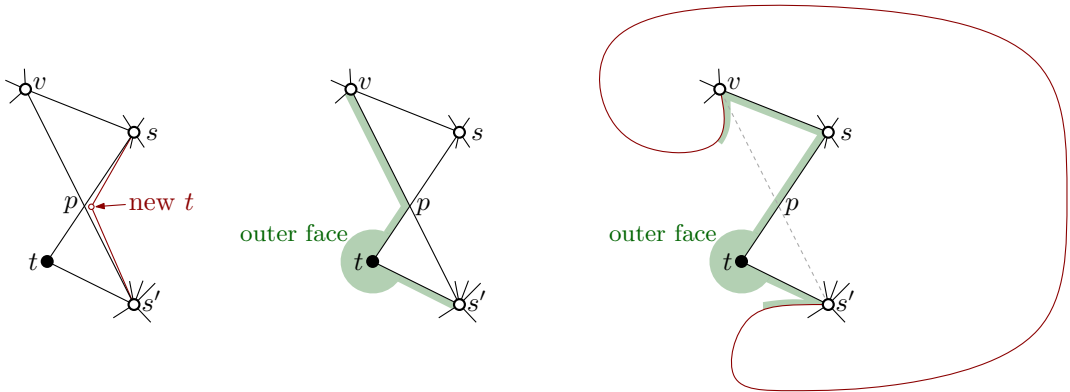}
		\caption{Left: The easy scenario in the proof of 
			\Cref{lem:3con-geom}, when \(O\neq \{t\}\).
			Center: Scenario in the proof of \Cref{lem:3con-geom}
				when \(O=\{t\}\).
			Right: redrawing of the edge \(vs'\) when \(O=\{t\}\) 
				and no edge incident to \(v\) and no edge 
				incident to \(s'\) cross.
			}
		\label{fig:lem:3con-geom:1}
	\end{figure}

	We next note that the edge \(s't\) cannot 
	be crossed by any edge. Indeed, by the same argument
	we used before for \(st\), the edge crossing \(s't\) would
	be contained in a 4-cycle of the form \(sts'v's\) for a vertex
	\(v'\), but then the vertex \(v'\) should be inside the 
	triangle \(\triangle(pts')\) and connected to \(s\) by a 
	straight-line segment that crosses \(ts'\), which is 
	geometrically impossible. (One can also 
	argue that there is a B-configuration with spine \(s,t\)
	and crossing edges \(ts', v's\).) 
	
	Since \(t\) is on the outer face, has degree two, and the edge
	\(vs'\) is crossed only by \(st\), the vertex \(v\) is also
	on the outer face.
	Similarly, 	since \(t\) in on the outer face, has degree two, 
	and the edge \(s't\) is not crossed by any other edge, 
	the vertex \(s'\) is also on the outer face.
	We conclude that \(v,t,s'\) are vertices on the outer face
	of \mcG and they are in that order along the outer face. 
	We are going to redraw the edge \(vs'\).
	
	We consider two cases. In the first case, assume that there
	are no edges \(vu\) and \(s'u'\) that cross. In this case
	we redraw the edge \(vs'\) as a curve through the outer face 
	in such a way that \(t\) remains on the outer face; see 
	\Cref{fig:lem:3con-geom:1}, right. Let \mcG['] be the 
	new drawing of \(G\). With such a redrawing 
	of \(vs'\) we do not introduce any W-configuration because
	such a configuration needs new crossing edges. We do not
	introduce any B-configuration either because the newly
	drawn edge \(vs'\), as it has no crossings, could be only 
	the edge connecting the spine of a B-crossing, but since 
	there are no edges \(vu\) and \(s'u'\) that cross, 
	we cannot have such a B-configuration in \mcG[']. 
	
	In the second case, assume that there are edges	\(vu\) 
	and \(s'u'\) that cross in \mcG. There may be several such pairs
	\( (vu_1, s'u'_1),\dots, (vu_k,s'u'_k)\) of edges such that
	the edges \( vu_i\) and \(s'u'_i\) cross at point \(p_i\)
	in \mcG, for each \(i\in \{1,\dots,k\}\). Each such pair defines
	a triangle \(\Delta_i\) defined by the edge \(vs'\)	and the 
	point \(p_i\) as vertex of the triangle. 
	See \Cref{fig:lem:3con-geom:2}, left.
	Each such a triangle \(\Delta_i\) is contained in the halfplane 
	bounded by the line through \(v\) and \(s'\) and without \(t\) in
	its interior: otherwise \(t\) would not be on the outer face. 
	Since the edges \(vu_i\) and \(s'u'_i\)
	cross each other, they cannot cross any other edge.
	It follows that the triangles \(\Delta_1,\dots,\Delta_k\)
	are nested, meaning that for each two of them, one
	has to be inside the other.

	Continuing with the second case, let us assume, 
	without loss of generality, that \(\Delta_1\)
	is the smallest triangle. We redraw the edge \(vs'\) as a curve 
	inside \(\Delta_1\) that closely follows \(vu_1\) from \(v\)
	until the neighborhood of \(p_1\) and then closely follows \(u'_1s'\)
	until \(s'\). See \Cref{fig:lem:3con-geom:2}, right.
	Let \mcG['] be the resulting drawing of \(G\).
	Because the edges \(vu_1\) and \(s'u'_1\) cannot
	cross any other edge, in the new drawing \mcG['] the edge \(vs'\) 
	does not have any crossings. 
	In the new drawing we do not introduce any W-configuration 
	because such a configuration needs new crossing edges. We do not
	introduce any B-configuration either because the newly
	drawn edge \(vs'\), as it has no crossings, could be only 
	the edge connecting the spine of a B-crossing, but
	none of the crossings between the edges \( vu_i\) and \(s'u'_i\)
	(for any \(i\in \{1,\dots,k\}\) form a B-configuration in \mcG['].
	
	\begin{figure}
		\centering
		\includegraphics[page=7,width=\textwidth]{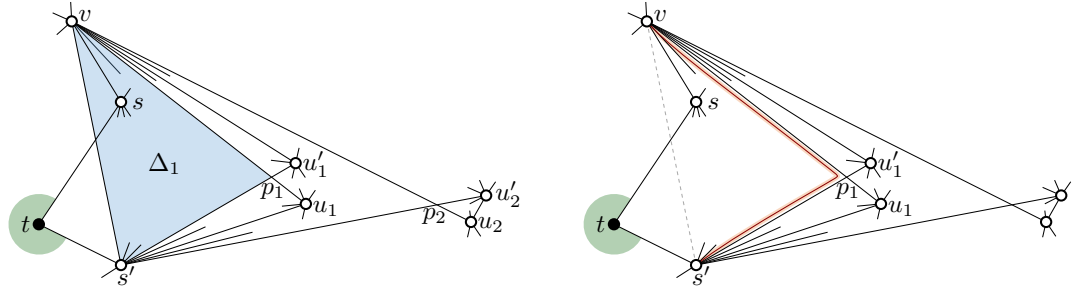}
		\caption{Left: Scenario in the proof of \Cref{lem:3con-geom},
			when \(O=\{t\}\) and some edge incident	\(v\) 
			and some edge incident to \(s'\) cross. 
			The triangle \(\Delta_1\) is shaded in light blue.
			Right: redrawing of the edge \(vs'\) in this case.
			}
		\label{fig:lem:3con-geom:2}
	\end{figure}
	
	In both cases, we have obtained an \srestricted drawing 
	\mcG['] without neither B- nor W-configuration, with \(t\) in the 
	outer face, and the edges two edges \(st,s't\) incident to \(t\) 
	are not crossed; we already argued that \(s't\) is not crossed
	and \(st\) is not crossed because of the redrawing of the edge \(vs'\). 
	The existence of a geometric drawing equivalent to \mcG['] finishes
	the proof.
\end{proof}

\begin{lemma}
\label{lem:geom-3con-R}
	Let \(G\) be a 2-connected graph, \(O \subset V(G)\) with \(|O| \leq 1\),
	\(\{s,s'\}\) a 2-separator of \(G\), and \((X_1,X_2)\) a separation 
	induced by \(\{s,s'\}\).
	Define \(G_1\coloneqq G[X_1]+\{st_1,t_1s'\}\) and 
	\(G_2\coloneqq G[X_2]+\{st_2,t_2s'\}\), where \(t_1\) and \(t_2\)
	are new vertices. \textup(Note that if \(ss'\in E(G)\), 
	then \(ss'\in E(G_1)\) and \(ss'\in E(G_2)\).\textup) 
    Then \(G\) is \(O\)-\geosrestricted if and only if at least one of the 
	following holds:
	\begin{enumerate}[(i)]
	\item \(O\subset X_1\), the graph \(G_1\) is \(O\)-\geosrestricted, 
		and the graph \(G_2\) is \(\{t_2\}\)-\geosrestricted; or
	\item \(O\subset X_2\), the graph \(G_1\) 
		is \(\{t_1\}\)-\geosrestricted, and the graph \(G_2\) is 
		\(O\)-\geosrestricted.
	\end{enumerate}
\end{lemma}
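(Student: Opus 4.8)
The plan is to combine the topological clique-sum argument of \Cref{lem:3con:alt} with the outer-face bookkeeping of \Cref{lem:geom-2con} and the uncrossing lemma \Cref{lem:3con-geom}. First I record the structural facts that drive everything. Since the two edges of every crossing lie on a common \(4\)-cycle and \(\{s,s'\}\) is a separator inducing \((X_1,X_2)\), an edge with an endpoint in \(X_1\setminus\{s,s'\}\) cannot cross an edge of \(G[X_2]\): if it did, the \(4\)-cycle would force an edge between \(X_1\setminus\{s,s'\}\) and \(X_2\setminus\{s,s'\}\), which does not exist. Hence in any drawing of \(G\) every crossing is internal to \(G[X_1]\) or to \(G[X_2]\), and the subdrawings on \(G[X_1]\) and \(G[X_2]\) do not cross each other. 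Moreover, as \(G\) is \(2\)-connected, \(G[X_1]-ss'\) and \(G[X_2]-ss'\) are connected (cf.\ \Cref{lem:3con:alt}), and we will freely use that crossing edges incident to \(s\) and \(s'\) force \(s,s'\) to be cofacial. If \(ss'\in E(G)\) it lies in both \(G_1\) and \(G_2\) and is treated exactly as in \Cref{lem:3con:alt}.

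\textbf{Forward direction.} Given an \(O\)-\geosrestricted drawing \mcG of \(G\), restrict it to \mcG[_1] on \(G[X_1]\) and \mcG[_2] on \(G[X_2]\); these are \geosrestricted since subdrawings contain no new B- or W-configuration. As in \Cref{lem:3con:alt}, a curve from \(s\) to \(s'\) inside \(\mcG[_2]-ss'\) avoids \mcG[_1] except at its endpoints, so \(s,s'\) are cofacial in \mcG[_1], and symmetrically in \mcG[_2]. As in the forward direction of \Cref{lem:geom-2con}, by the \(X_1\)/\(X_2\)-symmetry I may assume \mcG[_1] contributes to the outer face of \mcG and, if \(O\neq\emptyset\), that \(O\subset X_1\). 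Passing to a subdrawing only merges faces, so the vertex of \(O\) (if any) lies on the outer face of \mcG[_1]; and, as in \Cref{lem:geom-2con} (otherwise \mcG[_1] would be trapped in a bounded face of \mcG[_2]), the vertices \(s,s'\) lie on the outer face of \mcG[_2]. Now add to \mcG[_1] an uncrossed \(3\)-path \(st_1s'\) through a face witnessing cofaciality of \(s,s'\), routed so as not to separate the vertex of \(O\) from the outer face, and add to \mcG[_2] an uncrossed \(3\)-path \(st_2s'\) through its outer face. The midpoints \(t_1,t_2\) have degree two and their edges are uncrossed, so they cannot appear in any B- or W-configuration, while no old vertex moves and no old crossing changes; hence we obtain an \(O\)-\geosrestricted drawing of \(G_1\) and a \(\{t_2\}\)-\geosrestricted drawing of \(G_2\), i.e.\ option (i) (or symmetrically option (ii)).

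\textbf{Backward direction.} Suppose option (i) holds (option (ii) is symmetric): there are an \(O\)-\geosrestricted drawing \mcG[_1] of \(G_1\) with \(O\subset X_1\) and a \(\{t_2\}\)-\geosrestricted drawing \mcG[_2] of \(G_2\). By \Cref{lem:3con-geom} applied to the degree-two vertices \(t_1\in V(G_1)\) and \(t_2\in V(G_2)\) we may assume the paths \(st_1s'\) and \(st_2s'\) are uncrossed; since \(t_2\) is on the outer face of \mcG[_2], after deleting \(t_2\) and its two uncrossed edges the vertices \(s,s'\) lie on the outer face of \(\mcG[_2]-t_2=G[X_2]\). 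Deleting \(t_2\) preserves \geosrestrictedness: it cannot create a B- or W-configuration, and since \(t_2\) has degree two and the only \(4\)-cycle through it is \(st_2s'vs\) with \(st_2,t_2s'\) uncrossed, \(t_2\) lies on no crossing \(4\)-cycle, so no crossing type changes — this matters since \(\mathcal S\) may be just \(\{\fullCt\}\). I then combine as in \Cref{lem:3con:alt}: delete \(t_1\) from \mcG[_1], let \(F\) be the face of \(\mcG[_1]-t_1\) that the deleted path \(st_1s'\) ran through (so \(s,s'\in\partial F\) and a thin corridor inside \(F\) connects them), and place \(\mcG[_2]-t_2\) into \(F\) along that corridor, identifying the two copies of \(s\) and of \(s'\). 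No new crossings appear, so the result is an \srestricted drawing of \(G\); keeping \(\mcG[_2]-t_2\) in a small neighbourhood of the corridor keeps the vertex of \(O\) on the outer face. The proof concludes once we verify that the combined drawing has no B- or W-configuration and then invoke Thomassen's criterion.

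\textbf{Main obstacle.} Verifying the absence of B- and W-configurations in the combined drawing is the only point beyond \Cref{lem:3con:alt} and the cutvertex situation of \Cref{lem:geom-2con}, and I expect it to be the crux. A B-configuration consists of a single crossed edge together with the edge joining its spine, so it lies entirely inside \(\mcG[_1]-t_1\) or inside \(\mcG[_2]-t_2\) (whichever contains its crossing); since the combination places \(\mcG[_2]-t_2\) inside a face of \(\mcG[_1]-t_1\), it moves no vertex of \(\mcG[_1]-t_1\), and, viewed as surrounding \(\mcG[_2]-t_2\), it moves no vertex of \(\mcG[_2]-t_2\); thus a vertex outside a given configuration's region stays outside it, and no new B-configuration arises. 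A W-configuration uses two crossings; if both lie on one side it would already occur there, contradicting \geosrestrictedness of \mcG[_1] or \mcG[_2]. The remaining, genuinely delicate case is a W-configuration using one crossing from \(\mcG[_1]-t_1\) and one from \(\mcG[_2]-t_2\): then its spine must be exactly the separator \(\{s,s'\}\). To exclude it I would exploit the placement freedom: the part of the configuration's bounding curve contributed by \(\mcG[_2]-t_2\) is confined to the corridor and hence to one side of it inside \(F\), which pins the two \(X_2\)-vertices of the configuration to a definite side; choosing on which side of the corridor to draw the bulk of \(\mcG[_2]-t_2\), together with the fact that \mcG[_1] and \mcG[_2] individually contain no B- or W-configuration with spine \(\{s,s'\}\), yields a placement free of such configurations. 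The forward side carries the mirror-image subtlety that \(s,s'\) really do end up on the outer face of \mcG[_2]; this is handled by the same trapped-in-a-bounded-face argument as in \Cref{lem:geom-2con}, now with the no-mixed-crossings fact replacing the cutvertex property.
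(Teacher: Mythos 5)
Your proof rests on a structural claim that is false, and this breaks the forward direction. You assert that an edge with an endpoint in \(X_1\setminus\{s,s'\}\) cannot cross an edge of \(G[X_2]\), so that ``every crossing is internal to \(G[X_1]\) or to \(G[X_2]\)''. But take \(u\in X_1\setminus\{s,s'\}\) and \(v\in X_2\setminus\{s,s'\}\) with \(su,us',s'v,vs\in E(G)\): then \(sus'vs\) is a \(4\)-cycle whose opposite edges \(su\) and \(s'v\) may cross with a type in \(\{\fullCt,\almostFullCt,\bowtieCt\}\), and no edge between \(X_1\setminus\{s,s'\}\) and \(X_2\setminus\{s,s'\}\) is needed. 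The correct statement (used in \Cref{lem:3con:alt} and throughout the paper's proof) is only that every cross-side crossing is between an edge incident to \(s\) and an edge incident to \(s'\). Consequently your claim that both \(s\) and \(s'\) lie on the outer face of \mcG[_2] ``by the same trapped-in-a-bounded-face argument as in \Cref{lem:geom-2con}'' fails: an edge \(su\) of \mcG[_1] can escape to the outer face by crossing an outer-face edge \(s'v\) of \mcG[_2] even though \(s\) itself is buried inside \mcG[_2]. The paper proves only that at least one of \(s,s'\) is on the outer face of \mcG[_2], exhibits a tight example where the other is not, and then needs a genuine construction --- inserting \(t_2\) just past the point \(p\) where the first edge of a path \(\gamma\subseteq\mcG[_1]\) from \(s\) crosses an outer-face edge \(s'v\) of \mcG[_2], and rerouting \(s'v\) as in \Cref{lem:3con-geom} --- to produce a \(\{t_2\}\)-\geosrestricted drawing of \(G_2\). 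Your proposal has no substitute for this step.

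In the backward direction you correctly isolate the W-configuration with spine \(\{s,s'\}\) as the crux, but the sketch you give does not close it. After \Cref{lem:3con-geom} the drawing \mcG[_2] may contain several crossings between an edge at \(s\) and an edge at \(s'\) (one per connected component of \(G_2-\{s,s',t_2\}\), as the paper shows), and these may lie on both sides of the line through \(s\) and \(s'\); a single isotopy of \(\mcG[_2]-t_2\) into a ``corridor'' does not control on which side of \(ss'\) each such crossing ends up, and that is precisely what decides whether the endpoints of one crossing pair land inside the closed curve bounded by another crossing pair and the segment \(ss'\) --- whether the two crossings come from opposite sides of the separation or from two different components of \(G_2\). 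The paper resolves this by decomposing \(G_2-\{s,s',t_2\}\) into components \(Y_1,\dots,Y_k\), proving each \(H_i=G_2[Y_i\cup\{s,s'\}]\) carries at most one such crossing, mirroring the components whose crossing lies on the wrong side of the line, and nesting the deformed pieces inside a quadrangle \(Q\) so that all endpoints of crossing edges incident to \(s\) and \(s'\) lie outside the relevant closed curves. Without this (or an equivalent) quantitative control, ``choosing on which side of the corridor to draw the bulk of \(\mcG[_2]-t_2\)'' does not rule out the offending W-configurations, so the backward direction remains incomplete as well.
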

\begin{proof}
	For the forward implication, consider an \(O\)-\geosrestricted 
	drawing \mcG of \(G\), let \mcG[_1] be its restriction to 
	\(G_1-t_1=G[X_1]\) and let \mcG[_2] be its restriction 
	to \(G_2-t_2=G[X_2]\).
	As shown in the proof of \Cref{lem:3con:alt}, the vertices 
	\(s\) and \(s'\) are cofacial in the drawings \mcG[_1] and \mcG[_2].
	
	Because of the symmetry between \(X_1\) and \(X_2\),
	we may assume without loss of generality that either 
	\(O\neq \emptyset\) and \(O\subset X_1\) or that
	\(O=\emptyset\) and a part of \mcG[_1] contributes to the outer face of \mcG.
	In either case, we have the property that \mcG[_1] contributes some part 
	to the outer face of \mcG.
	Clearly, in both cases \mcG[_1] witnesses that \(G[X_1]=G_1-t_1\) 
	is \(O\)-\geosrestricted. Since \(s\) and \(s'\) are cofacial in \mcG[_1], 
	we can add the edges \(st_1\) and \(t_1s'\) to \mcG[_1] without introducing 
	new crossings or B- or W-configurations. If \(s\) and \(s'\) lie in
	the outer face, we have two different ways to draw the path \(st_1s'\),
	and in one of them the vertex of \(O\), if it exists, remains
	on the outer face.  This shows that the graph
	\(G_1\) is \(O\)-\geosrestricted. 

	It remains to argue that \(G_2\) is \(\{t_2\}\)-\geosrestricted.
	If \(s\) and \(s'\) are both on the boundary of the outer face of \mcG[_2],
	we just add to \(G_2\) a point for \(t_2\) in the interior of the outer
	face of \mcG[_2] and draw the edges \(st_2,s't_2\) without crossings.
	We get a drawing of \(G_2\) where \(t_2\) is lying on the boundary
	of the outer face and without neither B- nor W-configurations.
	Thus, there is an equivalent \(\{t_2\}\)-\geosrestricted drawing
	of \(G_2\) and we have finished the forward implication.
	
	If \(s\) or \(s'\) are not on the boundary of the outer face 
	of \mcG[_2], we modify the drawing \mcG[_2] into another geometric
	drawing of \(G_2-t_2\) that has \(s\) and \(s'\) on the boundary 
	of the outer face, and then, as before, we again conclude 
	that \(G_2\) is \(\{t_2\}\)-\geosrestricted.
	We will use the property noted in the proof of \Cref{lem:3con:alt}:
	any crossing between an edge of \(\mcG[_1]-ss'\) and an
	edge of \(\mcG[_2]-ss'\) occurs between an edge incident to \(s\)
	and an edge incident to \(s'\). Therefore, there are no
	crossings between \(\mcG[_1]-\{s,s'\}\) and \(\mcG[_2]-\{s,s'\}\).
	
	\begin{figure}
		\centering
		\includegraphics[page=10]{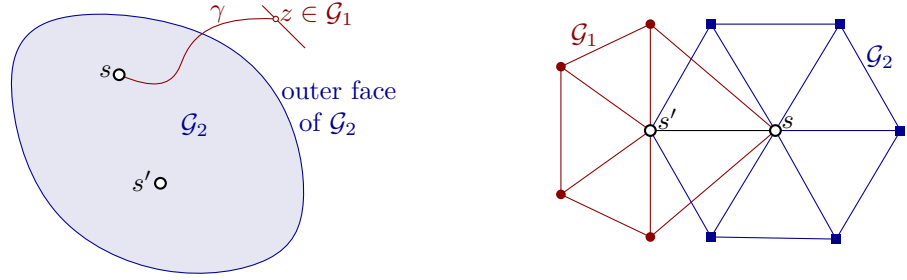}
		\caption{Scenario in the forward implication of the 
			proof of \Cref{lem:geom-3con-R}.
			Left: showing that \(s\) or \(s'\) are on the
			boundary of the outer face of \mcG[_2].
			Right: the claim is tight, as it may be that
			only one vertex from \(\{s,s'\}\) is on the boundary 
			of the outer face of \mcG[_2]. (In this example we assume that
			\(\almostFullCt\in\mathcal{S}\).)
			}
		\label{fig:lem:geom-3con-R:A}
	\end{figure}

	First, we note that at least one of the vertices \(s,s'\)
	must lie on the boundary of the outer face of \mcG[_2]. Indeed, 
	assume for the sake of reaching a contradiction that neither 
	\(s\) nor \(s'\) lie on the boundary of the outer face of \mcG[_2]. 
	Then the boundary of the outer face of \mcG[_2] is formed by 
	full edges of \mcG[_2] that are not incident to \(s\) or \(s'\) 
	and by strict \emph{sub}paths of edges of \mcG[_2] that are 
	crossed by another edge of \mcG[_2]. Thus, the boundary of the 
	outer face of \mcG[_2] cannot be crossed by any edge of \mcG[_1].
	This contradicts the assumption that some point of \mcG[_1],
	say \(z\), is on the boundary of the outer face of \mcG because
	\mcG[_1] contains a curve \(\gamma\) from \(s\) to \(z\) and
	such a curve must cross the boundary of the outer face of \mcG[_2].
	See \Cref{fig:lem:geom-3con-R:A} for a schema, where
	we also see that the claim is tight: it may be that one of 
	vertices from \(\{s,s'\}\) is not on the outer face of \mcG[_2].
	This finishes the proof of the claim.
	(Here we did not use that \mcG is a geometric drawing.)

	Without loss of generality we assume henceforth that \(s'\)
	lies on the boundary of the outer face of \mcG[_2].	
	Therefore \(s\) does not lie on the boundary of the outer face
	of \mcG[_2].
	Since \(G_1\) is 2-connected and \mcG[_1] has some point \(z\)
	on the boundary of \mcG, there exists a path \(\gamma\)
	in \mcG[_1] from \(s\) to \(z\) that is disjoint from \(s'\).
	Because of the restrictive types of crossings between
	\mcG[_1] and \mcG[_2], we have the following property:
	as we follow \(\gamma\) from \(s\) to \(z\), the first edge
	of \(\gamma\) must cross an edge \(s'v\) of \mcG[_2] at a point,
	which we denote by \(p\), and from there on \(\gamma\) cannot
	cross \mcG[_2] anymore. From this property, it follows that 
	\(p\) lies on the boundary of the outer face of \mcG[_2],
	and since \(s'v\) cannot be crossed by any other edge of
	\mcG[_2], the whole edge \(s'v\) lies on the outer face of \mcG[_2].
	We add to \mcG[_2] the vertex \(t_2\) on \(\gamma\) slightly 
	after \(p\) and connect it with straight-line edges to 
	\(s\) and \(s'\). This introduces a new crossing between
	\(st_2\) and \(s'v\) (that may not of the type \(\mathcal{S}\)). 
	We redraw the edge \(s'v\) to get rid of the new crossing, 
	essentially as it was done in the proof of 
	\Cref{lem:3con-geom}: if in \mcG[_2] no edges of the 
	form \(vu\) and \(s'u'\) cross, then we redraw \(s'v\)
	around the opposite side of the outer face of \mcG[_2],
	and if such crossing edges exists, we redraw \(s'v\)
	near the pair of such crossing edges that together
	with the current \(sv\) defines the smallest triangle.
	See \Cref{fig:lem:geom-3con-R:B}.
	
	\begin{figure}
		\centering
		\includegraphics[page=11,width=\textwidth]{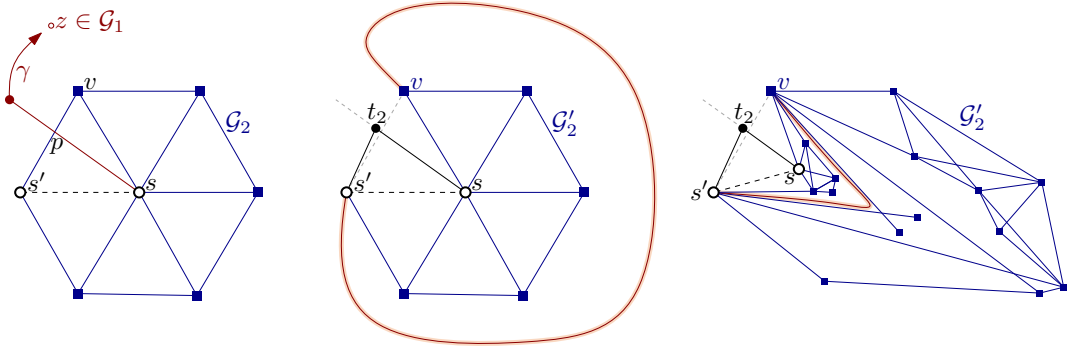}
		\caption{Scenario in the forward implication of the 
			proof of \Cref{lem:geom-3con-R}.
			Left: the edge \(s'v\in \mcG[_2]\) lies in the outer face
			of \mcG[_2].
			Center and right: the possible options for
			redrawing \(s'v\) and inserting the edges \(st_2,s't_2\)
			in the drawing \mcG[_2].
			}
		\label{fig:lem:geom-3con-R:B}
	\end{figure}	

	Let \mcG['_2] be the resulting drawing of \(G_2\). 
	As it was done in the proof of \Cref{lem:3con-geom},
	\mcG['_2] has no B- or W-configuration and the 
	vertices \(s,s'\) lie on the boundary of the outer
	face. Taking a geometric drawing equivalent to \mcG['_2],
	we have reduced the situation to the previous scenario,
	where we have a geometric drawing of \(G_2-t_2\) with
	\(s\) and \(s'\) on the boundary of the outer face.
	This finishes the proof of the forward implication.
		
	For the backward implication, let us assume without loss of generality
	that \(O \subset X_1\), the graph \(G_1\) is \(O\)-\geosrestricted, 
	and the graph \(G_2\) is \(\{t_2\}\)-\geosrestricted.
	Let \mcG[_1] be an \(O\)-\geosrestricted drawing 
	of \(G_1\) and let \mcG[_2] be a \(\{t_2\}\)-\geosrestricted drawing 
	of \(G_2\). Obviously, \mcG[_1] and \mcG[_2] do not have 
	any B- or W-configuration.
	Because of \Cref{lem:3con-geom}, we may assume
	that in \mcG[_1] and in \mcG[_2] the edges incident to \(t_1\) and \(t_2\)
	are not crossed by any other edge. As a consequence, the vertices
	\(s,t_2,s'\) lie in the outer face of the drawings \mcG[_2].

	\begin{figure}
		\centering
		\includegraphics[page=8,width=\textwidth]{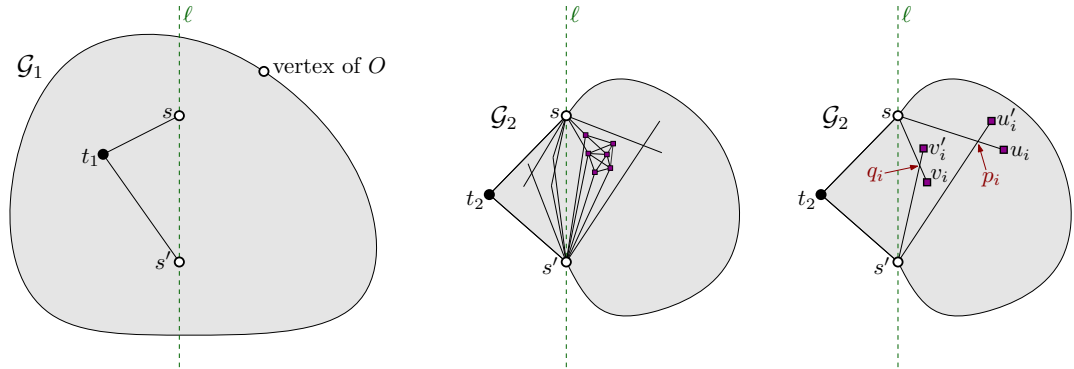}
		\caption{Scenario in the backward implication of the 
			proof of \Cref{lem:geom-3con-R}.
			Left: the drawing \mcG[_1].
			Center: the drawing \mcG[_2]; the vertices of one set
				\(Y_i\) are marked with small squares. 
			Right: each \mcH[_i] has at most one crossing
				between edges incident to \(s,s'\).
			}
		\label{fig:lem:geom-3con-R:1}
	\end{figure}

	Making affine transformations of each drawing, we may assume that in 
	\mcG[_1] and in \mcG[_2] the vertices \(s\) and \(s'\) 
	agree in both drawings, that \(s\) and \(s'\) lie on a vertical 
	line \(\ell\) and that in both drawings \(t_1\) and \(t_2\) lie 
	to the left of the line \(\ell\). See \Cref{fig:lem:geom-3con-R:1}
	to follow the discussion.
	Let \(Y_1,\dots,Y_k\) be the vertex sets of the 
	connected components of \(G_2-\{s,s',t_2\}\). Define, for 
	each \(i\in \{1,\dots,k\}\), the graph 
	\(H_i=G_2[Y_i\cup \{s,s'\}]\) and let \mcH[_i]
	be the restriction of \mcG[_2] to \(H_i\).
	Note that, for each \(i\in \{1,\dots,k\}\), 
	the drawing \mcH[_i] may have at most one
	pair of edges of the form \(su_i\) and \(s'u'_i\)
	that cross. Indeed, if inside \mcH[_i] there would
	be two such pairs of crossing edges, say \(su_i\) crosses
	with \(s'u'_i\) at \(p_i\) and \(sv_i\) crosses 
	with \(s'v'_i\) at \(q_i\), then the three geometric
	paths \(st_2s'\), \(sq_is'\) and \(sp_is'\)
	separate \(\{u_i,u'_i\}\) from \(\{v_i,v'_i\}\), 
	and there cannot be any path in 
	\(H_i-\{s,s'\}=G_2[Y_i]\) from \(u_i\) to \(v_i\); 
	see \Cref{fig:lem:geom-3con-R:1}, right.
	This would contradict the definition of \(Y_i\) because
	it is defined as inducing a connected graph.
	
	\begin{figure}
		\centering
		\includegraphics[page=9,width=\textwidth]{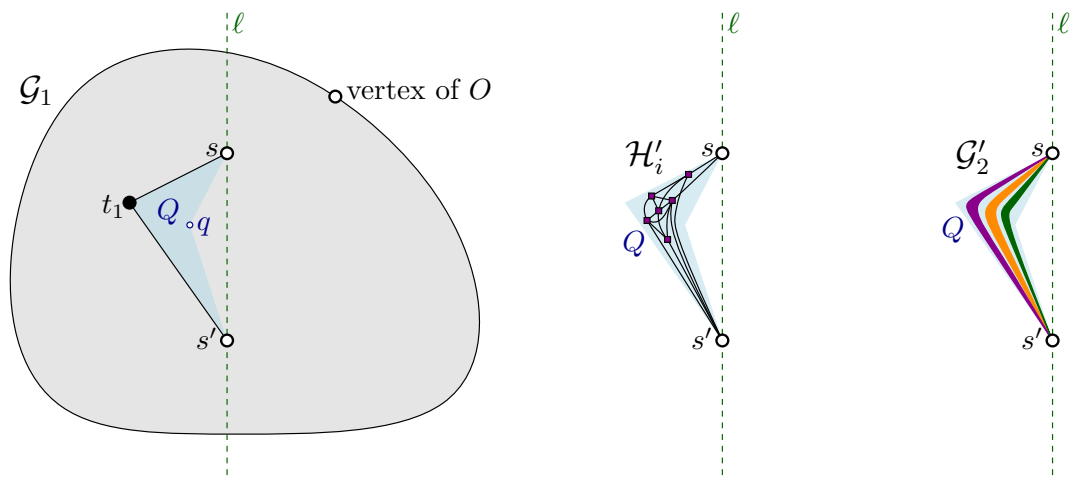}
		\caption{Left: the empty quadrangle	\(Q\) with boundary
			\(st_1s'q\) in \mcG[_1].
			Center: example of how to transform the drawing
			of \Cref{fig:lem:geom-3con-R:1}, center, to
			fit into \(Q\). 
			Right: schematic view of the drawing \mcG['_2],
			where we have three subdrawings \mcH['_1], \mcH['_2] 
			and \mcH['_3], each with a different color.
			}
		\label{fig:lem:geom-3con-R:2}
	\end{figure}

	Select a point \(q\) inside the (geometric) triangle with 
	vertices at \(s,s',t_1\) 
	such that the quadrangle \(Q=t_1s'qs\) is simple and 
	its interior is disjoint from \mcG[_1].
	See \Cref{fig:lem:geom-3con-R:2}, left.
	We are going to redraw \(G_2-t_2-ss'=\cup_{i=1} H_i-ss'\), 
	using curves inside the quadrangle \(Q\).
	For \(i=1,\dots, k\), we do the following:
	if \mcH[_i] has an edge incident to \(s\) and an
	edge incident to \(s'\) that cross to the right 
	of \(\ell\), we replace \mcH[_i] with its mirror
	image through \(\ell\) and bring
	the crossing to the left of \(\ell\);
	then, we continuously deform the drawing \(\mcH[_i]-ss'\)
	into a drawing, denoted \mcH['_i], that fits into \(Q\) 
	without intersecting the previous deformed drawings
	\mcH['_j] for any \(j<i\), but for the common vertices
	\(s,s'\) and the edge \(ss'\), it it exists, that is kept
	as a vertical straight-line. See \Cref{fig:lem:geom-3con-R:2}.
	The final drawing of \(G_2-t_2\) we obtain,
	which we denote \mcG['_2], is simply the union
	of \(\mcH['_1],\dots,\mcH['_k]\), where we remove all
	but one of the copies of \(ss'\), if it exists in \(G_2\).
	The drawing \mcG['_2] is \srestricted because each of
	the drawings \mcH['_i] (for \(i\in \{1,\dots,k\})\)
	is \srestricted and there are no crossings between them.
	(In the deformation to get \mcH['_i] we may have lost 
	crossings with \(ss'\), but losing crossings does not 
	affect the property of being \srestricted.)
	
	Let \mcG['] be the overlap of \(\mcG[_1]-t_1\) and \mcG['_2];
	we remove one of the copies of \(ss'\), if the edge exists. 
	It is clear that \mcG['] is an \srestricted drawing
	of \(G\) that has the vertex of \(O\), it is exists,
	on the boundary of the outer face.
	We next note that \mcG['] has no B-configuration
	because there is no crossing between any of the graphs
	\mcG[_1], \mcH['_1],\dots, \mcH['_k] and none of them
	had a B-configuration.
	Similarly, there is no W-configuration in \mcG['] because
	of the following:
	\begin{itemize}
	\item Each of the drawings \mcG[_1], \mcH['_1],\dots, \mcH['_k]
		has no W-configuration.
	\item A W-configuration combining two of the drawings of
		\mcG[_1], \mcH['_1],\dots, \mcH['_k] should have
		spine \(\{s,s'\}\), but any two crossing
		edges incident to \(s\) and \(s'\) have the endpoints
		outside the closed curve defined by the crossing
		edges and the segment \(ss'\).
	\end{itemize}
	We conclude that \mcG['] is an \srestricted drawing of \(G\)
	without neither B- nor W-configuration and with the vertex
	of \(O\), if it exists, on the outer face of \mcG['].
	Therefore there exists an equivalent \(O\)-\geosrestricted
	drawing of \(G\).	
\end{proof}

Using logical implications, as we did in the proof of 
\Cref{cor:geom-2con-for-algo}, we have the following non-symmetric
version of \Cref{lem:geom-3con-R}, which will be useful for 
our algorithm.

\begin{corollary}
\label{cor:geom-3con-R-for-algo}
    With the assumptions and the notation of \Cref{lem:geom-3con-R}, the
    following holds.
    \begin{enumerate}[(i)]
	\item If \(G_1\) is \(\{t_1\}\)-\geosrestricted and \(O\subset X_2\), possibly 
		with \(O=\emptyset\), then \(G\) is \(O\)-\geosrestricted if and only if
		the graph \(G_2\) is \(O\)-\geosrestricted.
	\item If \(G_1\) is not \(\{t_1\}\)-\geosrestricted or \(O\not\subset X_2\),
		then \(G\) is \(O\)-\geosrestricted if and only if \(O\subset X_1\),
		the graph \(G_1\) is \(O\)-\geosrestricted, and
		the graph \(G_2\) is \(\{t_2\}\)-\geosrestricted.
	\end{enumerate}
\end{corollary}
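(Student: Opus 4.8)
The plan is to argue exactly as in the proof of \Cref{cor:geom-2con-for-algo}: substitute the hypotheses of each item into the two options of \Cref{lem:geom-3con-R} and simplify, treating every statement of the form ``\(G_i\) is \(A\)-\geosrestricted'' as an atomic predicate. The one ingredient beyond Boolean manipulation is the following auxiliary observation, which I would prove first: \emph{if a graph \(H\) has a degree-two vertex \(t\) with neighbours \(a,b\) and \(H\) is \(\{t\}\)-\geosrestricted, then \(H\) is also \(\{a\}\)-\geosrestricted and \(\{b\}\)-\geosrestricted.} Indeed, by \Cref{lem:3con-geom} there is a \(\{t\}\)-\geosrestricted drawing of \(H\) in which the two edges \(ta,tb\) are uncrossed; following the uncrossed edge \(ta\) from \(t\) along the side that locally borders the outer face at \(t\) reaches \(a\) without leaving the boundary of the outer face (and symmetrically for \(b\)), so \(a\) and \(b\) lie on the boundary of the outer face and the same drawing witnesses \(\{a\}\)- and \(\{b\}\)-\geosrestrictedness. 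I will use this with \(H=G_2\), \(t=t_2\), \(\{a,b\}=\{s,s'\}\).

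The second item is purely logical: the hypothesis ``\(G_1\) is not \(\{t_1\}\)-\geosrestricted or \(O\not\subset X_2\)'' is exactly the negation of the conjunction of the first two conditions of option~(ii) of \Cref{lem:geom-3con-R}, so option~(ii) cannot hold; hence \Cref{lem:geom-3con-R} says that \(G\) is \(O\)-\geosrestricted if and only if option~(i) holds, and option~(i) is precisely the claimed equivalence.

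For the first item, assume \(G_1\) is \(\{t_1\}\)-\geosrestricted and \(O\subset X_2\). Then option~(ii) of \Cref{lem:geom-3con-R} reduces to ``\(G_2\) is \(O\)-\geosrestricted''. The backward direction is immediate: if \(G_2\) is \(O\)-\geosrestricted then option~(ii) holds and so \(G\) is \(O\)-\geosrestricted. For the forward direction, if \(G\) is \(O\)-\geosrestricted then option~(i) or option~(ii) holds; option~(ii) directly gives ``\(G_2\) is \(O\)-\geosrestricted'', and if instead option~(i) holds --- that is, \(O\subset X_1\), \(G_1\) is \(O\)-\geosrestricted, and \(G_2\) is \(\{t_2\}\)-\geosrestricted --- then \(O\subset X_1\cap X_2=\{s,s'\}\), and ``\(G_2\) is \(\{t_2\}\)-\geosrestricted'' yields ``\(G_2\) is \(O\)-\geosrestricted'' either trivially (if \(O=\emptyset\)) or by the auxiliary observation applied to \(G_2,t_2\) (if \(O=\{s\}\) or \(O=\{s'\}\)). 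Hence \(G\) is \(O\)-\geosrestricted if and only if \(G_2\) is \(O\)-\geosrestricted.

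Essentially the whole proof is thus a case analysis on the Boolean structure of \Cref{lem:geom-3con-R}, mirroring \Cref{cor:geom-2con-for-algo}; the only non-combinatorial step --- and the only place I anticipate even minor difficulty --- is the auxiliary observation, which rests on the elementary topological fact that an uncrossed edge incident to a vertex on the outer face lies itself on the boundary of the outer face.
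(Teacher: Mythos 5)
Your proof is correct and takes essentially the same route as the paper: a Boolean case analysis on the two options of \Cref{lem:geom-3con-R} (mirroring \Cref{cor:geom-2con-for-algo}), supplemented by the observation, derived from \Cref{lem:3con-geom}, that a \(\{t_i\}\)-\geosrestricted graph is also \(\{s\}\)- and \(\{s'\}\)-\geosrestricted because the two uncrossed edges at the degree-two vertex \(t_i\) place both of its neighbours on the outer face. The paper states this auxiliary observation for \(G_1,t_1\) while you apply it to \(G_2,t_2\), which is in fact the instance your (sound) forward direction of item~(i) requires; this difference is immaterial.
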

\begin{proof}
	The proof is very similar to the proof of \Cref{cor:geom-2con-for-algo}
	and we omit the details.
	The only additional observation that is needed is that if \(G_1\) is
	\(\{t_1\}\)-\geosrestricted, then it is also \(\{s\}\)-\geosrestricted
	and \(\{s'\}\)-\geosrestricted because of \Cref{lem:3con-geom}. This 
	is relevant when considering the case in item (i) with 
	\(O\cap \{s,s'\}\neq \emptyset\).
\end{proof}

We are now ready to prove \Cref{thm:3con:geom}.

\begin{proof}[Proof of \Cref{thm:3con:geom}]
	It is obvious that deciding \geosrestricted[ity] is equivalent to 
	deciding \(\emptyset\)-\geosrestricted[ity].

	Assume that we have an \FPT-algorithm \textsc{Solve-i3con-FPT} 
	parameterized by treewidth that decides whether internally 
	\(3\)-connected graphs are \(O\)-\geosrestricted. 

    Consider an arbitrary input graph \(G\). Using 
	\Cref{cor:geom-2con-for-algo}, we make a subroutine, which we call 
	\textsc{Solve-1conn}, that formulates a pruning-leaves dynamic 
	program along the BC-tree of \(G\) and makes calls to a 
	subroutine, called \textsc{Solve-2conn}, that decides whether 
	2-connected graphs are 
	\(O\)-\geosrestricted for some given \(O\) of cardinality at most 1.
    More precisely, the subroutine is the following, 
	where initially \textsc{Solve-1con} is called on \(G, \emptyset\).
	
	\begin{algorithmic}[1]
	\label{alg:geo1}
    \Procedure{Solve-1con}{\(G',O\)} 
		\If{\(G'\) is 2-connected}
			\State \Return \Call{Solve-2con}{\(G',O\)}
		\EndIf
		\State \(\lambda \gets\) leaf in BC-Tree of \(G'\)
		\State \(s \gets \) cutvertex of \(\lambda\)
		\State \(G'_1\gets \lambda\)
        \State \(G'_2\gets G-(V(\lambda)\setminus \{s\})\)
		\If{\(O \subset V(G'_2)\) and \Call{Solve-2con}{\(G'_1,\{s\}\)}}
			\State \Return \Call{Solve-1con}{\(G'_2,O\)}
		\EndIf
		\If{\(O\subset V(G'_1)\) and \Call{Solve-2con}{\(G'_1,O\)}}
			\State \Return \Call{Solve-1con}{\(G'_2,\{s\}\)}
		\EndIf
        \State \Return \false
	\EndProcedure
    \end{algorithmic}
    
    To see the correctness of \textsc{Solve-1conn}, we use 
	\Cref{cor:geom-2con-for-algo}, where \(X_1=V(L)=V(G'_1)\)
	and \(X_2=V(G'_2)= V(G')\setminus (X_1\setminus \{s\}))\).
	The conditions tested in the algorithm are precisely the conditions
	stated in \Cref{cor:geom-2con-for-algo}.
	
    The subroutine \textsc{Solve-2con} for 2-connected graphs is also 
	formulated as a pruning-leaves dynamic program, but this time along the 
	SPR-tree of the graph and making calls to the subroutine
	\textsc{Solve-i3con-FPT}, which handles internally 3-connected graphs.
	If no R-node is present in the considered subtree, then the 
	corresponding graph is planar and any specified vertex can be placed 
	on the outer face. If the graph is internally 3-connected,
	we can directly invoke \textsc{Solve-i3con-FPT}.
	Otherwise, there is an R-node with some non-trivial parts attached to it,
	and we use \Cref{cor:geom-3con-R-for-algo} to dynamically reduce the
	instance. To explain the details, we introduce first some notation.	

	Consider an edge \(\mu\nu\) of an SPR-tree \(T\) of a 2-connected
	graph \(G'\) and let \(\{s,s'\}\) be the 2-separator defined by the 
	virtual edges of \sk{\mu} and \sk{\nu} that are identified.
	The edge \(\mu\nu\) defines two graphs, which we denote 
	denote \(G'(T,\mu,\nu)\) and \(G'(T,\nu,\mu)\), as follows.
	The graph \(G'(T,\mu,\nu)\) is the graph induced by the 
	vertices contained in the connected component of \(T-\mu\nu\) 
	that contains \(\mu\), where we add a new vertex \(t\) and
	edges \(st\) and \(s't\).
	The graph \(G(T,\nu,\mu)\) is defined symmetrically, but using
	the component of \(T-\mu\nu\) that contains \(\nu\).
	Note, that \(G'(T,\mu,\nu)\) and \(G'(T,\nu,\mu)\) are precisely
	the graphs \(G_1\) and \(G_2\) defined in \Cref{lem:geom-3con-R}
	with respect to the 2-separator \(\{s,s'\}\) (and for \(G'\)).

	We note that a graph is internally 3-connected if and only if
	it has a very particular SPR-tree: it has a single R-node \(x\)
	whose neighbors are either S-nodes representing 3-cycles or 
	P-nodes whose neighbors, distinct from \(x\), are again S-nodes 
	representing 3-cycles. See \Cref{fig:i3con} for an example.

	\begin{figure}
		\centering
		\includegraphics[page=12,width=.96\textwidth]{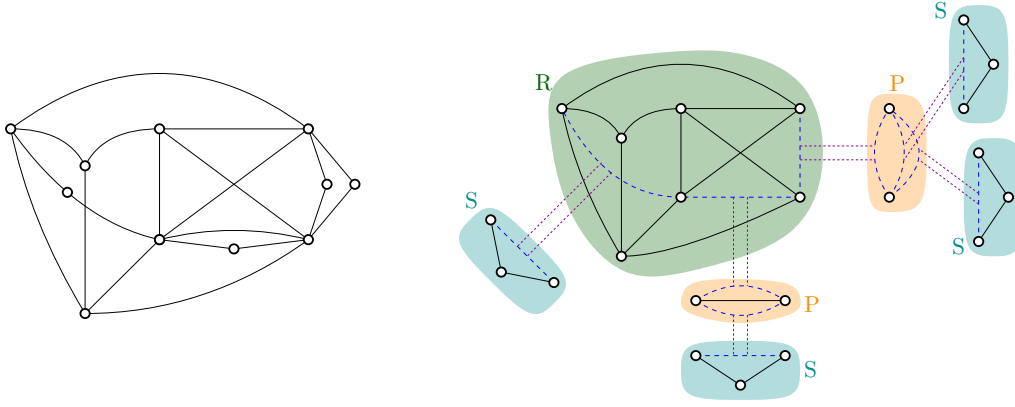}
		\caption{An internally 3-connected graph (left) and its 
			SPR-tree (right). Virtual edges are dashed blue; pairs
			of identified virtual edges are connected with two parallel 
			dotted connections to emphasize the orientation of the 
			identified edges.}
		\label{fig:i3con}
	\end{figure}
	
    If a 2-connected graph \(G'\) is not internally 3-connected,
    then its SPR-tree \(T\) has some edge \(\mu\nu\) such that
    \(G'(T,\mu,\nu)\) is internally 3-connected or a 
	series-parallel graph with at least five vertices.
    Indeed, we can take an edge \(xy\) of \(T\) such
    that \(x\) is the only R-node in the connected component 
	of \(T-xy\) that contains it; then one of the edges
	incident to \(x\) has the desired property, where
	\(x\) may play the role of \(\mu\) or \(\nu\) in the
	sought edge.
	Once we have such an edge \(\mu\nu\) of \(T\),
	both \(G'(T,\mu,\nu)\) and \(G'(T,\mu,\nu)\) are strictly
	smaller than \(G'\) and we can use \Cref{cor:geom-3con-R-for-algo}
	for splitting the problem.
	The notation we use in the algorithm is parallel to the notation
	used in \Cref{cor:geom-3con-R-for-algo} and correctness follows
	from that statement.
	
    \begin{algorithmic}[1]
	\label{alg:geo}
	\medskip
    \Procedure{Solve-2con}{\(G',O\)}
        \If{\(G'\) is series-parallel}
            \State \Return \true
        \EndIf
        \If{\(G'\) is internally 3-connected}
            \State \Return \Call{Solve-i3con-FPT}{\(G',O\)}
        \EndIf
        \State \(\mu\nu \gets\) edge in SPR-tree \(T\) of \(G'\) such
			that \(G'(T,\mu,\nu)\) is internally 3-connected or
			a series-parallel graph with at least five vertices
        \State \(G'_1\gets G'(T,\mu,\nu)\), where \(t_1\) is the new vertex in \(G'_1\)
        \State \(G'_2\gets G'(T,\nu,\mu)\), where \(t_2\) is the new vertex in \(G'_2\)
		\If{\(O \subset V(G'_2)\) and \Call{Solve-2con}{\(G'_1,\{t_1\}\)}}
            \State \Return \Call{Solve-2con}{\(G'_2,O\)}
        \EndIf
        \If{\(O\subset V(G'_1)\) and \Call{Solve-2con}{\(G'_1,O\)}}
            \State \Return \Call{Solve-2con}{\(G'_2, \{t_2\}\)}
        \EndIf
        \State \Return \false
    \EndProcedure 
    \end{algorithmic}

    To bound the running time, we note that each \(2\)-connected component 
	of \(G\) has the role of \(\lambda\) in at most one call of 
	\textsc{Solve-1con}. In that call, \textsc{Solve-2con}\((\lambda,\cdot)\) 
	is called at most twice.
    In each of these calls, each \(3\)-connected component of \(\lambda\) has the 
	role of \(\mu\) in at most one call of \textsc{Solve-2con},
	and \textsc{Solve-i3con-FPT}\((G'_1,\cdot)\) is called at most twice
	for a graph \(G'_1\) that contains \sk{\mu}. Therefore, there are at most 
	four calls to \textsc{Solve-i3con-FPT} per 3-connected component of the
	original, input graph \(G\).
	Each call we make in all three algorithms is on a minor \(G'\) of \(G\), 
	and thus the treewidth of all considered graphs is bounded by the treewidth 
	of \(G\).
	Apart from that, we have only a polynomial-time overhead for computing the
	BC- and SPR-trees, manipulating the subgraphs that have to be constructed,
	testing if a graph is internally 3-connected or series-parallel, etc.   
	This finishes the proof of \Cref{thm:3con:geom}.
\end{proof}

\section{Fixed-Parameter Tractable Cases by Treewidth}
\label{sec:fpt}

The aim of this section is to show the following theorems.
\begin{theorem}
\label{thm:main-algo}
    It is \FPT\ parameterized by treewidth to recognize \(\mathcal{S}\)-restricted 1-planar graphs if \(\mathcal{S} \subseteq \{\fullCt,\almostFullCt,\bowtieCt\}\).
\end{theorem}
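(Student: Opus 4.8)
The plan is to combine the reduction of \Cref{thm:3con} with an \FPT-algorithm, based on Courcelle's theorem, for the restricted class of internally \(3\)-connected graphs. Given an input graph \(G\), compute in linear time its SPR-trees and form the skeleton+ \(\skPlus{x}\) of every R-node \(x\). By \Cref{thm:3con}, \(G\) is \srestricted if and only if \(\skPlus{x}\) is \srestricted for every \(3\)-connected component \(x\). Each \(\skPlus{x}\) is internally \(3\)-connected, and it arises from \(G\) by contracting, for each virtual edge \(uv\) of \(\sk{x}\), the bridge attached along \(uv\) down to a length-two \(u\)--\(v\) path (while keeping the edge \(uv\) itself whenever it already belongs to \(G\)); hence \(\skPlus{x}\) is a minor of \(G\), so \(\tw(\skPlus{x})\le\tw(G)\), and \(\sum_x|\skPlus{x}|=O(|G|)\). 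It therefore suffices to give, for internally \(3\)-connected graphs, an algorithm deciding \srestrictedness in \FPT-time parameterized by treewidth, and to run it on every \(\skPlus{x}\).

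For an internally \(3\)-connected graph \(G\) I would write a single \(\mathrm{MSO}_2\)-sentence \(\varphi_{\mathcal S}\) of constant length that holds on \(G\) iff \(G\) admits an \srestricted drawing, and then invoke Courcelle's theorem. The sentence describes the combinatorial skeleton of a planarization. It existentially guesses an edge set \(F\subseteq E(G)\) of crossed edges together with an edge set \(R\subseteq E(G)\) of the two ``rim'' edges of the common \(4\)-cycle of each crossing, and enforces purely combinatorially --- using that, since \(\mathcal S\subseteq\{\fullCt,\almostFullCt,\bowtieCt\}\), the two edges of each crossing lie on a common \(4\)-cycle of \(G\) --- that \(F\cup R\) decomposes into edge-disjoint \(4\)-cycles, each carrying exactly two (necessarily non-adjacent) \(F\)-edges that are opposite on the cycle and two \(R\)-edges, with every \(F\)-edge used by exactly one such cycle. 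This is where the pairing of crossing partners is recovered \emph{implicitly}: a crossed edge finds its partner along its \(4\)-cycle, rather than through an explicit pairing relation, which is exactly what cannot be packed into a constant-length \(\mathrm{MSO}\)-formula. For each guessed \(4\)-cycle \(Q_i\) on vertices \(u_i,v_i,u_i',v_i'\), the sentence reads off from the adjacencies among these four vertices whether the induced crossing is full, almost full, bowtie, arrow, chair or \(\times\), and requires the type to lie in \(\mathcal S\) (a fixed Boolean test on a constant number of adjacency atoms). Finally \(\varphi_{\mathcal S}\) asserts that the planarization is planar: after replacing each crossing pair by a new degree-\(4\) vertex joined to the four vertices of \(Q_i\), the graph contains no subdivision of \(K_5\) or \(K_{3,3}\); since each new vertex is confined to the interior of its \(Q_i\), ``a subdivision path through the crossing vertex'' is simply ``a path that enters and leaves \(Q_i\) through two of its four vertices'', a local definable move, so this condition is also expressible over \((G,F,R)\) by a constant-length formula. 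With \(\varphi_{\mathcal S}\) at hand, Courcelle's theorem yields the \FPT-algorithm for internally \(3\)-connected graphs, and together with the reduction above this proves \Cref{thm:main-algo}.

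The main obstacle is keeping \(\varphi_{\mathcal S}\) both constant-length and correct. The role of the hypothesis \(\mathcal S\subseteq\{\fullCt,\almostFullCt,\bowtieCt\}\) is precisely to make the pairing of crossing edges definable through forced \(4\)-cycles; without such a cycle one would have to quantify over a perfect matching on \(F\) that additionally encodes which edge crosses which, and no short \(\mathrm{MSO}_2\)-formula can do that (consistently with the \NP-hardness that arises once \(\mathcal S\cap\{\arrowCt,\chairCt,\XCt\}\neq\emptyset\)). I expect the two delicate points to be: (i) proving that any family of guessed \(4\)-cycles passing the combinatorial, type, and planarity tests is actually realized by one \srestricted drawing --- here internal \(3\)-connectedness is used to pin down the embeddings of the planarization and to rule out interference between distinct crossing gadgets; and (ii) phrasing planarity of the planarization as a constant-size \(\mathrm{MSO}_2\)-condition even though the planarization has vertices not belonging to \(G\), which the Kuratowski-style reformulation above achieves but must be set up so that routing ``through'' a crossing vertex is captured faithfully.
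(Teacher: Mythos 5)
Your overall skeleton matches the paper's: reduce to internally $3$-connected instances via \Cref{thm:3con} (and the skeleton+ graphs, whose treewidth is controlled), then decide those instances by a constant-length MSO\(_2\)-sentence and Courcelle's theorem, with the hypothesis \(\mathcal{S}\subseteq\{\fullCt,\almostFullCt,\bowtieCt\}\) entering only through the forced $4$-cycles that let you recover the pairing of crossing edges implicitly. The difference, and the gap, is in \emph{how} that pairing is recovered. You guess two edge sets \(F\) (crossing edges) and \(R\) (rim edges) of \(G\) and require that \(F\cup R\) decompose into edge-disjoint $4$-cycles, each with two opposite \(F\)-edges, so that each \(F\)-edge finds its partner as the opposite edge of ``its'' cycle. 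But the association between an \(F\)-edge and its two rim edges is not stored anywhere in \((F,R)\); it must be re-derived from the $4$-cycle structure, and there is no reason this derivation is unambiguous. Concretely, a single edge \(bc\) can serve as a rim edge of two distinct crossings simultaneously (one crossing drawn on each side of \(bc\)), in which case \(bc\) lies on two of your $4$-cycles and the edge-disjointness you impose is violated; so the completeness direction (\srestricted drawing \(\Rightarrow\) formula satisfiable) fails. If you instead drop edge-disjointness, an \(F\)-edge may lie on several admissible $4$-cycles inside \(F\cup R\) with different candidate partners, and the formula can no longer determine which planarization to test for planarity. Internal $3$-connectivity does \emph{not} eliminate this ambiguity; the paper's \Cref{thm:chorddescriptionalwaysexists} only \emph{bounds} it (at most two crossings can share the same witness object), and this is exactly why the paper does not work on \(G\) directly but builds an auxiliary graph \(\tilde G\) with explicit \role{ctp}-vertices (one per pair of a crossing edge and a chord of its $4$-cycle taken from a bounded-treewidth chordal completion) together with two flag pendants, so that a selectable vertex, not a re-derivable coincidence, records which edge is tied to which chord and which of the at most two crossings sharing that chord it belongs to. Your proposal has no analogue of this bounded-ambiguity-plus-flags mechanism, and you yourself defer precisely this point (``delicate point (i)'') without an argument.

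A secondary, fixable imprecision: for the planarity test you say a subdivision path through a crossing vertex is ``a path that enters and leaves \(Q_i\) through two of its four vertices.'' The delicate part is not the existence of such paths but their \emph{internal disjointness} at the crossing vertex (two Kuratowski paths must not both route through the same crossing), which the paper handles by rewriting the vertex-disjointness predicate so that the two subdivision vertices of a crossing pair are treated as identified. Your sketch does not address this, though it could be patched along the paper's lines once the pairing itself is made well defined.
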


\begin{theorem}
\label{thm:geom-algo}
    It is \FPT\ parameterized by treewidth to recognize geometric \(\mathcal{S}\)-restricted 1-planar graphs if \(\mathcal{S} \subseteq \{\fullCt,\almostFullCt,\bowtieCt\}\).
\end{theorem}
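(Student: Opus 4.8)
The plan is to derive the statement from \Cref{thm:3con:geom} together with Courcelle's theorem. By \Cref{thm:3con:geom} it suffices to supply an \FPT-algorithm, parameterized by treewidth, that decides whether an internally \(3\)-connected graph \(G\) is \(O\)-\geosrestricted for a given \(O\subseteq V(G)\) with \(|O|\le 1\); and since treewidth is the parameter, it is enough to express the property ``\(G\) is \(O\)-\geosrestricted'' by an MSO\(_2\)-sentence of constant length over \(G\), with the vertex of \(O\), if it exists, as a distinguished constant.

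I would start from the MSO encoding that is the heart of \Cref{thm:main-algo}. Because \(\mathcal S\subseteq\{\fullCt,\almostFullCt,\bowtieCt\}\), in every \srestricted drawing the two edges of each crossing are opposite edges of a \(4\)-cycle of \(G\); this is exactly what lets one recover, inside an MSO\(_2\)-formula, the pairing of crossed edges after merely guessing the \emph{set} of crossed edges, and hence speak about the planarization \(\mcG^\times\), about the crossing type of each crossing (a local condition on four endpoints), and about the faces of \(\mcG^\times\) (which correspond to certain closed walks in \(G\)), all without a quantifier per crossing; planarity of \(\mcG^\times\) is then captured through the excluded minors \(K_5,K_{3,3}\). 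I take the resulting constant-length formula \(\varphi_{\mathrm{top}}\) as a black box from the (preceding) proof of \Cref{thm:main-algo} and add two ingredients.

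First, the outer face: I existentially quantify over a face \(F_0\) of \(\mcG^\times\) — faces form an MSO-definable collection for internally \(3\)-connected planar graphs, via Tutte's characterization of face boundaries as non-separating induced cycles, adjusted for subdivision vertices — and, when \(O=\{o\}\), require \(o\) to lie on the boundary of \(F_0\); this is equivalent to \(o\) lying on the outer face of the \(1\)-planar drawing, since \(\mcG\) and \(\mcG^\times\) have the same faces. Second, B- and W-configurations: by Thomassen's criterion a \(1\)-planar drawing is equivalent to a geometric one iff it contains neither. Each such configuration is specified by four vertices lying \emph{inside} a closed curve built from one full edge and several edge segments; in \(\mcG^\times\) this closed curve is a short cycle \(Z\) running through crossing vertices — a triangle \(sxs'\) in the B-case and a \(4\)-cycle \(sxs'x'\) in the W-case — and ``inside \(Z\)'' means ``separated from \(F_0\) by \(V(Z)\)''. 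Once a combinatorial embedding and an outer face are fixed this is a purely combinatorial property, hence MSO-expressible, so ``no B-configuration'' and ``no W-configuration'' become constant-length clauses \(\psi_B,\psi_W\); conjoining \(\exists F_0\,(o\in\partial F_0\wedge\psi_B\wedge\psi_W)\) to the matrix of \(\varphi_{\mathrm{top}}\), under the same existential over the guessed crossed edges, yields a constant-length sentence \(\varphi_{\mathrm{geo}}\), and Courcelle's theorem applied to \(G\) (whose treewidth is bounded by assumption) gives the \FPT-algorithm.

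The step I expect to be the main obstacle is justifying that the embedding of \(\mcG^\times\) relevant here is \emph{combinatorially determined}, which is what makes ``inside'' meaningful in MSO. This is where internal \(3\)-connectivity is crucial: one shows that — after invoking \Cref{lem:3con-geom}, so that no edge at a degree-two vertex is crossed, and possibly after one more connectivity-based simplification in the spirit of \Cref{sec:3con} — a prospective planarization of an internally \(3\)-connected graph is again internally \(3\)-connected, essentially because the \(4\)-cycle forced at every crossing together with the degree-\(4\) crossing vertex forms a rigid \(W_4\)-like gadget; and an internally \(3\)-connected planar graph has a unique embedding up to reflection and the choice of outer face, which is exactly the ambiguity absorbed by the quantifier over \(F_0\). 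Making this preservation statement precise, together with the exact interpretation over \(G\) of the embedded \(\mcG^\times\) (including that its faces and the ``inside'' relation are MSO-definable over \(G\)), is the technical core; the remainder — verifying that the bounded-treewidth guarantee survives (it does, since Courcelle is applied to \(G\) itself) and that the recursion of \Cref{thm:3con:geom} is fed the \(O\)-version decided by \(\varphi_{\mathrm{geo}}\) — is bookkeeping.
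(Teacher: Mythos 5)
Your overall architecture matches the paper's: reduce to internally \(3\)-connected instances via \Cref{thm:3con:geom}, extend the MSO\(_2\)-encoding behind \Cref{thm:main-algo} by an existentially quantified outer face together with clauses forbidding B- and W-configurations, and apply Courcelle's theorem (to the bounded-treewidth auxiliary graph \(\tilde G\)). However, the step you yourself flag as the technical core is exactly where the genuine gaps lie, and the paper resolves it by a different device than the one you sketch.

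Concretely, three things are missing. First, you omit the normalization to \emph{crossing confined} drawings (the paper's \Cref{obs:oneoptimal-geom}). Without it, your picture of a planarized B-configuration as a triangle \(s\)-\(c\)-\(s'\) is wrong: the spine edge \(ss'\) may itself be crossed, in which case the closed curve of the configuration passes through a second crossing vertex and your clause \(\psi_B\) misses it. Second, your rendering of ``\(b,b'\) lie inside \(Z\)'' as ``\(b,b'\) are separated from \(F_0\) by \(V(Z)\)'' only holds in one direction for free (inside implies separated); the converse needs a proof, and using separation as a proxy makes \(\psi_B,\psi_W\) potentially too strong, i.e.\ liable to reject valid drawings. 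The paper instead proves exact characterizations (\Cref{lem:confidnessB,lem:confidnessW}): using internal \(3\)-connectivity of \(G\) itself --- not of its planarization --- a B- or W-configuration forces essentially the whole graph inside the configuration's closed curve, so the configuration exists if and only if the outer face boundary of a suitably derived planarization is \emph{exactly} the small set consisting of the spine and the crossing vertex (or vertices); this is what makes clauses of the form \(C\neq\{b_1,b_2,c\}\) both sound and complete. Third, your route to a well-defined ``inside'' rests on the unproven claim that the planarization of an \srestricted drawing of an internally \(3\)-connected graph is again internally \(3\)-connected; the paper never establishes or needs this statement in that form, and it is not obvious, since replacing a crossing pair by a \(4\)-claw deletes edges of \(G\) and can a priori create new \(2\)-separators. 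Each of these omissions corresponds to a lemma the paper actually has to prove, so as it stands the proposal is an outline of the right strategy rather than a proof.
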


\textbf{Throughout the rest of this section, we assume that 
\(\mathcal{S} \subseteq \{\fullCt, \almostFullCt, \bowtieCt\}\) is fixed};
in the statements we do not quantify over \(\mathcal{S}\) anymore. 

Before moving onto the details, we discuss the general approach
in comparison to the approach used for the crossing number.
Parameterizing by treewidth allows us to employ the powerful 
machinery of Courcelle's theorem, which has previously been useful in the 
design of fixed-parameter algorithms for variants of the crossing number 
problem. Specifically, in an MSO-formula one can quantify over a bounded 
number of pairs of subdivision vertices which should correspond to crossing 
vertices in the planarization of a desired drawing, ensuring that said 
planarization contains no \KuraSub (and in the geometric setting also 
no planarized B- or W-configuration).
Explicitly quantifying over each pair of subdivision vertices which are 
``fused'' into a crossing vertex results in a formula whose length depends
on the number of allowed crossings. This quantification is suitable when the 
number of crossings is a parameter, but it is prohibitive for obtaining an 
FPT-algorithm via Courcelle's theorem if the number of crossings 
is not bounded in the parameter. This difficulty applies to recognizing 
(geometric) \srestricted graphs because a linear number of edges may
participate in crossings.
    
On a high level, we overcome this difficulty by quantifying over an 
unpaired set of subdivision vertices (this can be done using a single variable) 
and deriving an unambiguous pairing of them. 
Such an unambiguous pairing is possible for internally 3-connected graphs and
uses that the endpoints of each pair of crossing edges lie on a 4-cycle
when the type of the crossing belongs to 
\(\mathcal{S}\subseteq\{\fullCt, \almostFullCt, \bowtieCt\}\).
How the internally 3-connectivity and the \(4\)-cycles are used will become clear 
in the proof of \Cref{thm:chorddescriptionalwaysexists}.
The lift from internally 3-connected graphs to the general case follows
from the results of \Cref{sec:3con}, namely \Cref{thm:3con,thm:3con:geom}. 

Deriving a pairing and formulating an appropriate MSO-encoding for the usual,
non-geometric setting is carried out in \Cref{sec:fpt:mso}, where we
conclude with the proof of \Cref{thm:main-algo}. 
In \Cref{sec:fpt:mso-geom} we extend the formula to the geometric setting 
and derive \Cref{thm:geom-algo}.

\subsection{An MSO-Encoding in the usual, topological setting}
\label{sec:fpt:mso}

We first focus on the usual, non-geometric setting for 1-planarity.
Because of \Cref{thm:3con}, it suffices to restrict our attention
to internally 3-connected graphs. Consider a fixed input graph \(G\)  
that is internally 3-connected.

We will show that there is a constant-length MSO-encoding 
of a graph \(G\) being \srestricted on an auxiliary graph \(\tilde{G}\)
derived from \(G\) and whose treewidth is not significantly larger 
than that of \(G\).
Our MSO-encoding will express that there exist some pairs of edges
such that the subgraph of \(G\) induced by each pair is
isomorphic to a graph in \(\mathcal{S}\) and such that, 
if we assume these pairs are all pairs of crossing edges, 
the resulting planarization contains no \KuraSub[s], i.e.\ is planar

Expressing the non-existence of a constant-size obstruction such 
as a \KuraSub in a planarization after explicitly identifying pairs 
of crossing edges is not difficult and has previously been done 
e.g.\ in the fixed-parameter algorithm for computing the crossing 
number~\cite{Grohe04}. To be explicit, before 
applying Courcelle's theorem, one can subdivide each edge with 
a vertex corresponding to a possible crossing vertex in the 
planarization in case that edge is crossed.
Then, given explicit access to pairs of crossing edges, 
the MSO-formula precludes the existence of the constant-size 
obstruction in the planarization by explicitly forbidding vertices 
and edges that can form that obstruction when the original edge 
relation of the graph is replaced by two vertices being endpoints 
of an original edge or an edge subdivision vertex for an edge \(e\) 
and a neighbor of the subdivision vertex for the edge \(e'\) for 
which \(e\) and \(e'\) are a pair of crossing edges.
Since this ingredient is quite standard by now, we will focus the rest
of our disccusion on identifying the pairs of crossings edges.

To build the auxiliary graph \(\tilde{G}\), 
we augment \(G\) in two ways, as follows.
Similarly as in \cite{Grohe04}, for every edge of \(G\) we insert 
a \(P_3\) between its endpoints and label the new middle vertex 
with \role{cr}. In MSO, selecting a set of \role{cr}-vertices encodes 
that the corresponding edges are crossing edges.

However, merely selecting a set of crossing edges is insufficient 
to describe and then check planarity of a prospective planarization.
We also need to identify which pairs of crossing edges cross each other.
To pair up the crossing edges, ideally we would want MSO to be able to 
select for a crossing edge the two end vertices of the edge it crosses, 
e.g., by \(\tilde{G}\) having a vertex between the four end vertices 
of every pair of edges that could possibly cross.
But this is not possible to do in general without \(\tw(\tilde{G})\) 
becoming unbounded in \(\tw(G)\).
Instead, we do the next best thing:
We augment \(\tilde{G}\) in such a way that MSO can select for every 
crossing edge one of the end vertices of the edge it crosses.
Because in every crossing type from \(\{\fullCt,\almostFullCt,\bowtieCt\}\) 
there is a \(4\)-cycle that includes both crossing edges, 
this essentially asks MSO to select for each crossing, 
two paths of length two that together form a \(4\)-cycle.

To do so, we compute a tree decomposition \((T,\chi)\) of \(G\) of 
width \(\mathcal{O}(\tw(G))\)~\cite{MR1417901,BodlaenderDDFLP16} and 
use this to construct a chordal completion \(\hat{G}\) of treewidth 
\(\mathcal{O}(\tw(G))\).
For every \(4\)-tuple of edges \((e,f,e',f')\in E(G)^4\) forming a 
\(4\)-cycle in \(G\) in that order, the chordality of \(\hat{G}\) 
implies that there is at least one \emph{chord} for \((e,f,e',f')\): 
an edge \(g \in E(\hat{G})\setminus\{e,f,e',f'\}\) connecting two 
vertices of \(e\cup f\cup e'\cup f'\).
For each \(4\)-cycle \((e,f,e',f')\) whose vertices induce a graph 
underlying a crossing type in \(\mathcal{S}\), if \(e\) and \(e'\) 
would cross, and each chord \(g\) for \((e,f,e',f')\),
we iteratively add the following graph elements to \(\tilde{G}\):
A \(P_3\) between the endpoints of \(g\) with a \emph{new} middle 
vertex labeled \role{ctp} and an edge between the \role{ctp}-vertex 
and the \role{cr}-vertex that was inserted for \(e\). We call 
\((e,g)\) a \emph{chord-tied pair}, where \(e\) is the 
\emph{crossing edge} and \(g\) the \emph{chord edge} of the 
chord-tied pair. See \Cref{fig:mso:supergraph}, left. Each 
chord-tied pair is uniquely identified by a \role{ctp}-vertex. 
    
\begin{figure}
    \centering
	\includegraphics[page=13,scale=1.1]{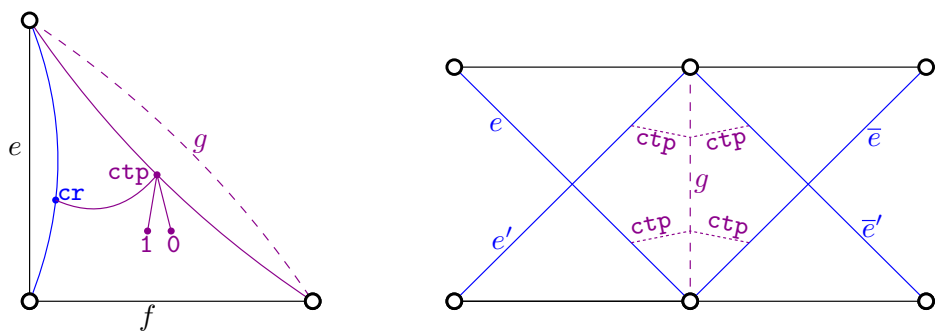}
    \caption{Left: chord-tied pair \((e,g)\) for
		a 4-tuple \((e,f,\cdot,\cdot)\); its flag pendants are
		also displayed. 
		Right: chord edge with two crossings, where the relation
			\role{ctp} is displayed schematically.}
	\label{fig:mso:supergraph}
\end{figure}

At this point, for a given chord-tied pair, there might be multiple 
other chord-tied pairs with the same chord edge. See for example the 
right of \Cref{fig:mso:supergraph}, where there are four chord-tied 
pairs \((e,g)\), \((e',g)\), \((\overline{e},g)\), and \((\overline{e}',g)\) 
with the same chord edge \(g\). This is why we add \emph{flag pendants} 
labeled \role{0}, \role{1}, etc, to each \role{ctp}-vertex in \(\tilde{G}\) 
to encode a flag for every chord-tied pair; see the left side 
of \Cref{fig:mso:supergraph} for an illustration. The flags are used
to make pairings of chord tied pairs with the same chord.
Selecting a pendant vertex is equivalent to selecting a triple 
\((e,g,b)\) where \((e,g)\) is a chord-tied pair and \(b\in \{0,1,\dots\}\) 
is a flag. A pair of edges \(\{e,e'\}\) is a \emph{crossing pair} if \(e\) 
and \(e'\) are the crossing edges of two different, selected chord-tied 
pairs with the same chord edge and the same flag. Equivalently, 
\(\{e,e'\}\) is a crossing pair if two pendant vertices representing
the triples \((e,g,b)\) and \((e',g,b)\) with a common 
chord \(g\) and flag \(b\) are selected.

As we show in \Cref{thm:chorddescriptionalwaysexists}, 
for internally 3-connected \srestricted graphs it suffices to assume 
that at most most two pairs of crossing edges have the same chord edge,
which means that at most four chord-tied pairs share a chord edge.
This means that we only need to model how to make at most two pairs of
crossing edges for each chord, and this justifies that in the construction 
of \(\tilde{G}\) we use precisely \emph{two} pendant flags labeled 
\role{0} and \role{1} per \role{ctp}-vertex. This bound on the number of 
flags is relevant for obtaining a formula of bounded length in MSO.

This addition of two pendant vertices with flags per \role{ctp}-vertex 
completes the construction of \(\tilde{G}\). First we bound its treewidth. 

\begin{lemma}
\label{lem:twbound}
    From a tree decomposition of \(G\) of width \(k\), we get in 
	polynomial time a tree decomposition of \(\tilde{G}\) of width 
	\(\mathcal{O}(k)\) or deduce that \(G\) is not 1-planar.
\end{lemma}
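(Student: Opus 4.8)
The claim is that from a width-$k$ tree decomposition of $G$ we obtain a width-$\mathcal{O}(k)$ tree decomposition of $\tilde G$ in polynomial time (or conclude $G$ is not 1-planar). The construction of $\tilde G$ adds three kinds of gadgets: (1) for each edge of $G$, a $P_3$ with a \role{cr}-vertex in the middle; (2) for each qualifying $4$-cycle $(e,f,e',f')$ and each chord $g$ in the chordal completion $\hat G$, a $P_3$ between the endpoints of $g$ with a \role{ctp}-vertex in the middle, plus an edge from that \role{ctp}-vertex to the \role{cr}-vertex of $e$; (3) for each \role{ctp}-vertex, two flag pendants. The plan is to start from a width-$\mathcal{O}(k)$ tree decomposition of $\hat G$ — which we already have, since $\hat G$ is a chordal completion of treewidth $\mathcal{O}(k)$ arising from $(T,\chi)$ — and then insert all the new vertices by attaching them, with a constant number of companions, to bags that already contain the relevant original vertices.

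\textbf{Where the subdivision vertices go.} For a \role{cr}-vertex $c_e$ subdividing edge $e=uv$: in the tree decomposition of $\hat G$ (equivalently of $G$), pick a bag $\chi(t)$ containing both $u$ and $v$ (such a bag exists since $uv\in E(G)$), attach a new node $t'$ adjacent to $t$ with bag $\{u,v,c_e\}$. For a \role{ctp}-vertex $d$ arising from a chord $g=xy$ of a $4$-cycle on vertices $\{a,b,c,d'\}$ with crossing edge $e$: we need a bag containing $x,y$ (endpoints of $g$) and also $c_e$ (the \role{cr}-vertex of $e$, to which $d$ must be adjacent). Here is the crux. Since $g=xy\in E(\hat G)$, some bag $\chi(t_g)$ of the decomposition of $\hat G$ contains both $x$ and $y$. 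Also $c_e$ lives in a bag adjacent to a bag containing both endpoints of $e$. The difficulty is that a single bag need not simultaneously contain $\{x,y\}$ and the endpoints of $e$ — but $\{x,y\}\subseteq \{a,b,c,d'\}$ = the four cycle-vertices, and the endpoints of $e$ are also among $\{a,b,c,d'\}$. So what I really need is a bag containing all four vertices of the $4$-cycle simultaneously; then I can route $c_e$ and $d$ near it.

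\textbf{Getting all four cycle-vertices in one bag — the main obstacle.} This is exactly the point where chordality and the bounded number of chords matters. The four vertices of the $4$-cycle, together with its chord $g$, induce a graph in $\hat G$ with a chord, hence by chordality the vertices $\{a,b,c,d'\}$ do \emph{not} necessarily all lie in one bag of the decomposition of $\hat G$ — but $g$ triangulates the $4$-cycle into two triangles, and in a tree decomposition of a chordal graph every clique lies in a bag. So each triangle lies in a bag; the two bags $B_1\supseteq$ (one triangle) and $B_2\supseteq$ (other triangle) are connected in $T$ and share the edge $g$, so every bag on the $T$-path between them contains $\{x,y\}$. I attach the $P_3$ for $g$ (the \role{ctp}-vertex $d$ plus its two flag pendants) to a bag containing $\{x,y\}$. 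The one remaining subtlety is the edge from $d$ to $c_e$: I should place $c_e$'s gadget adjacent to a bag containing the two endpoints of $e$; since $e$ is a cycle-edge, its endpoints form a subset of a triangle's vertex set that appears in $B_1$ or $B_2$, so I can choose to hang $c_e$ off that bag, then hang $d$ (with flags) off a refinement of that same bag augmented with $\{x,y,c_e,d\}$ — a constant number of extra vertices. The bound ``at most four chord-tied pairs per chord'' (from \Cref{thm:chorddescriptionalwaysexists}) is what keeps the number of gadgets hanging off any one bag — and hence the blow-up — bounded by a constant, not by $k$; if a chord had unboundedly many associated chord-tied pairs we would need the width to grow, and this is precisely the situation in which we instead \textbf{deduce $G$ is not 1-planar} (or rather, not \srestricted), since more than four crossing edges sharing a chord contradicts the structure guaranteed by \Cref{thm:chorddescriptionalwaysexists}. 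Finally I verify the three tree-decomposition axioms for the augmented decomposition: every new edge ($u c_e$, $v c_e$, $x d$, $y d$, $d c_e$, $d$–pendant) sits in a newly added bag; each new vertex appears in exactly one new bag (so its bag-subtree is a single node, trivially connected); and original vertices' subtrees only grow along the $T$-path additions, staying connected. The width increases by at most the constant number of companions we ever add to a bag, giving $\mathcal{O}(k)$, and the whole construction is clearly polynomial-time.
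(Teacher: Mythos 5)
Your overall architecture (hang constant-size leaf bags off the decomposition of \(\hat G\), and use the fact that the crossing edge \(e\) together with its chord \(g\) spans a triangle of \(\hat G\), hence lies in a common bag) is the same as the paper's, and that triangle observation is correct. However, there are two genuine gaps. First, your placement of the \role{cr}-vertices does not work as described: you put each \role{cr}-vertex \(c_e\) into a single new leaf bag \(\{u,v,c_e\}\), but \(c_e\) must also share a bag with the \role{ctp}-vertex of \emph{every} chord-tied pair whose crossing edge is \(e\), and these arise from different \(4\)-cycles whose triangle-bags can be scattered across the tree. Your later suggestion to ``hang \(c_e\) off'' each such bag places \(c_e\) in a disconnected collection of leaves, violating the connectivity axiom; repairing this by adding \(c_e\) along the connecting paths reintroduces exactly the width blow-up you are trying to avoid. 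The paper instead inserts \(c_e\) into \emph{all} bags containing both endpoints of \(e\) (a connected subtree, so every later leaf can attach to a bag already holding \(c_e\)), and pays for this with a density argument: a bag of size \(\mathcal{O}(k)\) can contain both endpoints of at most \(4|\chi(t)|-8=\mathcal{O}(k)\) edges of a 1-planar graph, so only \(\mathcal{O}(k)\) \role{cr}-vertices are added per bag.

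Second, your explanation of the ``or deduce that \(G\) is not 1-planar'' branch is wrong. The construction of \(\tilde G\) adds a \role{ctp}-vertex for every qualifying \(4\)-cycle/crossing-edge/chord combination unconditionally; the four-chord-tied-pairs-per-chord bound of \Cref{thm:chorddescriptionalwaysexists} concerns chord descriptions extracted from an actual drawing, not the number of \role{ctp}-vertices in \(\tilde G\), and in any case an unbounded number of chord-tied pairs sharing a chord does \emph{not} increase the width, since each one lives in its own constant-size leaf bag. The escape clause in the lemma comes from the edge-density check above: if some bag of the given decomposition induces more edges of \(G\) than a 1-planar graph on that many vertices can have, one reports that \(G\) is not 1-planar; otherwise the bound \(\mathcal{O}(k)\) on the augmented bags goes through.
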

\begin{proof}
    By construction, \(\tilde{G}\) arises from \(\hat{G}\) by inserting 
	\(P_3\)s between endpoints of edges of \(G\), subdividing copies 
	of some chord edges, adding pendant vertices to the corresponding 
	subdivision vertices, and connecting the vertices used in the 
	subdivision for chord-tied pairs.	

	For the chordal completion \(\hat{G}\) of \(G\) we can use the 
	same tree decomposition \((T,\chi)\) as \(G\). This tree 
	decomposition \((T,\chi)\) can in turn be modified into a tree 
	decomposition of \(\tilde{G}\) as follows.
	First insert the \role{cr}-vertices for each edge of \(\hat{G}\) 
	present in \(\chi(t)\) into all bags that contain both vertices
	of that edge. Since a 1-planar graph with \(n\) vertices has at 
	most \(4n-8\) edges~\cite{PachT97,Schumacher86}, in each bag we 
	get \(\mathcal{O}(\tw(G))\) vertices.
    Then, for each chord-tied pair \((e,g)\), there exists a node 
	\(t\) in \(T\) so that the vertices of \(e\cup g\) are contained 
	in the bag \(\chi(t)\). We insert a new node \(t'\) to \(T\), 
	make it adjacent to \(t\), and add to its bag the vertices of 
	\(e\cup g\), the \role{cr}-vertex of \(e\), the new 
	\role{ctp}-vertex of \(g\) and both its flag pendants. 
	This step uses bags of constant size.
	
	Note that if the edge density of the induced subgraph \(G[\chi(t)]\)
	is too large for some node \(t\) of \(T\), we can directly conclude
	that \(G\) is not 1-planar. Otherwise we obtain the desired
	tree decomposition of \(\tilde{G}\).
\end{proof}
  
The next definition allows us to associate pairs of crossing edges
to chord-tied pairs with flags. This is the object we will be searching
for to decide whether an internally 3-connected graph is \srestricted.
It will be crucial that the existence of such an object can be expressed 
using MSO.

\begin{definition}[chord description]
\label{def:cd}
	A set \ChordDescription of chord-tied pairs with flags \textup(i.e., 
	a set of tuples \((e,g,b)\), where \((e,g)\) is a chord-tied 
	pair and \(b\in \{0,1\}\) is a flag\textup) is a 
	\emph{chord description} of \(G\), if it satisfies the following 
	conditions:
    \begin{enumerate}
	\item \label[req]{req:uniqueCTPpartner}
		for every chord-tied pair in \ChordDescription there is 
		exactly one other chord-tied pair in \ChordDescription with 
		the same chord edge and flag; if \((e,g,b)\) and \((e',g,b)\) 
		belong to \ChordDescription, we say that \(\{e,e'\}\) form a 
		\emph{crossing pair of \ChordDescription}; 	 
	\item \label[req]{req:crEdgeInAtMostOneCTP}
		every edge is the crossing edge of at most one chord-tied 
		pair in \ChordDescription; 
	\item \label[req]{req:gCDisPlanar}
		the graph \(G_{\ChordDescription}\), where, starting with \(G\), 
		every crossing pair \(\{e,e'\}\) of \ChordDescription is replaced 
		by a 4-claw on \(V(\{e,e'\})\), is planar; and 
	\item \label[req]{req:onlySrestrCrPairs}
		every crossing pair \(\{e,e'\}\) of \ChordDescription can cross 
		\srestricted[ly], i.e., the vertices in \(e \cup e'\) induce 
		a crossing type from \(\mathcal{S}\) where \(e\) and \(e'\) cross.
	\end{enumerate}
\end{definition}

We can derive a drawing from a chord description as follows.
\begin{lemma}
\label{thm:chorddescription}
	Let \ChordDescription be a chord description of \(G\). Then there is 
	an \srestricted drawing of \(G\) where only crossings pairs of 
	\ChordDescription cross.
\end{lemma}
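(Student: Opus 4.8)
The plan is to produce the drawing by starting from a planar drawing of \(G_{\ChordDescription}\), which exists by \Cref{req:gCDisPlanar}, and undoing the planarization locally at each claw that encodes a crossing pair. Fix a planar drawing \mcG of \(G_{\ChordDescription}\). Each crossing pair \(\{e,e'\}\) of \ChordDescription was created from \(G\) by deleting \(e\) and \(e'\) and inserting a degree-four vertex \(x_{ee'}\) whose neighbourhood is exactly \(V(e)\cup V(e')\); by \Cref{req:onlySrestrCrPairs} the subgraph induced on these vertices is a crossing type, which lives on four vertices, so writing \(e=ab\) and \(e'=a'b'\) the endpoints \(a,b,a',b'\) are pairwise distinct, and in particular \(e\) and \(e'\) are non-adjacent. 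Around each such \(x_{ee'}\) I would choose a small open disk \(D_{x_{ee'}}\) that meets \mcG only in \(x_{ee'}\) and in four sub-arcs of the claw edges incident to \(x_{ee'}\), with all these disks pairwise disjoint.

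Next I would perform the local surgery: inside each \(D_{x_{ee'}}\), delete \(x_{ee'}\) and the four incident arc pieces and reconnect them into an arc of \(e\) (joining the piece towards \(a\) with the piece towards \(b\)) and an arc of \(e'\) (joining the pieces towards \(a'\) and \(b'\)); if the rotation of \(x_{ee'}\) in \mcG is the alternating cyclic order \((a,a',b,b')\), draw these two arcs crossing transversally in a single point, and otherwise, when the rotation is \((a,b,a',b')\), draw them disjointly. Outside \(\bigcup_{ee'} D_{x_{ee'}}\) leave \mcG unchanged, and call the result \mcG[']. Since the surgery reinstates exactly the edges \(e\) and \(e'\) of each crossing pair, removes exactly the dummy vertices, and changes nothing else, \mcG['] is a drawing of \(G\); moreover every crossing of \mcG['] lies inside some \(D_{x_{ee'}}\) and is therefore a crossing between the two edges of a crossing pair of \ChordDescription, which already yields that only crossing pairs of \ChordDescription cross.

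It then remains to verify that \mcG['] is \srestricted. For 1-planarity, I would first note that every edge of \(G\) lies in at most one crossing pair of \ChordDescription: if \(e\) lay in crossing pairs \(\{e,e'\}\) and \(\{e,e''\}\), then \(e\) would be the crossing edge of two chord-tied pairs in \ChordDescription, contradicting \Cref{req:crEdgeInAtMostOneCTP}; hence every edge is crossed at most once in \mcG['], and since a crossing only occurs between the two non-adjacent edges of a crossing pair, no two adjacent edges cross. For the crossing types, a crossing of \mcG['] is between the edges \(e,e'\) of a crossing pair, its four incident vertices are \(V(e)\cup V(e')\), and the subgraph they induce in \(G\) is not changed by the drawing, so by \Cref{req:onlySrestrCrPairs} it is a crossing type from \(\mathcal{S}\) in which \(e\) and \(e'\) are the crossing edges; thus the crossing has a type in \(\mathcal{S}\). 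The step requiring the most care is handling claw centres whose rotation in \mcG is non-alternating: there one cannot realise the crossing of \(e\) and \(e'\), but drawing them disjointly is harmless because the statement only asks that the crossings which do occur be crossing pairs of \ChordDescription (and of an allowed type); the other point to be careful about is keeping each surgery confined to its own small disk, so that it neither creates a crossing with a third edge nor disturbs the planarity of the rest of the drawing.
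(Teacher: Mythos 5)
Your proposal is correct and follows essentially the same route as the paper's proof: take a planar drawing of \(G_{\ChordDescription}\) guaranteed by \Cref{req:gCDisPlanar}, locally undo each \(4\)-claw into its two original edges (crossing them only when the rotation alternates, which the paper also notes may fail for \bowtieCt-crossings), and then invoke \Cref{req:crEdgeInAtMostOneCTP} for 1-planarity and \Cref{req:onlySrestrCrPairs} for the crossing types. Your write-up just makes the local surgery and the disjoint-disk bookkeeping more explicit than the paper does.
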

\begin{proof}
	We start with a planar drawing \(\mcG_{\ChordDescription}\) of 
	\(G_{\ChordDescription}\), which exists by \Cref{req:gCDisPlanar}.
	Then, we un-replace in \(\mcG_{\ChordDescription}\) every 4-claw 
	corresponding to a crossing pair of \ChordDescription by its 
	original edges. This essentially corresponds to undoing a 
	planarization and yields a drawing \mcG of \(G\) where only 
	crossing pairs of \ChordDescription are allowed to cross. 
	(Notice that the un-replacement of 4-claws corresponding to 
	a crossing pair of a \bowtieCt-crossing does not necessarily 
	yield a crossing in \mcG because the endpoints of the edges of the 
	crossing pair might not alternate in the rotation around the center 
	of the 4-claw, but for \(\fullCt\)- and \(\almostFullCt\)-crossings 
	it always does.) By \Cref{req:crEdgeInAtMostOneCTP} and the 
	definition of crossing pairs, each edge is crossed at most 
	once in \mcG. By \Cref{req:onlySrestrCrPairs}, each crossing 
	in \mcG is an \srestricted one. Therefore, \mcG is an 
	\srestricted drawing of \(G\) where only crossing pairs 
	of \ChordDescription may cross.
\end{proof}

Conversely, we can derive a chord description from a hypothetical 
\srestricted drawing of \(G\). Here the internal 3-connectivity 
of \(G\) is important; see \Cref{fig:ambiguous}.
    
\begin{figure}
	\centering
    \includegraphics[page=15]{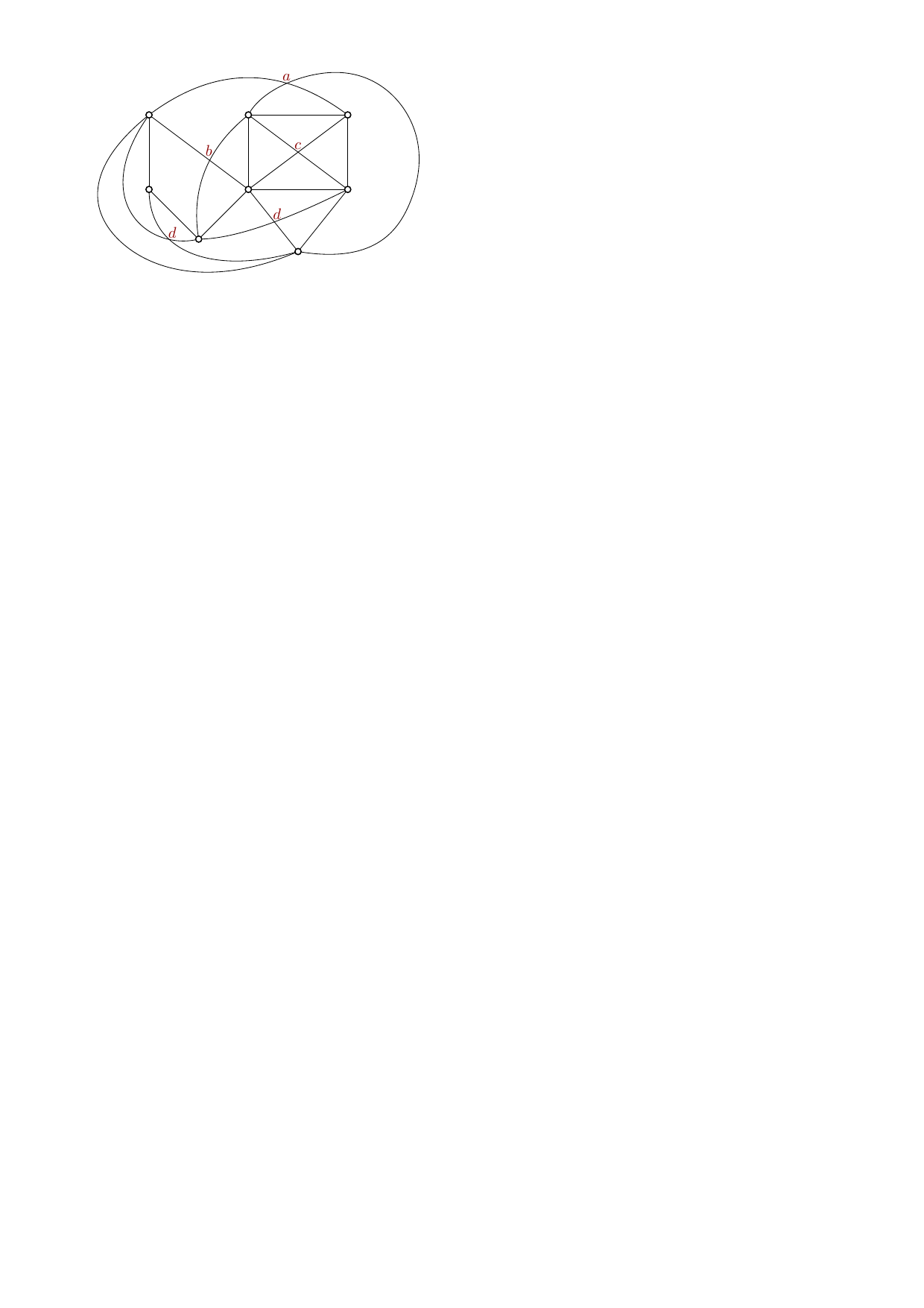}
    \caption{For graphs that are not internally 3-connected, the same
		chord may appear in several chord-tied pairs.}
    \label{fig:ambiguous}
\end{figure}

\begin{lemma}
\label{thm:chorddescriptionalwaysexists}
	Let \(G\) be an internally \(3\)-connected graph that has an \srestricted 
	drawing. Then there exists a chord description \ChordDescription of \(G\).
\end{lemma}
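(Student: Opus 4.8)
The plan is to start from a fixed \srestricted drawing \mcG of \(G\) with the minimum possible number of crossings and to read a chord description off \mcG directly. Consider one crossing of \mcG, say between \(e=uv\) and \(e'=u'v'\). Its crossing type lies in \(\mathcal{S}\subseteq\{\fullCt,\almostFullCt,\bowtieCt\}\), so \(G[\{u,v,u',v'\}]\) contains a \(4\)-cycle having \(e\) and \(e'\) as opposite edges: for \(\fullCt\) either perfect matching between \(\{u,v\}\) and \(\{u',v'\}\) gives one, and for \(\almostFullCt\) and \(\bowtieCt\) exactly one does. I would fix such a \(4\)-cycle \(C=(e,f,e',f')\). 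Since \(\hat G\supseteq G\) is chordal and \(C\) is also a \(4\)-cycle of \(\hat G\), it has a chord \(g\in E(\hat G)\); a chord of a \(4\)-cycle joins its two diagonally opposite vertices, so \(g\) has one endpoint on \(e\) and the other on \(e'\). Reading \(C\) cyclically once starting at \(e\) and once starting at \(e'\) then shows that \((e,g)\) and \((e',g)\) are both chord-tied pairs, with \(e\), respectively \(e'\), as crossing edge.

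The step I expect to be the main obstacle is arranging these choices so that no chord edge is selected for more than two crossings of \mcG; only then do two flags per \role{ctp}-vertex suffice. The easy half is that a \(\fullCt\)- or \(\almostFullCt\)-crossing always has at least two chord candidates available in \(\hat G\), since every edge between \(\{u,v\}\) and \(\{u',v'\}\) that could serve as its chord already lies in \(G\subseteq\hat G\); hence only \(\bowtieCt\)-crossings can be \emph{forced} onto a single chord \(g=\{x,y\}\), namely when a diagonal of the pertinent \(C_4\) is missing from \(\hat G\). The hard part is to show that at most two crossings can be forced onto the same chord \(g=\{x,y\}\): if three were, arising from \(4\)-cycles \(x\,p_i\,y\,q_i\,x\) of \(G\) for \(i=1,2,3\), then \(x\) and \(y\) would both be adjacent to all of \(p_1,q_1,p_2,q_2,p_3,q_3\), and examining the three resulting ``figure-eight'' pieces of \mcG together with the surrounding drawing and its planarization \(\mcG^\times\) one obtains a configuration that makes \(\{x,y\}\) act as a \(2\)-separator with three non-trivial sides -- contradicting internal \(3\)-connectivity of \(G\), for which \(G-S\) has at most one component with more than one vertex whenever \(S\) is a \(2\)-separator. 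Here minimality of the crossing number is used to rule out degenerate coincidences among the \(p_i,q_i\) and needless crossings of the side edges. Granted this bound, a greedy procedure -- process the forced crossings first, giving each its unique chord, and then let each remaining crossing take any candidate chord still used fewer than twice -- yields chords of multiplicity at most two.

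It then remains to assemble and verify the chord description. For each chord \(g\) that was chosen, I would order the at most two crossings assigned to it, assign flag \(0\) to the first and flag \(1\) to the second, and let \ChordDescription consist of the triples \((e,g,b)\) and \((e',g,b)\) over all crossings \(\{e,e'\}\) of \mcG with assigned chord \(g\) and flag \(b\). Then \Cref{req:uniqueCTPpartner} holds by construction, and the crossing pairs of \ChordDescription are precisely the crossings of \mcG. \Cref{req:crEdgeInAtMostOneCTP} holds because \mcG is \(1\)-planar, so every edge is crossed at most once and hence occurs as the crossing edge of at most one triple. \Cref{req:onlySrestrCrPairs} holds because every crossing pair of \ChordDescription is a crossing of \mcG, whose type lies in \(\mathcal{S}\). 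Finally \Cref{req:gCDisPlanar} is immediate, since replacing in \(G\) every crossing pair of \ChordDescription by a \(4\)-claw on its endpoints reproduces exactly the planarization \(\mcG^\times\) of \mcG, which is planar. Thus \ChordDescription is a chord description of \(G\), as required.
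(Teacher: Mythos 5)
Your overall plan is the paper's plan (crossing-minimal \srestricted drawing, one diagonal chord per crossing, a multiplicity bound via internal 3-connectivity, then flags), and your verification of the four conditions at the end is fine. But the middle of the argument has a genuine gap. What is actually needed -- and what the paper proves -- is the unconditional statement that no edge \(g=uv\) of \(\hat G\) can serve as the diagonal chord of \emph{three} distinct crossings of \mcG, whether or not \(g\) is the only candidate for any of them. The proof is exactly the topological argument you gesture at, applied to all three crossings at once: each of the three crossings yields a curve from \(u\) to \(v\) inside the union of its two crossing edges; these three curves are internally disjoint (the six edges are distinct and each is crossed only by its partner, so nothing else may cross the curves), they cut the plane into three regions, at least two of which contain endpoints of the crossing edges; those endpoints have degree at least three because no crossed edge is incident to a degree-two vertex in a crossing-minimal drawing; hence \(\{u,v\}\) is a 2-separator with two non-trivial sides, contradicting internal 3-connectivity. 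Once you have this for arbitrary triples of crossings sharing a chord, any choice of one chord per crossing works and no greedy is needed.

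By contrast, you only establish (and only sketch) the bound for crossings that are \emph{forced} onto \(g\), and then delegate the rest to a greedy procedure that is not justified: if a chord \(g\) were a candidate for many crossings, two forced crossings could saturate \(g\) while the remaining crossings find their alternative candidates saturated as well, and the greedy would get stuck. Ruling this out requires precisely the stronger per-chord bound above, so the forced/unforced distinction buys you nothing and leaves a hole. A secondary error: your claim that every \(\almostFullCt\)-crossing has two chord candidates in \(G\) is false. For \(\almostFullCt\) the induced graph is \(K_4\) minus one edge, there is a unique \(4\)-cycle having the two crossing edges as opposite edges, and exactly one of its two diagonals lies in \(G\) (the other is the missing edge, which need not appear in \(\hat G\)); so \(\almostFullCt\)-crossings can be forced just as \(\bowtieCt\)-crossings can. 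Neither issue is fatal to the approach -- strengthening your separation argument to arbitrary triples, as in the paper, repairs everything -- but as written the proof does not go through.
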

\begin{proof}
	Because of \Cref{lem:3con} and because \(G\) is 2-connected, there 
	is an \srestricted drawing \mcG of \(G\) where no edge incident to 
	a degree-two vertex participates in a crossing.
	For every crossing in \mcG we construct two (different) 
	chord-tied pairs with the same chord edge; such a chord exists
	because a pair of crossing edges is on a 4-cycle which must have 
	a chord in \(\hat{G}\). 
	Let \(\ChordDescription'\) be the family of chord-tied pairs 
	(without flags)	that we obtain this way. Note that we have added
	the chord-tied pairs in groups of two.
	
	We claim that, for each edge \(g\) of \(\hat{G}\), there are at most 
	four chord-tied pairs of the form \((e,g)\) in \(\ChordDescription'\).
	Assume, for the sake of reaching a contradiction, that in
	\(\ChordDescription'\) there are at least six chord-tied pairs with 
	the same chord \(g=uv\). Since we add them in groups of two, this means
	that, for each \(i\in\{1,2,3\}\), we have in \(\ChordDescription'\) 
	the chord-tied pairs \((e_i,g),(e'_i,g)\), where the edges \(e_i\)
	and \(e'_i\) cross in \mcG. Note that the edges 
	\(e_1,e'_1,e_2,e'_2,e_3,e'_3\) are all distinct because each edge
	participates in at most one crossings. See \Cref{fig:mso:3pairs}.
	For each \(i\in\{1,2,3\}\), let \(\gamma_i\) be a simple curve 
	inside \(\mcG(e_i)\cup \mcG(e'_i)\) connecting \(\mcG(u)\) to 
	\(\mcG(v)\). The curves \(\gamma_1,\gamma_2,\gamma_3\) are pairwise
	disjoint but for the common endpoints, and thus split
	the plane into three open regions. At least two of those open regions
	contain vertices of the edges \(e_1,e'_1,e_2,e'_2,e_3,e'_3\).
	Let \(x\) and \(y\) be two such vertices in different regions.
	Because of 1-planarity, each path in \(G\) connecting \(u\) to \(v\) 
	has to pass through \(u\) or \(v\). 
	Because in \mcG no edge incident to a degree-two vertex participates 
	in a crossing, both \(x\) and \(y\) have degree at least three.
	Therefore, \(\{u,v\}\) separates two vertices that have degree
	at least three, and thus \(G\) is not internally 3-connected.
	This finishes the proof of the claim.
	
	\begin{figure}
		\centering
		\includegraphics[page=14]{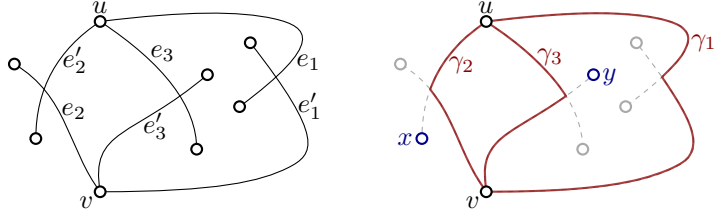}
		\caption{Left: Three pairs of edges crossing in \mcG with 
			the same chord \(g=uv\).
			Right: The curves \(\gamma_1,\gamma_2,\gamma_3\) 
				connecting \(u\) to \(v\) and the vertices \(x\)
				that they separate.}
		\label{fig:mso:3pairs}
	\end{figure}
	
    Since in \(\ChordDescription'\) we have at most four chord-tied 
	pairs with the same chord and we add them in pairs with the same 
	chord, we can choose the flags such that there are 
	at most two chord-tied pairs get the same flag.
	It is clear from the resulting chord description satisfies 
	the four conditions of \Cref{def:cd}.
\end{proof}

\subparagraph*{Building an MSO$_2$-formula.}
The intuition behind the forthcoming formula is to select a subset \(D\) 
of \role{ctp}-vertices, which is equivalent to selecting a subset of 
chord-tied pairs, and select a subset \(M\) of flag pendants\footnote{It is
a reasonable choice to use \(M\) here for the flags because they are 
used to make a \emph{matching} between the chord-tied pairs} such that
for each \role{ctp}-vertex a unique flag \role{0} or \role{1} is selected
and the resulting chord-tied pairs with flags form a chord description. 
Thus, the general structure of the MSO\(_2\)-formula \(\varphi\) 
which we want to apply Courcelle's theorem to will be as follows:
\begin{align} 
	\varphi := \exists D \exists M \ & \forall d \in D \ \role{ctp}(d) \land \forall m \in M \ \role{0}(m) \lor \role{1}(m) \label{line:MSO-overallformula-1}\\
    & \land \forall d \in D \ \exists m \in M \ \big(\mso{adj}{(d,m)} \land \forall m' \in M \ (\mso{adj}{(d,m')} \Rightarrow m = m')\big) \label{line:MSO-overallformula-2}\\
    & \land \forall m \in M \ \exists d \in D \ \mso{adj}{(d,m)} \label{line:MSO-overallformula-3}\\
    & \land \mso{ChordDescription}{(D, M)}, \label{line:MSO-overallformula-4}
\end{align}
where \mso{ChordDescription}{(D,M)} encodes that \(D\) with each chord-tied 
pair corresponding to \(d \in D\) receiving the flag coinciding with the 
label of the unique (by (\ref{line:MSO-overallformula-2})) neighbor 
of \(d\) in \(M\) is a chord description of \(G\).

The encoding of \mso{ChordDescription}{(D,M)} is done formulating 
each of the conditions from \Cref{def:cd}. The description is technical and 
tedious; we use a lemma for each condition from \Cref{def:cd}.

\begin{lemma}
\label{lem:uniqueCTPpartner}
	\Cref{req:uniqueCTPpartner}.\ of \Cref{def:cd} holding for a set of 
	chord-tied pairs arising from a vertex set \(D\) if flagged according 
	to a unique matching to a set \(M\) of flag pendants can be encoded 
	as a constant-length MSO\(_2\)-formula with free variables \(D\) and \(M\).
\end{lemma}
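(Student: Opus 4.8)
The plan is to write, for two first-order variables $d,d'$ ranging over $D$, a constant-length subformula saying ``$d$ and $d'$ are the \role{ctp}-vertices of two chord-tied pairs that share the same chord edge and are given the same flag'', and then to insert it into the standard ``there is exactly one'' pattern. First I would fix a helper predicate $\mso{Orig}{(x)} := \neg\role{cr}(x) \wedge \neg\role{ctp}(x) \wedge \neg\role{0}(x) \wedge \neg\role{1}(x)$; by the construction of $\tilde G$ the only vertices carrying one of these four labels are the inserted \role{cr}- and \role{ctp}-vertices and the flag pendants, so $\mso{Orig}$ holds precisely for the vertices of $G$, and in particular for every chord endpoint. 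Recalling that the \role{ctp}-vertex $d$ of a chord-tied pair $(e,g)$ with $g=uv$ is adjacent in $\tilde G$ to exactly $u$, $v$, the \role{cr}-vertex of $e$, and its two flag pendants, its only two $\mso{Orig}$-neighbours are the endpoints of its chord edge. Hence I would set
\[ \mso{SameChord}{(d,d')} := \exists u\,\exists v\ \big( u \neq v \wedge \mso{Orig}{(u)} \wedge \mso{Orig}{(v)} \wedge \mso{adj}{(u,d)} \wedge \mso{adj}{(v,d)} \wedge \mso{adj}{(u,d')} \wedge \mso{adj}{(v,d')} \big), \]
and argue that, if $(e,g)$ and $(e',g')$ are the chord-tied pairs of $d$ and $d'$, then $\mso{SameChord}{(d,d')}$ holds iff $d$ and $d'$ have two common $\mso{Orig}$-neighbours, which --- since each has exactly its two chord endpoints as $\mso{Orig}$-neighbours and $\hat G$ is simple --- is equivalent to $g=g'$.

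For the flags, I would use that the surrounding formula $\varphi$ (via lines (\ref{line:MSO-overallformula-2})--(\ref{line:MSO-overallformula-3})) already forces every $d \in D$ to have a unique neighbour in $M$, namely a flag pendant whose label is exactly the flag assigned to the chord-tied pair of $d$; thus ``$d$ and $d'$ receive the same flag'' becomes
\[ \mso{SameFlag}{(d,d')} := \exists m \in M\, \exists m' \in M\ \big( \mso{adj}{(d,m)} \wedge \mso{adj}{(d',m')} \wedge ((\role{0}(m) \wedge \role{0}(m')) \vee (\role{1}(m) \wedge \role{1}(m'))) \big). \]
\Cref{req:uniqueCTPpartner} then says that every $d \in D$ has exactly one $d' \in D$ with $d' \neq d$, $\mso{SameChord}{(d,d')}$, and $\mso{SameFlag}{(d,d')}$; ``exactly one'' is expressed by the usual idiom $\exists d'\,(\ldots \wedge \forall d''\,((\ldots) \Rightarrow d'' = d'))$. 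The whole formula is a Boolean combination of boundedly many quantified subformulas with free set variables $D$ and $M$, so its length does not depend on $G$, which is precisely the claim.

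The only step that needs genuine care is the equivalence behind $\mso{SameChord}$: one must check carefully, from the construction of $\tilde G$, that the $\mso{Orig}$-neighbourhood of a \role{ctp}-vertex is exactly the two-element endpoint set of its chord edge, so that ``two common $\mso{Orig}$-neighbours'' coincides with ``literally the same chord edge'' and is not accidentally satisfied by two chords that share only a single endpoint (this is why the clause $u \neq v$ is essential). I would also note explicitly that this encoding uses neither internal $3$-connectivity of $G$ nor the fact that only two flags are available: those assumptions enter only later, in \Cref{thm:chorddescriptionalwaysexists}, to guarantee that a chord description \emph{exists} for every \srestricted graph; the MSO encoding of \Cref{req:uniqueCTPpartner} itself is entirely uniform.
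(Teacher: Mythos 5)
Your proposal is correct and follows essentially the same route as the paper: the paper's $\mso{ppartner}$ formula likewise existentially quantifies two distinct non-pendant common neighbours of $d$ and $d'$ to certify a shared chord edge, compares the labels of their matched flag pendants in $M$, and wraps this in the standard ``exactly one partner'' idiom. Your $\mso{Orig}$ predicate is marginally more restrictive than the paper's (which only excludes flag-pendant labels), but this changes nothing essential, and your explicit justification that two distinct chords of the simple graph $\hat G$ cannot share two endpoints is exactly the right point to check.
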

\begin{proof}
	As a building block, we describe a formula \(\mso{ppartner}{(M,d,d')}\) 
	which expresses for distinct \(d,d' \in D\) whether \(d'\) can be a 
	potential partner of \(d\) according to \Cref{req:uniqueCTPpartner}.\
	of \Cref{def:cd}. Formally, 
	\begin{align}
		\mso{ppartner}&{(M,d,d')} := \notag\\
		&\exists u \exists v \ u \neq v \land \neg\role{0}(u) \land \neg\role{1}(u) \land \neg\role{1}(v) \land \neg\role{1}(v)\label{line:ppartner1}\\
        &\qquad\land \mso{adj}{(u,d)} \land \mso{adj}{(v,d)}\label{line:ppartner2}\\
        &\qquad\land \mso{adj}{(u,d')} \land \mso{adj}{(v,d')}\label{line:ppartner3}\\
        &\qquad\land \forall m \in M \ \forall m' \in M \ (\mso{adj}{(d,m)} \land \mso{adj}{(d',m')}) \Rightarrow (\role{0}(m) \Leftrightarrow \role{0}(m'))\label{line:ppartner4}.
	\end{align}
	Obviously, this formula has constant length.
	(\ref{line:ppartner1}-\ref{line:ppartner3}) asserts that \(d\) and \(d'\) 
	have the same chord edge endpoints and hence the same chord edge. 
	(\ref{line:ppartner4}) asserts that \(d\) and \(d'\) have the same 
	matched flag pendant in \(M\).

	We can now obtain the desired formula straightforwardly by using
	\begin{align*}\forall d \in D\ \exists d' \in D & \ d \neq d' \land \mso{ppartner}{(M,d,d')}\\
        & \land \forall d'' \in D \ \mso{ppartner}{(M,d,d'')} \Rightarrow d'' = d \lor d'' = d'. 
	\qedhere
	\end{align*}
\end{proof}

\begin{lemma}
\label{lem:crEdgeInAtMostOneCTP}
	\Cref{req:crEdgeInAtMostOneCTP}.\ of \Cref{def:cd} holding for a set 
	of chord-tied pairs arising from a vertex set \(D\) if flagged according 
	to a unique matching to a set \(M\) of flag pendants can be encoded as 
	a constant-length MSO\(_2\)-formula with free variables \(D\) and \(M\).
\end{lemma}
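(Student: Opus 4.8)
The plan is to read \Cref{req:crEdgeInAtMostOneCTP}.\ of \Cref{def:cd} off a single adjacency of \(\tilde G\). By the construction of \(\tilde G\), the \role{ctp}-vertex \(d\) identifying a chord-tied pair \((e,g)\) is joined by exactly one edge to a \role{cr}-vertex, namely the \role{cr}-vertex that was inserted for the crossing edge \(e\); its remaining neighbours are the two endpoints of \(g\) and its two flag pendants, none of which is a \role{cr}-vertex. Since distinct edges of \(G\) receive distinct \role{cr}-vertices, the crossing edge of the chord-tied pair represented by \(d\) is determined by this unique \role{cr}-neighbour of \(d\). Hence \Cref{req:crEdgeInAtMostOneCTP}.\ is equivalent to the statement that no two distinct vertices of \(D\) have a common \role{cr}-neighbour.

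First I would turn this equivalence into the formula
\[
	\mso{OneCrEdge}{(D)} \;:=\; \forall d \in D \; \forall d' \in D \; \forall c \;\bigl(\, \role{cr}(c) \land \mso{adj}{(c,d)} \land \mso{adj}{(c,d')} \;\Rightarrow\; d = d' \,\bigr),
\]
which plainly has constant length and uses only the free variable \(D\); the flags in \(M\) play no role for this condition, so \(M\) may be left as an unused free variable (or, for a self-contained subformula, one conjoins \(\role{ctp}(d)\land\role{ctp}(d')\) inside the quantifier block). Then I would verify correctness: in the context where \mso{ChordDescription}{(D,M)} is invoked, conjunct (\ref{line:MSO-overallformula-1}) of \(\varphi\) already forces every element of \(D\) to be a \role{ctp}-vertex, so the antecedent \(\role{cr}(c)\land\mso{adj}{(c,d)}\land\mso{adj}{(c,d')}\) is satisfiable exactly when \(d\) and \(d'\) name chord-tied pairs with the same crossing edge; thus \(\mso{OneCrEdge}{(D)}\) holds precisely when no edge is the crossing edge of two distinct chord-tied pairs in the selected set, i.e.\ precisely \Cref{req:crEdgeInAtMostOneCTP}.

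The only point requiring care — and the closest thing to an obstacle, though it is really just bookkeeping — is confirming the two structural facts used above: that a \role{ctp}-vertex has exactly one \role{cr}-neighbour, and that a \role{cr}-vertex cannot represent two different edges. Both are immediate from the construction of \(\tilde G\): one fresh \role{cr}-vertex is added per edge of \(G\), one fresh \role{ctp}-vertex per chord-tied pair, and exactly one \role{ctp}--\role{cr} edge is created for each chord-tied pair. In particular, when a chord edge \(g\) happens to be an edge of \(G\), its own \role{cr}-vertex is \emph{not} adjacent to the \role{ctp}-vertex of any chord-tied pair \((e,g)\) — that \role{ctp}-vertex lies on a separate subdivided copy of \(g\) — so crossing edges and chord edges are never conflated. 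All told, the substance of the lemma is simply that this local adjacency information is captured by a fixed-size MSO\(_2\) subformula.
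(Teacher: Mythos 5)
Your proposal is correct and matches the paper's proof essentially verbatim: the paper uses the formula \(\forall x\, \role{cr}(x)\Rightarrow \big(\forall d \in D\ \forall d' \in D\ (\mso{adj}{(x,d)} \land \mso{adj}{(x,d')}) \Rightarrow d = d'\big)\), which is your \(\mso{OneCrEdge}{(D)}\) up to reordering of quantifiers. Your additional bookkeeping about each \role{ctp}-vertex having exactly one \role{cr}-neighbour (and the chord's own \role{cr}-vertex not being conflated with it) is accurate with respect to the construction of \(\tilde G\) and correctly justifies why this adjacency test captures the condition.
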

\begin{proof}
	An appropriate formula is given by the following:
	\begin{align}
		\forall x\  \role{cr}(x)\Rightarrow  
		\big(\forall d \in D\ \forall d' \in D \left( \mso{adj}{(x,d)} \land \mso{adj}{(x,d')}\right) \Rightarrow d = d'\big).
		\label{line:credgeinatmostonectp}
	\end{align}
	This is obviously constant-length.
	(\ref{line:credgeinatmostonectp}) quantifies over \(x\) being the 
	subdivision vertex for an edge and constrains such \(x\) to be in at most 
	one chord-tied pair in \(D\), as desired.
\end{proof}

\begin{lemma}
\label{lem:gCDisPlanar}
	\Cref{req:gCDisPlanar}.\ of \Cref{def:cd} holding for a set of 
	chord-tied pairs arising from a vertex set \(D\) if flagged according 
	to a unique matching to a set \(M\) of flag pendants can be encoded as 
	a constant-length MSO\(_2\)-formula with free variables \(D\) and \(M\).
\end{lemma}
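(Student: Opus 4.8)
The goal is to produce a constant-length MSO\(_2\)-formula, with free variables \(D\) and \(M\), expressing that the planarization \(G_{\ChordDescription}\) is planar. The plan is \emph{not} to reason about planarity directly, but to build inside \(\tilde G\) an MSO-definable graph \(H\) that is isomorphic to \(G_{\ChordDescription}\) — obtained from \(D\), \(M\), and one auxiliary existentially-quantified set \(R\) — and then to assert, via the by-now routine encoding used for the crossing number~\cite{Grohe04}, that \(H\) contains no \KuraSub.

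For \(H\) I would take as vertex set the original vertices of \(G\) (those carrying none of the labels \role{cr}, \role{ctp}, \role{0}, \role{1}) together with a set \(R\) of \role{cr}-vertices, where \(R\) is required to contain \emph{exactly one} of the two \role{cr}-vertices \(x_e,x_{e'}\) of every crossing pair \(\{e,e'\}\) of the chord description encoded by \((D,M)\), and nothing else; the role of \(R\) is to name the single centre of each \(4\)-claw, and since its precise choice only renames these centres, \(H\) is isomorphic to \(G_{\ChordDescription}\) for every valid \(R\). The adjacency of \(H\) is given by a constant-length formula \(\mso{adjH}{(D,M,R,x,y)}\): for two original vertices \(u,v\) it holds iff \(uv\in E(G)\) (some \role{cr}-vertex is adjacent to both) and \(uv\) is not a crossing edge (that \role{cr}-vertex is adjacent to no vertex of \(D\)); for an original vertex \(u\) and some \(x_e\in R\) it holds iff \(u\) is an endpoint of \(e\) or of the crossing-partner \(e'\) of \(e\) (the latter meaning that \(u\) is adjacent to the \role{cr}-vertex tied, through \role{ctp}-vertices of \(D\), to \(x_e\) by the relation captured by the building block \(\mso{ppartner}{(M,\cdot,\cdot)}\) from the proof of \Cref{lem:uniqueCTPpartner}); and it never holds between two vertices of \(R\). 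Checking that \(R\) has the required form — exactly one \role{cr}-vertex per crossing pair, the crossing pairs again being described through \(\mso{ppartner}\) — is likewise a constant-length formula \(\rho(D,M,R)\).

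It then remains to say that \(H\) is planar, which by Kuratowski's theorem~\cite{Kuratowski30} means that \(H\) contains neither a subdivision of \(K_5\) nor one of \(K_{3,3}\). For each of these two fixed graphs \(F\) the existence of an \(F\)-subdivision in \(H\) is expressible by a constant-length MSO\(_2\)-formula \(\psi_F(D,M,R)\) in the standard way: existentially guess the at most six branch vertices among the vertices of \(H\) and, for every edge of \(F\), a vertex set; require these sets to be pairwise disjoint and disjoint from the branch vertices, and require, for each edge \(xy\) of \(F\), that the two corresponding branch images lie in the same connected component of the subgraph of \(H\) induced on the associated set together with those two branch images. Crucially, every adjacency and every connectivity test here is taken with respect to \(\mso{adjH}{(D,M,R,\cdot,\cdot)}\) rather than the adjacency of \(\tilde G\); since \(H\) has adjacencies — between a centre in \(R\) and the endpoints of the \emph{other} edge of its pair — that are not edges of \(\tilde G\), one cannot quantify over edge sets of \(H\), which is exactly why the encoding uses only vertex-set quantification mediated by \(\mso{adjH}\). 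The formula witnessing \Cref{req:gCDisPlanar} is then \(\exists R\,\big(\rho(D,M,R)\land\neg\psi_{K_5}(D,M,R)\land\neg\psi_{K_{3,3}}(D,M,R)\big)\), which has constant length and free variables \(D\) and \(M\).

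The main obstacle is the bookkeeping forced by the fact that each crossing pair carries \emph{two} \role{cr}-vertices, of which we must use exactly one as the \(4\)-claw centre. The naive symmetric shortcuts are treacherous: identifying \(x_e\) and \(x_{e'}\) by an edge and contracting produces a graph that is \emph{not} planarity-equivalent to \(G_{\ChordDescription}\) — already for \(K_4\), whose unique \fullCt-crossing planarizes to the planar wheel \(W_4\), that model contains a \(K_{3,3}\) — while keeping both as parallel centres would require a non-obvious planarity argument. Introducing the existential set \(R\) sidesteps this entirely, at the cost only of having to verify carefully that \(\mso{adjH}\) describes \(G_{\ChordDescription}\) on the nose (correctly excluding the chordal-completion chords, the chord \(P_3\)'s, the \role{ctp}-vertices and the flag pendants) and that the vertex-set encoding of \KuraSub[s] is faithful over \(\mso{adjH}\). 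None of this needs a new idea beyond what the preceding lemmas already set up.
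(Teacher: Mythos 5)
Your proof is correct and follows the same overall strategy as the paper: derive the crossing pairs from \(D\) and \(M\) via the relation \mso{ppartner}{(M,\cdot,\cdot)} (the paper packages this as \mso{crPair}{(D,M,x,y)} on the \role{cr}-vertices) and then express planarity of the planarization by forbidding \KuraSub[s]. The only genuine divergence is in how the two \role{cr}-vertices of a crossing pair are fused into the single \(4\)-claw centre. The paper keeps the standard MSO\(_2\) Kuratowski encoding quantifying over \emph{edge sets} of \(\tilde G\) --- restricted to edges joining a \role{cr}-vertex to a non-\role{ctp}-vertex, which excludes exactly the spurious direct edges your \(K_4\)/\(K_{3,3}\) example warns about --- and rewrites the incidence and disjointness tests so that \(x_e\) and its \mso{crPair}{}-partner are treated as one vertex. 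You instead existentially choose a representative set \(R\), define the interpreted adjacency \mso{adjH}{}, and fall back on vertex-set-only connectivity over that defined relation. Both yield constant-length formulas with free variables \(D\) and \(M\): the paper's version avoids the extra quantified set and the re-derivation of path connectivity over an interpreted adjacency, while yours is the more modular ``MSO interpretation'' phrasing and makes the isomorphism with \(G_{\ChordDescription}\) completely explicit. No gap either way.
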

\begin{proof}
	It will be useful to reuse the \mso{ppartner}{(M,d,d')}-formula from the 
	proof of \Cref{lem:uniqueCTPpartner}: it has constant length and 
	expresses that two distinct elements \(d\) and \(d'\) 
	of \(D\) are incident to the same \role{ctp}-labeled vertex, i.e.\ the 
	corresponding chord-tied pairs share a chord edge, and matched to 
	equally-labeled flag pendants in \(M\), i.e.\ the corresponding 
	chord-tied pairs have the same flags. 
	In other words, it expresses that the crossing edges of the 
	chord-tied pairs associated with \(d\) and \(d'\) are a crossing pair.
	We capture this relationship directly for a pair \(x,y\) of 
	\role{cr}-subdivision vertices of crossing edges as follows:
	\begin{align*}
		\mso{crPair}{(D,M,x,y)} := \
			&\role{cr}(x) \land \role{cr}(y)\\
			&\land \exists d \in D\ \exists d' \in D\ 
			   \big( d' \neq d \land \mso{ppartner}{(M,d,d')}\\
			&\phantom{\land \exists d \in D\ \exists d' \in D\ \big(}~ 
				\land \mso{adj}{(x,d)}\land \mso{adj}{(y,d')} \big).
	\end{align*}

	We now need to express the planarity of a graph that is derived from the 
	(subdivided) original graph by replacing (subdivided) edges of a crossing 
	pair by a 4-claw with the crossing pair's endpoints as leaves in MSO.
	Classically, planarity of a graph is MSO-encoded via a constant length 
	formula by precluding the existence of a \KuraSub, i.e.\ the choice of five 
	or six vertex sets and 10 or 9 disjoint edge sets that realize the 
	corresponding internally disjoint paths forming subdivisions of \(K_5\) 
	or \(K_{3,3}\). We modify this formula for our purposes by adapting how 
	we treat edges when speaking about connectivity of vertex sets and edge sets 
	that form paths connecting these vertex sets.
	Firstly, we allow our edge sets for connectivity within vertex sets and paths 
	between vertex sets to only consist of edges, one of whose endpoints is a 
	\role{cr}-vertex and the other of which is not a \role{ctp}-vertex.
	By the construction of \(\tilde{G}\), this implies that we only consider 
	subdivided original edges for \KuraSub[s]. 
	It remains to realize the connections implied by the replacement of edges 
	of a crossing pair by a 4-claw.	We do so by replacing vertex-disjointness 
	conditions on arbitrary edge sets \(X\) and \(X'\), usually written as
	\begin{equation}\label{eq:replacethis}
		\forall e \in X\ \forall e'\in X'\ \nexists v \ 
			\big(\mso{inc}{(v,e)} \land \mso{inc}{(v,e')}\big),
	\end{equation}
	by 
	\begin{equation}\label{eq:replacewiththis}
		\forall e \in X\ \forall e'\in X'\ \nexists v \ 
			\big(\mso{inc}{(v,e)} \land \left(\mso{inc}{(v,e')} \lor \exists v' \ \mso{crPair}{(D,M,v,v')} \land \mso{inc}{(v',e')}\right)\big).
	\end{equation}
	It is obvious that both constraining the edges to be between \role{cr} and 
	non-\role{ctp}-vertices as well as replacing each use of \eqref{eq:replacethis}
	with \eqref{eq:replacewiththis} leaves the planarity-encoding formula constant 
	length. Moreover the graph described on the constrained edge set with the 
	described modifications is precisely the one we want to describe planarity for, 
	as the subdivision vertices of edges in a crossing pair are now identical 
	with each other in terms of considered adjacency and incidence relations, 
	thereby behaving like the center of a four claw whose leaves are the endpoints 
	of the edges of the crossing pair.
\end{proof}

\begin{lemma}
\label{lem:onlySrestrCrPairs}
    \Cref{req:onlySrestrCrPairs}.\ of \Cref{def:cd} holding for a set of 
	chord-tied pairs arising from a vertex set \(D\) if flagged according 
	to a unique matching to a set \(M\) of flag pendants can be encoded as 
	a constant-length MSO\(_2\)-formula with free variables \(D\) and \(M\).        
\end{lemma}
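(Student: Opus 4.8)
The plan is to express \Cref{req:onlySrestrCrPairs} by quantifying, for every crossing pair of the prospective chord description, over the four endpoints of its two crossing edges, and then checking in MSO that the subgraph of \(G\) induced on those four vertices, with the two crossing edges singled out, is one of the crossing types in \(\mathcal{S}\). Since \(\mathcal{S}\) is fixed and has at most three elements, this amounts to a bounded disjunction of bounded-length type checks, so the resulting formula is constant-length.

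First I would set up two auxiliary constant-length subformulas over \(\tilde G\). The predicate \(\mso{origVertex}{(w)} := \neg\role{cr}(w)\land\neg\role{ctp}(w)\land\neg\role{0}(w)\land\neg\role{1}(w)\) picks out the original vertices of \(G\) among those of \(\tilde G\). The predicate \(\mso{origAdj}{(a,b)} := \exists z\ \big(\role{cr}(z)\land \mso{adj}{(z,a)}\land\mso{adj}{(z,b)}\big)\) is meant to hold, for original \(a\neq b\), exactly when \(ab\in E(G)\): the construction of \(\tilde G\) inserts only \(P_3\)s and pendants and never an edge directly joining two original vertices, so the only \(\role{cr}\)-vertex of \(\tilde G\) adjacent to both \(a\) and \(b\) is the one inserted for the edge \(ab\), which exists iff \(ab\in E(G)\). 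I would also record the simple fact that the non-\(\role{ctp}\) neighbours of a \(\role{cr}\)-vertex \(x\) in \(\tilde G\) are precisely the two endpoints of the edge \(x\) was inserted for, so these two endpoints are definable by \(\mso{adj}{(x,\cdot)}\land\mso{origVertex}{(\cdot)}\).

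Reusing \(\mso{crPair}{(D,M,x,y)}\) from the proof of \Cref{lem:gCDisPlanar} — which holds exactly when \(x\) and \(y\) are the \(\role{cr}\)-vertices of the two crossing edges of some crossing pair of the chord description — the condition becomes: for all \(x,y\) with \(\mso{crPair}{(D,M,x,y)}\) there exist \(u_1,u_2,v_1,v_2\) that are pairwise distinct, with \(u_1,u_2\) the endpoints of \(x\)'s edge and \(v_1,v_2\) the endpoints of \(y\)'s edge, and such that the adjacency pattern on \(\{u_1,u_2,v_1,v_2\}\) realises some \(\tau\in\mathcal{S}\). For the last part I would take a disjunction over the constantly many \(\tau\in\mathcal S\); letting \(k\) denote the number of the four \emph{cross-edges} \(u_1v_1,u_1v_2,u_2v_1,u_2v_2\) that are edges of \(G\) (each checked via \(\mso{origAdj}{}\)), the induced type with \(u_1u_2\) and \(v_1v_2\) as crossing edges is \(\fullCt\) iff \(k=4\), it is \(\almostFullCt\) iff \(k=3\), and it is \(\bowtieCt\) iff \(k=2\) and the two present cross-edges are vertex-disjoint, i.e.\ \(\mso{origAdj}{(u_1,v_1)}\land\mso{origAdj}{(u_2,v_2)}\) with the other two negated, or the symmetric alternative. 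Each disjunct mentions only the four quantified vertices and a bounded number of \(\mso{origAdj}{}\)-instances, so the formula is constant-length, and by the observations above it holds iff every crossing pair of the chord description induces a type in \(\mathcal S\) with its crossing edges in the right role, which is exactly \Cref{req:onlySrestrCrPairs}.

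The only delicate point, and the main (though modest) obstacle, is getting the \(\bowtieCt\) disjunct right: when exactly two cross-edges are present, the graph induced on the four vertices is a \(4\)-cycle (bowtie) precisely if those two edges form a perfect matching between \(\{u_1,u_2\}\) and \(\{v_1,v_2\}\), whereas if they share a vertex the type is \(\arrowCt\), which must be rejected since \(\arrowCt\notin\mathcal S\). The remaining cases \(k\le 1\) give \(\chairCt\) or \(\XCt\) and are rejected automatically, as neither lies in \(\mathcal S\). Finally, if the two crossing edges of a crossing pair are not vertex-disjoint, the four endpoints cannot be pairwise distinct, so the formula is (correctly) false on it; such a crossing pair does not arise alongside the other conditions of a chord description, but the formula behaves correctly on it in isolation as well.
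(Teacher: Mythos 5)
Your proposal is correct and follows essentially the same route as the paper: reuse \(\mso{crPair}{(D,M,x,y)}\), recover the four endpoints as the non-\role{ctp} neighbours of the two \role{cr}-vertices, and check with an explicit constant-length formula that they induce a crossing type from \(\mathcal{S}\) with the two crossing edges in the crossing role. The only difference is cosmetic: the paper quantifies the endpoints universally (guarding with \(v_1=v_2 \lor v_3=v_4\)) and leaves \(\mso{induceGraphFromS}\) implicit, whereas you quantify existentially and spell out the type check via the count of present cross-edges, including the correct distinction between \bowtieCt and \arrowCt when exactly two cross-edges are present.
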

\begin{proof}
	It will be useful to reuse the \mso{crPair}{(D,M,x,y)}-formula from the 
	proof of \Cref{lem:gCDisPlanar}: it has constant length and expresses 
	that two elements \(x\) and \(y\) of \(D\) are the \role{cr} subdivision 
	vertices of the two edges in a crossing pair when \(D\) is labeled 
	according to its matching to \(M\).
	We can now describe the endpoints of these edges as the non-\role{ctp} 
	neighbors of \(x\) and \(y\). By construction of \(\tilde{G}\), 
	this set can only be of size precisely four.

	We can describe the required adjacency relationships for four vertices 
	\(v_1, \dotsc, v_4\) inducing a graph from \(\mathcal{S}\) by an explicit 
	constant-length MSO\(_2\)-formula \mso{induceGraphFromS}{(v_1, \dotsc, v_4)}.

	Putting everything together, a formula as described by the lemma statement 
	hence is:
	\begin{align*}
		\forall x \forall y\ \mso{crPair}{&(D,M,x,y)} \Rightarrow \Big( \forall v_1 \dots \forall v_4 \\ & \big(\bigwedge_{i\in \{1,\dots,4\}} \neg \role{ctp}(v_i)
		 \land \mso{adj}{(v_1,x)}
		  \land \mso{adj}{(v_2,x)}
		 \land \mso{adj}{(v_3,y)} \land \mso{adj}{(v_4,y)}\big)\\
		& \Rightarrow \big( v_1 = v_2 \lor v_3 = v_4 \lor \mso{induceGraphFromS}{(v_1, \dotsc, v_4)}\big)\Big). \qedhere
	\end{align*}
\end{proof}
    
Combining 
\Cref{lem:uniqueCTPpartner,lem:crEdgeInAtMostOneCTP,lem:gCDisPlanar,lem:onlySrestrCrPairs} 
we obtain that \Cref{def:cd} holding for a set \(D\) of chord-tied pairs if 
flagged according to a unique matching to a set \(M\) of flag pendants can be 
encoded as a constant-length MSO\(_2\)-formula with free variables \(D\) and \(M\). 
Together with \Cref{thm:chorddescription,thm:chorddescriptionalwaysexists} 
we obtain the following.
    
\begin{corollary}
\label{cor:varphi}
	There is a constant-length MSO\(_2\)-formula \(\varphi\) such that, if
	\(G\) be an internally \(3\)-connected graph, then 
	\(\tilde{G} \models \varphi\) if and only if \(G\) is \srestricted.
\end{corollary}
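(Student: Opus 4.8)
The plan is to instantiate $\varphi$ exactly as displayed in (\ref{line:MSO-overallformula-1})--(\ref{line:MSO-overallformula-4}), taking $\mso{ChordDescription}{(D,M)}$ to be the conjunction of the four constant-length subformulas supplied by \Cref{lem:uniqueCTPpartner,lem:crEdgeInAtMostOneCTP,lem:gCDisPlanar,lem:onlySrestrCrPairs}, one per condition of \Cref{def:cd}. First I would record that $\varphi$ has constant length: the outer quantifier block (\ref{line:MSO-overallformula-1})--(\ref{line:MSO-overallformula-3}) is a fixed expression that does not depend on $G$, and each of the four conjuncts of $\mso{ChordDescription}{(D,M)}$ is constant-length by the cited lemmas, so their conjunction is too. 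It is here that using only the two flag labels \role{0} and \role{1} per \role{ctp}-vertex is essential---which is exactly what \Cref{thm:chorddescriptionalwaysexists} provides---since with an unbounded number of flags the flag-handling subformulas could not be kept constant-length.

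Next I would pin down the semantic correspondence between satisfying assignments $(D,M)$ of the outer block and chord descriptions of $G$. If $D,M$ witness $\varphi$, then (\ref{line:MSO-overallformula-1}) forces $D$ to consist of \role{ctp}-vertices and $M$ of flag pendants, while (\ref{line:MSO-overallformula-2})--(\ref{line:MSO-overallformula-3}) force the incidence relation between $D$ and $M$ to be a perfect matching that assigns to each $d\in D$ exactly one of its own flag pendants (each flag pendant being attached to a single \role{ctp}-vertex). By the construction of $\tilde G$, such a $d$ names a chord-tied pair $(e,g)$ and the label of its matched pendant names a flag $b\in\{0,1\}$, so $(D,M)$ determines a set $\Psi=\{(e,g,b)\}$ of chord-tied pairs with flags; and by the four lemmas, $\mso{ChordDescription}{(D,M)}$ from (\ref{line:MSO-overallformula-4}) holds if and only if $\Psi$ satisfies \Cref{req:uniqueCTPpartner,req:crEdgeInAtMostOneCTP,req:gCDisPlanar,req:onlySrestrCrPairs}, i.e.\ if and only if $\Psi$ is a chord description of $G$. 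Conversely, a chord description $\Psi$ yields a witnessing pair $(D,M)$ by letting $D$ collect the \role{ctp}-vertices of the chord-tied pairs occurring in $\Psi$ and $M$ collect the corresponding flag pendants (the pendant labeled by the flag assigned to that chord-tied pair in $\Psi$), and one checks directly that $(D,M)$ satisfies (\ref{line:MSO-overallformula-1})--(\ref{line:MSO-overallformula-4}).

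Finally I would combine this correspondence with the transfer results already in hand. If $\tilde G \models \varphi$, a witnessing pair $(D,M)$ yields a chord description $\Psi$ of $G$, and \Cref{thm:chorddescription} then provides an \srestricted drawing of $G$, so $G$ is \srestricted. Conversely, if $G$ is internally $3$-connected and \srestricted, \Cref{thm:chorddescriptionalwaysexists} gives a chord description $\Psi$, which via the correspondence produces a pair $(D,M)$ with $\tilde G \models \varphi$; the internal $3$-connectivity of $G$ is used only through this appeal to \Cref{thm:chorddescriptionalwaysexists}. This establishes both directions and proves the corollary. The one step demanding genuine care is the correspondence in the middle paragraph: verifying that the outer block (\ref{line:MSO-overallformula-1})--(\ref{line:MSO-overallformula-3}), together with the way flag pendants sit in $\tilde G$, really makes ``sets $D,M$ of the prescribed vertex types carrying a perfect matching'' interchangeable with ``sets of chord-tied pairs with flags'', so that the hypotheses of the four lemmas---each stated for chord-tied pairs flagged according to a unique matching---are met; everything else is bookkeeping through \Cref{def:cd}.
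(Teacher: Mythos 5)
Your proposal is correct and follows essentially the same route as the paper: the paper also obtains \(\varphi\) by plugging the conjunction of the four subformulas from \Cref{lem:uniqueCTPpartner,lem:crEdgeInAtMostOneCTP,lem:gCDisPlanar,lem:onlySrestrCrPairs} into the displayed template (\ref{line:MSO-overallformula-1})--(\ref{line:MSO-overallformula-4}) and then closes both directions via \Cref{thm:chorddescription} and \Cref{thm:chorddescriptionalwaysexists}. You merely make explicit the correspondence between satisfying pairs \((D,M)\) and chord descriptions, which the paper leaves implicit.
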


Since the formula \(\varphi\) in \Cref{cor:varphi} has constant length and
\(\tw(\tilde{G}) \in \mathcal{O}(\tw(G))\) because of \Cref{lem:twbound}, 
we can apply Courcelle's theorem to conclude that for internally \(3\)-connected graphs
there is an \FPT-algorithm parameterized by treewidth to recognize whether they are 
\(\mathcal{S}\)-restricted 1-planar graphs.
\Cref{thm:main-algo} now follows from \Cref{thm:3con}.

\subsection{An MSO-Encoding in the geometric setting}
\label{sec:fpt:mso-geom}
Now we turn to the geometric setting, that is, recognizing \geosrestricted 
graphs. One difference is that we need to additionally express the 
non-existence of planarized B- and W-configurations in the graph corresponding 
to the planarization of an \srestricted\ drawing.
Further, by the reduction carried out in \Cref{sec:3con:geom} to the case 
of internally \(3\)-connected graphs, 
we need to additionally allow requiring the containment of some distinguished 
vertex on the outer face.

For expressing outer-face requirement, we consider the distinguished vertex 
colored with \role{outer} in the graph when we apply Courcelle's 
theorem, if there is any.

For expressing the non-existence of planarized B- and W-configurations, 
notice that quantifying over all abstract graphs underlying them is trivially 
possible in MSO. To forbid the actual drawings that constitute B- and 
W-configurations, it will be convenient to consider a certain kind of 
well-behaved geometric \srestricted drawings.

\begin{definition}
    A geometric \srestricted drawing \(\mcG\) is \emph{crossing confined} 
	if for each crossing, say between edges \(e\) and \(e'\), no other edge 
	between any pair of vertices among the endpoints of \(e\) and \(e'\) 
	is involved in a crossing. 
\end{definition}

\begin{obs}
\label{obs:oneoptimal-geom}
	If there is a geometric \srestricted drawing of \(G\) then there is 
	also a crossing confined one. Every outer face vertex in the old drawing 
	is also an outer face vertex in the new drawing. 
\end{obs}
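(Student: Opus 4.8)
The plan is to take a geometric \srestricted drawing \(\mcG\) of \(G\) that uses the \emph{fewest} crossings among all geometric \srestricted drawings (if the outer face requirement of \Cref{sec:3con:geom} is in force, one additionally respects it) and to show that such a drawing is already crossing confined, the outer-face claim falling out because all modifications below are local. By the characterization of geometric \(1\)-planarity (Thomassen), \(\mcG\) contains neither a B- nor a W-configuration, and this is the property we will contradict. Suppose \(\mcG\) is not crossing confined: there is a crossing between \(e=uv\) and \(e'=u'v'\) and an edge \(f\notin\{e,e'\}\) on \(\{u,v,u',v'\}\) that is crossed. Since \(\mathcal S\subseteq\{\fullCt,\almostFullCt,\bowtieCt\}\), the four endpoints span a \(4\)-cycle having \(e\) and \(e'\) as two non-adjacent edges, so \(f\) joins one endpoint of \(e\) to one endpoint of \(e'\); relabelling within \(\{u,v\}\) and within \(\{u',v'\}\) we may write \(f=vu'\). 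In particular \(f\) is adjacent to both \(e\) and \(e'\), hence it is crossed by some third edge \(h\notin\{e,e'\}\), and the goal is to reroute \(f\) so that it loses this crossing while the drawing stays equivalent to a geometric one.

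First I would collect all pairs \((g,g')\) of edges with \(g\) incident to \(v\), \(g'\) incident to \(u'\), and \(g\) crossing \(g'\); the pair \((e,e')\) is one of them, so the collection is nonempty. For such a pair crossing at \(p\), the sub-segment of \(g\) from \(v\) to \(p\) together with the sub-segment of \(g'\) from \(p\) to \(u'\) forms a simple arc \(\sigma\) from \(v\) to \(u'\); since \(g,g'\) are crossed only by each other, these arcs are pairwise disjoint apart from their common endpoints, hence linearly ordered by nesting, exactly as the triangles \(\Delta_i\) in the proof of \Cref{lem:3con-geom}. Moreover \(v\) and \(u'\) are consecutive in the rotation at \(p\), and walking from \(p\) along \(g\) to \(v\) (resp.\ along \(g'\) to \(u'\)) meets no further crossing, so there is a face of \(\mcG\) having \(v\), \(u'\) and the whole arc \(\sigma\) on its boundary. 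I would then pick, among these pairs, the one whose arc \(\sigma_1\) is innermost (mirroring the ``smallest triangle'' choice of \Cref{lem:3con-geom}) and redraw \(f=vu'\) as a curve that hugs \(\sigma_1\) inside the corresponding face, obtaining a drawing \(\mcG[']\). Because the two edges defining \(\sigma_1\) are crossed only by each other, the new copy of \(f\) has no crossing; thus \(\mcG[']\) has strictly fewer crossings than \(\mcG\), and since its set of crossings is a subset of that of \(\mcG\) with unchanged incident vertices, \(\mcG[']\) is still 1-planar and \srestricted.

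It remains to check that \(\mcG[']\) has neither a B- nor a W-configuration; an equivalent geometric \srestricted drawing then has fewer crossings than \(\mcG\), contradicting minimality. A W-configuration refers only to four crossed edges and their two crossings; \(\mcG[']\) has no crossings beyond those of \(\mcG\), and the only edge whose drawing changed, \(f\), is uncrossed in \(\mcG[']\), so any W-configuration of \(\mcG[']\) is already one of \(\mcG\). A B-configuration refers to three edges, two of which cross each other, so the only new possibility is that \(f\) serves as the spine edge \(ss'\), i.e.\ \(\{s,s'\}=\{v,u'\}\) and the witness is one of the pairs \((g,g')\) above together with the enclosed endpoints of \(g,g'\). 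Here the innermost choice of \(\sigma_1\) is used: the region bounded by the new copy of \(f\) and the relevant arc is a thin band containing no vertex in its interior, so no such B-configuration arises. I expect \emph{this} last step — arguing the nesting of the arcs and that the innermost rerouting creates no B-configuration — to be the main technical obstacle, the argument being essentially the one already carried out in \Cref{lem:3con-geom}; everything else is routine. Finally, all changes are confined to a small neighbourhood of \(\sigma_1\) and introduce no new enclosing closed curve, so no vertex is separated from the outer face that was not separated before; hence every outer-face vertex of \(\mcG\) remains an outer-face vertex of \(\mcG[']\), and iterating the argument terminates (the number of crossings strictly decreases) at a crossing confined drawing, completing the proof.
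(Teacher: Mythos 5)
Your proposal is correct and follows essentially the same route as the paper: take a crossing-minimal geometric \srestricted drawing, observe that an offending edge joins an endpoint of one crossing edge to an endpoint of the other, and reroute it along the innermost crossing pair incident to its two endpoints, exactly as in the paper's own proof, which defers the rerouting details to \Cref{lem:3con-geom}. You spell out the nesting/innermost argument and the preservation of outer-face vertices more explicitly than the paper does, but the underlying argument is the same.
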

\begin{proof}
	Let \mcG be a geometric \srestricted drawing of \(G\) with the minimum
	number of crossings. We claim that \mcG is crossing confined.
	Assume, for the sake of reaching a contradiction, that in \mcG
	the edges \(v_1 v_2\) and \(u_1 u_2\) are mutually crossing 
	and some other edge \(u_1 v_1 \in E(G)\) is crossed in \mcG.
	Because the edges \(v_1 v_2\) and \(u_1 u_2\) cross in \mcG, 
	the vertices  \(v_1\) and \(u_1\) are cofacial in \(\mcG\). 
	We can then redraw the edge \(u_1 v_1\) without a crossing
	within an epsilon distance from \emph{some} pairwise crossing 
	edges incident to \(u_1\) and \(v_1\) (which is not 
	necessarily the pair \(v_1 v_2, u_1 u_2\)), so that it does 
	not introduce new B- or W-configurations. Similar arguments 
	were used in the proofs of \Cref{lem:geom-3con-R,lem:3con-geom}; 
	see the right of \Cref{fig:lem:3con-geom:2} or
	\Cref{fig:lem:geom-3con-R:B}. 
	Since the new drawing has neither B- nor W-configurations,
	there is an equivalent geometric \srestricted drawing \(\mcG'\)
	with fewer crossings than \mcG, and we reach a contradiction.
\end{proof}
    
If we assume crossing confinedness of a targeted drawing, which 
we can do without loss of generality because of \Cref{obs:oneoptimal-geom}, 
then the only way in which the abstract graphs underlying planarized 
B- and W-configurations can occur in a geometric 1-planar drawing is 
if a specific face is not the outer face. We provide such a 
characterization for each configuration using the planarized graph.
\Cref{fig:bwConfig} may be useful to follow the statements.

\begin{lemma}
\label{lem:confidnessB}
	A crossing confined 1-planar drawing \(\mathcal{H}\) of an internally 
	3-connected graph \(H\) contains a B-configuration on edges 
	\(ss'\), \(sb\) and \(s'b'\) in which \(sb\) and \(s'b'\) cross 
	if and only if the crossing between \(sb\) and \(s'b'\) together 
	with \(s\) and \(s'\) are precisely the set of vertices 
	on the boundary of the outer face of 
	\(\mathcal{H}^\times_{s,s',b,b'}:=(\mathcal{H} - 
	\{v \in V(H)\setminus \{b,b'\} \mid N_H(v) = \{s,s'\}\})^\times\).
\end{lemma}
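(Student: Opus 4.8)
The plan is to identify the closed curve \(\gamma\) from the definition of a B-configuration with (the drawing of) an explicit \(3\)-cycle of \(\mathcal{H}^\times_{s,s',b,b'}\), and then to translate the condition ``\(b,b'\) lie inside \(\gamma\)'' into the stated condition on the outer face. Write \(x\) for the crossing point of \(sb\) and \(s'b'\); in \(\mathcal{H}^\times_{s,s',b,b'}\) it is a degree-four vertex adjacent to \(s,b,s',b'\). None of the deleted vertices is \(s,s',b\) or \(b'\), nor is any deleted vertex incident to \(ss'\), \(sb\) or \(s'b'\) (a deleted vertex \(v\) is incident only to \(vs\) and \(vs'\)), so these three edges survive and are drawn exactly as in \(\mathcal{H}\); hence \(C:=sxs's\) is a \(3\)-cycle of \(\mathcal{H}^\times_{s,s',b,b'}\) drawn exactly as the curve \(\gamma\), its three sides being, respectively, the \(s\)-to-\(x\) arc of \(sb\), the \(x\)-to-\(s'\) arc of \(s'b'\), and the edge \(ss'\). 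So ``\(\mathcal{H}\) contains the B-configuration on \(ss',sb,s'b'\)'' is, by definition, the statement ``\(b\) and \(b'\) lie in the bounded region of the Jordan curve \(C\)''.

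First I would observe that \(b\) and \(b'\) lie on the \emph{same} side of \(C\): since \(sb\) and \(s'b'\) cross transversally at \(x\), the four half-edges alternate between the two edges in the rotation around \(x\), so the half-edges \(xs,xs'\) of \(C\) are consecutive there and the remaining half-edges \(xb,xb'\) emanate from \(x\) into a common side of \(C\); by \(1\)-planarity \(xb\subseteq sb\) and \(xb'\subseteq s'b'\) are uncrossed and cannot meet \(\gamma\) except at \(x\) (an arc of \(sb\) cannot cross \(ss'\), adjacent to it, nor the \(sb\)-side of \(C\), the same edge, nor the \(s'b'\)-side of \(C\), by uniqueness of the crossing; symmetrically for \(xb'\)), so \(b\) and \(b'\) lie on that common side. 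Hence the left-hand side of the lemma is equivalent to ``\(b\) lies in the bounded region of \(C\)''. Next I would note that the right-hand side is equivalent to ``the unbounded region of \(C\) contains no vertex of \(\mathcal{H}^\times_{s,s',b,b'}\) other than \(s,x,s'\)'': if the unbounded side of \(C\) contains no part of the drawing besides \(C\), it is a face — necessarily the outer one — with boundary exactly \(C\); conversely, if the outer face has vertex-boundary \(\{s,x,s'\}\), its point-boundary lies on \(C\), so, being connected and unbounded, it must be the entire unbounded side of \(C\), which then contains nothing else.

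It then remains to prove: \(b\) lies inside \(C\) if and only if no vertex other than \(s,x,s'\) lies outside \(C\). The ``\(\Leftarrow\)'' direction is immediate, since \(b\) and \(b'\) are vertices distinct from \(s,s'\) and are not subdivision vertices (\(ss',sb,s'b'\) being three distinct edges). For ``\(\Rightarrow\)'', assume \(b\) (hence \(b'\)) lies inside \(C\) and suppose some vertex \(w\notin\{s,x,s'\}\) lies outside. Since \(\mathcal{H}^\times_{s,s',b,b'}\) is connected — here one uses that \(H\) is \(2\)-connected and that every deleted vertex is a singleton component of \(H-\{s,s'\}\) — take a path from \(w\) to \(C\), with first vertex \(p\) on \(C\) and predecessor \(p'\). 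Every edge strictly before \(p\) joins two vertices outside \(\{s,x,s'\}\), hence is (an arc of) an edge of \(\mathcal{H}\) other than \(ss',sb,s'b'\); by crossing-confinedness \(ss'\) is uncrossed, and an edge crossing \(sb\) (resp.\ \(s'b'\)) would have to be \(s'b'\) (resp.\ \(sb\)), incident to \(s'\) (resp.\ \(s\)) — a contradiction. So the path meets \(\gamma\) only at \(p\), forcing \(p'\) outside \(C\) as well. If \(p=x\), then \(p'\) is a half-edge neighbour of \(x\), i.e.\ \(p'\in\{b,b'\}\), contradicting that \(b,b'\) lie inside \(C\). The cases \(p=s\) and \(p=s'\) are handled by a topological confinement argument in the spirit of the redrawings in the proofs of \Cref{lem:3con-geom,lem:geom-3con-R}, using internal \(3\)-connectivity together with the fact that all vertices whose \(H\)-neighbourhood is exactly \(\{s,s'\}\), other than \(b,b'\), have been deleted: any piece of the drawing attached to \(s\) (or \(s'\)) and lying outside \(C\) would have to reconnect to \(H\) only through \(s\) and \(s'\), hence belong to the unique non-trivial component of \(H-\{s,s'\}\), which internal \(3\)-connectivity forces to lie inside \(C\) — a contradiction. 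Together with Steps~1 and~2 this proves the lemma.

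The main obstacle is precisely this last point of ``\(\Rightarrow\)'': excluding that a non-trivial part of \(H\) is drawn in the unbounded region of \(\gamma\) while \(b,b'\) sit inside it. Steps~1 and~2 and the ``\(\Leftarrow\)'' direction are routine Jordan-curve bookkeeping; the real content of the lemma is that internal \(3\)-connectivity, combined with the deletion of the degree-two vertices with neighbourhood \(\{s,s'\}\), leaves no room for such a configuration. Both ingredients are needed — the deletion disposes of parallel degree-two structure between \(s\) and \(s'\), and internal \(3\)-connectivity constrains how the separator \(\{s,s'\}\) can split the remainder of \(H\) relative to \(\gamma\).
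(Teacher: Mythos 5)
Your proof follows the same route as the paper's: identify the B-configuration's bounding curve with the cycle through \(s\), the crossing vertex \(x\) and \(s'\) in the planarization, use crossing confinedness to see that \(ss'\) is uncrossed and that the arcs of \(sb\) and \(s'b'\) can be crossed only by each other, and then argue that anything drawn outside that curve is separated from \(b\) by \(\{s,s'\}\) so that internal 3-connectivity pins it down. Your Jordan-curve bookkeeping (that \(b\) and \(b'\) lie on a common side, the reformulation of the outer-face condition, and the backward direction) is more explicit than the paper's and is correct.

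The step you defer to a ``topological confinement argument'' is, however, exactly the content of the lemma, and the deduction you state for it does not go through. From the separation you correctly get that a surviving vertex \(w\notin\{s,s',b,b'\}\) outside \(C\) lies in the unique non-singleton component of \(H-\{s,s'\}\) (all other singleton components having been deleted). But internal 3-connectivity does not ``force that component to lie inside \(C\)'': the only vertices that the B-configuration pins inside \(C\) are \(b\) and \(b'\), so your conclusion needs the extra premise that \(b\) or \(b'\) belongs to the non-singleton component. That premise can fail under the stated hypotheses: if \(N_H(b)=N_H(b')=\{s,s'\}\) (an almost-full crossing with \(bb'\) absent and \(b,b'\) of degree two --- note \(b\) and \(b'\) are exempt from deletion), then \(b\) and \(b'\) are themselves singleton components of \(H-\{s,s'\}\), and the non-singleton component may be drawn entirely on the unbounded side of \(C\) without contradicting internal 3-connectivity, 1-planarity or crossing confinedness; the outer face of \(\mathcal{H}^\times_{s,s',b,b'}\) then has additional vertices on its boundary. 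To be fair, the paper's own proof makes the same silent assumption (it concludes from the separation that the \emph{outside} vertex must be the degree-two one rather than \(b\)), so you have reproduced rather than introduced the weak point --- but as written your argument asserts the lemma's hardest claim instead of proving it, and the specific inference you attribute to internal 3-connectivity is not valid on its own.
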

\begin{proof}
	\(\Rightarrow\)
	Because \(\mathcal{H}\) is a 1-planar drawing there are no edges crossing 
	\(sb\) other than \(s'b'\) and vice versa. Because \(\mathcal{H}\) is 
	crossing confined, \(ss'\) is not involved in any crossing. Consider
	the closed curve \(\gamma\) defined \(ss'\), the portion of \(s'b'\) 
	from \(s'\) to its crossing, and the portion from \(sb\) from its 
	crossing to \(s\). The vertices \(b\) and \(b'\) are in the region
	bounded by \(\gamma\). 
	If there is some vertex \(x\) drawn outside \(\gamma\), then it can be 
	connected to \(b\) only by paths through \(s\) or \(s'\). 
	This would contradict the fact that \(H\) is internally 3-connected,
	unless the vertex \(x\) is of degree two and \(N_H(x)=\{s,s'\}\).
	This shows that all the vertices \(x\) outside \(\gamma\) satisfy
	\(N_H(x)=\{s,s'\}\), and therefore the boundary of the outer face of 
	\(\mathcal{H}^\times_{s,s',b,b'}\) is as claimed.
	
	\(\Leftarrow\) 
	By contraposition, if \(s\), \(s'\), and the crossing between \(sb\) 
	and \(s'b'\) are the only vertices on the outer face of 
	\(\mathcal{H}^\times_{s,s',b,b'}\),
	then \(b\) and \(b'\) are drawn inside the curve given by 
	\(ss'\) and \(s'b'\) from \(s'\) to its crossing and \(sb\) from 
	its crossing to \(s\).
	This implies a B-configuration in \(\mathcal{H}\).
\end{proof}

Without even using crossing confinedness (but just the fact that in 
1-planar drawings each edge is involved in at most one crossing), 
one can show a similar statement for W-configurations.

\begin{lemma}
\label{lem:confidnessW}
	A crossing confined 1-planar drawing \(\mathcal{H}\) of an 
	internally 3-connected graph \(H\) contains a W-configuration 
	with spine \(s,s'\) on edges \(sw_1\), \(sw_2\), \(s'w'_1\) 
	and \(s'w'_2\) in which \(sw_1\) and \(sw_2\) cross each other
	and also \(s'w'_1\) and \(s'w'_2\) cross each other if and only 
	if their crossings together with \(s\) and \(s'\) are precisely 
	the set of vertices on the boundary of the outer face of 
	\(\mathcal{H}^\times_{s,s',w_1,w_2,w'_1,w'_2}:=(\mathcal{H} - 
	\{v \in V(H)\setminus \{w_1,w_2,w'_1,w'_2\} \mid N_H(v) = \{s,s'\}\})^\times\).
\end{lemma}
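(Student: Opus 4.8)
The plan is to mirror the proof of \Cref{lem:confidnessB}, replacing the B-curve by the closed curve $\gamma$ of the W-configuration and again exploiting internal $3$-connectivity to pin down which vertices may lie outside $\gamma$. For the forward direction, starting from a W-configuration with spine $s,s'$ on edges $sw_1,sw_2,s'w_1',s'w_2'$ where $sw_1$ crosses $s'w_1'$ at a point $c_1$ and $sw_2$ crosses $s'w_2'$ at a point $c_2$, I would let $\gamma$ be the closed curve obtained by following $sw_1$ from $s$ to $c_1$, then $s'w_1'$ from $c_1$ to $s'$, then $s'w_2'$ from $s'$ to $c_2$, then $sw_2$ from $c_2$ back to $s$. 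Since $\mathcal{H}$ is $1$-planar, each of these four edges participates in no crossing other than the one indicated, so the four subarcs are pairwise internally disjoint — any further intersection would be either a forbidden second crossing of one of the four edges or a crossing between two edges adjacent at $s$ or $s'$ — and hence $\gamma$ is a simple closed curve that is not met by any edge of $\mathcal{H}$ except at $s$ and $s'$; by the definition of a W-configuration, $w_1,w_2,w_1',w_2'$ lie in the bounded region of $\gamma$.

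Next I would show that every vertex of $H$ drawn in the unbounded region of $\gamma$ has degree two with neighbourhood $\{s,s'\}$: any path of $H$ from such a vertex $x$ to $w_1$ must, drawn in $\mathcal{H}$, reach the bounded region of $\gamma$, and since the four subarcs are uncrossable and contain no vertices in their interiors it can leave the unbounded region only through $s$ or $s'$; thus $\{s,s'\}$ separates $x$ from $w_1$, and internal $3$-connectivity (the side of this separator containing $w_1$ has more than one vertex) forces $x$ to be its own component, so $\deg_H(x)=2$ and, by $2$-connectivity, $N_H(x)=\{s,s'\}$. Consequently every such vertex is deleted when forming $\mathcal{H}^\times_{s,s',w_1,w_2,w_1',w_2'}$ (none of them is one of $w_1,w_2,w_1',w_2'$, which lie inside $\gamma$), and so is every edge drawn in the unbounded region, since such an edge joins two deleted vertices, or joins $s$ or $s'$ to a deleted vertex, or is one of the four boundary edges of $\gamma$. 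Hence after the deletion the unbounded region is empty, $\gamma$ survives intact — its four edges are kept because their endpoints $s,s',w_1,w_2,w_1',w_2'$ all survive — and in the planarization the crossings $c_1,c_2$ become vertices; therefore the boundary of the outer face is exactly $\{s,s',c_1,c_2\}$.

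For the converse I would argue contrapositively, as in \Cref{lem:confidnessB}: assume $sw_1,sw_2,s'w_1',s'w_2'\in E(H)$ with $sw_1$ crossing $s'w_1'$ at $c_1$ and $sw_2$ crossing $s'w_2'$ at $c_2$, and that $\{s,s',c_1,c_2\}$ is precisely the set of vertices on the boundary of the outer face of $\mathcal{H}^\times_{s,s',w_1,w_2,w_1',w_2'}$; then the curve $\gamma$ as above bounds the outer face of this planarization, so every non-deleted vertex lies in its closed bounded region. None of $w_1,w_2,w_1',w_2'$ is deleted (they are exempted from the deletion), and none of them lies on $\gamma$ — the only original vertices on $\gamma$ are $s$ and $s'$, and no $w_i$ equals $s$ or $s'$, as that would make two edges adjacent at $s$ or $s'$ cross at $c_1$ or $c_2$ — so $w_1,w_2,w_1',w_2'$ lie strictly inside $\gamma$, which is exactly the W-configuration condition.

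The main obstacle I anticipate is controlling the unbounded region of $\gamma$ after the deletion: one must verify that nothing of the original graph survives there, which is delicate precisely when $ss'$ is an edge of $H$ or there are several degree-two vertices with neighbourhood $\{s,s'\}$, since a priori such elements could be drawn outside $\gamma$ and contribute extra structure to the outer face. Dealing with this cleanly is where the reduction to internally $3$-connected instances and the control it gives over the separator structure does the real work; and, in contrast with \Cref{lem:confidnessB}, crossing confinedness of $\mathcal{H}$ is not actually needed here — only that each edge of a $1$-planar drawing takes part in at most one crossing. The rest is a routine adaptation of the argument for B-configurations.
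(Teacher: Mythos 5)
Your proof follows the same route as the paper's: the same closed curve $\gamma$ through the two crossings, the same appeal to internal $3$-connectivity to force every vertex drawn outside $\gamma$ to be a degree-two vertex with neighbourhood $\{s,s'\}$ (hence deleted in forming $\mathcal{H}^\times_{s,s',w_1,w_2,w'_1,w'_2}$), and the same contrapositive reading for the converse; you even independently make the paper's own remark that, unlike for B-configurations, crossing confinedness is not needed here, only that each edge is crossed at most once. The only differences are that you spell out a few steps the paper leaves implicit (pairwise internal disjointness of the four subarcs, the deletion of all edges in the unbounded region, and that no $w_i$ coincides with $s$ or $s'$), which only adds detail to the same argument.
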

\begin{proof}
	\(\Rightarrow\)
    Because \(\mathcal{H}\) is 1-planar there is no edge crossing 
	\(sw_1\) other than \(s'w'_1\) and vice versa; similarly no edge
	is crossing \(sw_2\) other than \(s'w'_2\) and vice versa.
	Consider the closed curve \(\gamma\) defined by 
	the portion of \(sw_1\) from \(s\) to its crossing with \(sw'_1\), 
	the portion of \(sw'_1\) from the crossing with \(sw_1\) to \(s'\),
	the portion of \(sw'_2\) from \(s'\) to the crossing with \(sw_2\), and
	the portion to \(sw'_1\) from the crossing with \(sw'_2\) to \(s\).
	The vertices \(w_1,w'_1,w_2,w'_2\) are inside the region bounded
	by \(\gamma\).
	If there is some vertex \(x\) drawn outside \(\gamma\), then it can be 
	connected to \(w_1\) only by paths through \(s\) or \(s'\). 
	This would contradict the fact that \(H\) is internally 3-connected,
	unless the vertex \(x\) is of degree two and \(N_H(x)=\{s,s'\}\).
	It follows that the outer face of 
	\(\mathcal{H}^\times_{s,s',w_1,w_2,w'_1,w'_2}\) is as claimed.

	\(\Leftarrow\) 
	By contraposition, if \(s\), \(s'\), the crossing between 
	\(sw_1\) and \(sw_2\), and the crossing between \(s'w'_1\) and \(s'w'_2\) 
	are the only vertices on the outer face of 
	\(\mathcal{H}^\times_{s,s',w_1,w_2,w'_1,w'_2}\), then 
	\(w_1\), \(w_2\), \(w'_1\) and \(w'_2\) are drawn inside the 
	curve along \(s\), \(s'\) and the crossings of \(sw_1\), \(sw_2\), 
	\(s'w'_1\) and \(s'w'_2\).
    This implies a W-configuration in \(\mathcal{H}\).
\end{proof}
    
We explain now how to expand \(\varphi\) from the previous 
subsection for the geometric setting. Here, we use that \(\varphi\) 
was of the form \(\exists D\exists M \tilde\varphi(D,M) \) for an MSO-formula
\(\tilde\varphi\), and add conditions on \(D\) and \(M\) to
obtain \(\varphi_{\text{geom}}\), defined as follows:
\begin{align}
	&\hspace{-.5em}\exists D\exists M \Big(		
		\tilde\varphi(D,M) \land \mso{cr-confined}{(D,M)}\ \land \notag\\
    &~\big(\exists C \ \mso{facecycle}{(C,D,M)}\ \land \notag\\
    &~~~~(\forall v \ \role{outer}(v) \Rightarrow v \in C )\ \land \label{eq:outer}\\
    &~~~~(\forall b_1,\dotsc,b_5,c \ \mso{B-config}{(b_1, \dotsc, b_5,c,D,M)} 
		\Rightarrow C \neq \{b_1,b_2,c\})\ \land \label{eq:B}\\
    &~~~~(\forall b_1,\dotsc,b_6,c_1,c_2 \ \mso{W-config}{(b_1, \dotsc, b_6,c_1,c_2,D,M)} 
		\Rightarrow C \neq \{b_1,b_2,c_1,c_2\})\big) \Big) \label{eq:W}. 
\end{align}
Here, we are using the following elements, where 
\(\ChordDescription=\ChordDescription(D,M)\) is the
flagged chord description of \(G\) arising from the vertex set \(D\) and 
flag pendants \(M\) that satisfy \(\tilde \varphi(D,M)\) and
where \(G_{\ChordDescription}\) is the ``planarization'' described 
by \ChordDescription and defined formally in \Cref{req:gCDisPlanar}.\ of \Cref{def:cd}.)	
\begin{itemize}
\item \mso{cr-confined}{(D,M)} expresses that for each crossing pair \(\{e,e'\}\) 
	of the chord description \(\ChordDescription\) there is no crossing pair 
	of \(\ChordDescription\) containing another edge between any pair of 
	endpoints of \(e\) and \(e'\).
\item \mso{facecycle}{(C,D,M)} expresses that \(C\) is the set of vertices of a face
	in the planarization \(G_{\ChordDescription}\). This is equivalent to
	telling that the graph \( G_{\ChordDescription}+\{cz\mid c\in C\}\) is planar, 
	where \(z\) is a new vertex.
\item \mso{B-config}{(b_1, \dotsc, b_4,c,D,M)} expresses the existence of the 
	underlying graph of a planarized and uncrossed-edge-subdivided B-configuration 
	with crossing vertex \(c\) and uncrossed subdivided edge \(b_1b_2\) 
	in \(G_{\ChordDescription}\).
\item \mso{W-config}{(b_1, \dotsc, b_6,c_1,c_2,D,M)} expresses the existence of 
	the underlying graph of a planarized W-configuration with crossing 
	vertices \(c_1\) and \(c_2\) and shared endpoints of original edges 
	\(b_1\) and \(b_2\) in \(G_{\ChordDescription}\).
\end{itemize}
It is easy to see that the above can be expressed as a constant-length MSO-formula.
The main difficulty lies in encoding the set of crossing pairs of the chord 
description \ChordDescription encoded by \((D,M)\) and the resulting graph
\(G_{\ChordDescription}\), but we argued how to do it in the previous subsection. 

\begin{lemma}
\label{lem:geom-varphi}
	Let \(G\) be an internally \(3\)-connected graph where some vertices may be 
	colored with \role{outer}. Then \(\tilde{G} \models \varphi_{\text{geom}}\) 
	if and only if there is a geometric \srestricted\ drawing of \(G\) 
	in which the \role{outer}-colored vertices, if there are any, lie on the 
	boundary of the outer face.
\end{lemma}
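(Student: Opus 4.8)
The plan is to lift the chord-description machinery of \Cref{sec:fpt:mso} to the geometric setting by layering on top of it Thomassen's criterion for geometric realizability and the combinatorial reformulations of B- and W-configurations in \Cref{lem:confidnessB,lem:confidnessW}. Recall that \(\varphi_{\text{geom}}\) has the form \(\exists D\,\exists M\,\bigl(\tilde\varphi(D,M)\wedge\mso{cr-confined}{(D,M)}\wedge\exists C\,(\dots)\bigr)\), where \(\tilde\varphi\) is the matrix of the formula \(\varphi\) from \Cref{cor:varphi}. Hence any satisfying assignment still yields, exactly as in \Cref{cor:varphi}, a chord description \(\ChordDescription=\ChordDescription(D,M)\) of \(G\), together with the extra datum of a distinguished face \(C\) of the planarization \(G_{\ChordDescription}\).

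For the forward implication, assume \(\tilde{G}\models\varphi_{\text{geom}}\) with witnesses \(D,M,C\). As in \Cref{cor:varphi}, \((D,M)\) encodes a chord description \(\ChordDescription\), so by \Cref{thm:chorddescription} there is an \srestricted drawing \mcG of \(G\) whose crossings are exactly the crossing pairs of \(\ChordDescription\); since \(\mso{cr-confined}{(D,M)}\) holds, \mcG is crossing confined. Because \(G_{\ChordDescription}\) is connected and planar (\Cref{req:gCDisPlanar}.\ of \Cref{def:cd}), in the proof of \Cref{thm:chorddescription} we may take the planar drawing of \(G_{\ChordDescription}\) to have \(C\) as its outer face, and this choice survives when we undo the \(4\)-claw replacements to obtain \mcG. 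Then \eqref{eq:outer} puts every \role{outer}-colored vertex on the outer face of \mcG, while \eqref{eq:B} and \eqref{eq:W} say that the outer face \(C\) is not the face singled out by any planarized B- or W-configuration pattern; by \Cref{lem:confidnessB,lem:confidnessW} applied to the drawing \mcG of \(G\), this is precisely the statement that \mcG contains neither a B- nor a W-configuration. By Thomassen's criterion, \mcG is equivalent to a geometric \srestricted drawing of \(G\), and since the equivalence is realized by a self-homeomorphism of the plane it keeps the \role{outer}-colored vertices on the outer face.

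For the backward implication, suppose \(G\) has a geometric \srestricted drawing with the \role{outer}-colored vertices on the outer face, and take one, \mcG, with the fewest crossings among all such drawings. By the redrawing argument of \Cref{obs:oneoptimal-geom} (which reduces the number of crossings and keeps every outer-face vertex on the outer face), \mcG is crossing confined. Minimality moreover forces that no edge incident to a degree-two vertex is crossed: if \(st\) were such an edge with \(t\) of degree two, then, since crossing edges lie on a \(4\)-cycle, the edge crossing \(st\) is incident to the other neighbor \(s'\) of \(t\), and the local redrawings developed in the proof of \Cref{lem:3con-geom} (move \(t\) next to the crossing point when \(t\) is not the \role{outer} vertex, otherwise reroute the crossed partner edge of \(st\)) remove this crossing without creating a new one and without leaving the class, contradicting minimality. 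Hence the construction in the proof of \Cref{thm:chorddescriptionalwaysexists} applies to \mcG and produces a chord description \(\ChordDescription\) whose crossing pairs are exactly the crossings of \mcG. Encoding \(\ChordDescription\) by a set \(D\) of \role{ctp}-vertices and a set \(M\) of flag pendants as in \Cref{sec:fpt:mso}, and letting \(C\) be the vertex set of the outer face of \(\mcG^\times=G_{\ChordDescription}\), we obtain \(\tilde\varphi(D,M)\) (as in \Cref{cor:varphi}), \(\mso{cr-confined}{(D,M)}\) from crossing confinedness, \(\mso{facecycle}{(C,D,M)}\) by construction, the condition \eqref{eq:outer} from the outer-face property of \mcG, and — because a geometric drawing has no B- or W-configuration — the conjuncts \eqref{eq:B} and \eqref{eq:W} via \Cref{lem:confidnessB,lem:confidnessW}. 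Thus \(\tilde{G}\models\varphi_{\text{geom}}\).

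The global structure above is short; the real work, and the expected main obstacle, is in verifying that \mso{facecycle}, \mso{B-config}, \mso{W-config} and \mso{cr-confined} can be written as constant-length MSO\(_2\)-formulas over \(\tilde G\) that say exactly what is needed about \(G_{\ChordDescription}\). The one genuinely subtle point is the comparison of the distinguished face \(C\) with the small face of a B- or W-configuration: by \Cref{lem:confidnessB,lem:confidnessW} that face must be read off not from \(G_{\ChordDescription}\) itself but from \(G_{\ChordDescription}\) after deleting the degree-two vertices whose neighborhood equals the spine \(\{s,s'\}\) (other than the designated vertices \(b,b'\), resp.\ \(w_1,w_2,w'_1,w'_2\)), so the MSO encoding of \eqref{eq:B} and \eqref{eq:W} must build in this deletion rather than literally testing \(C=\{b_1,b_2,c\}\) in \(G_{\ChordDescription}\). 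Once this is handled, everything else reuses the crossing-pair predicate and the planarity encoding already constructed in the proofs of \Cref{lem:gCDisPlanar,lem:onlySrestrCrPairs}, every added conjunct stays of bounded length, and the equivalence follows.
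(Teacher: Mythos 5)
Your proposal is correct and follows essentially the same route as the paper's proof: both directions rest on \Cref{cor:varphi}, crossing confinedness via \Cref{obs:oneoptimal-geom}, the face characterizations of \Cref{lem:confidnessB,lem:confidnessW}, and Thomassen's criterion. You are in fact slightly more careful than the paper in the backward direction (justifying that the chord description can be extracted from the given geometric drawing itself by first eliminating crossings on edges at degree-two vertices) and in flagging that the B-/W-face test must account for the deleted degree-two spine neighbours, but these are refinements of, not departures from, the paper's argument.
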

\begin{proof}
	\(\Rightarrow\) By \Cref{cor:varphi}, and since 
	\(\tilde{G} \models \varphi_{\text{geom}}\) implies 
	\(\tilde{G} \models \varphi\), we know that the existentially quantified flagged 
	chord-tied pairs from \(\varphi\) encode an \srestricted drawing of \(G\).
	In addition, let \(C\) make the second part of the conjunction constituting 
	\(\varphi_{\text{geom}}\) true. Because \mso{facecycle}{(C,D,M)} is true,
	\(C\) is the vertex set of a face in the planarization \(G_{\ChordDescription}\),
	and we can choose that face to be the outer face. (For a planar graph,
	we can select any face to be the outer face.) This choice of the outer face 
	results in the \role{outer}-colored vertices being on the boundary of the outer face
	because of \eqref{eq:outer}.
	Because we enforce crossing confinedness of the encoded drawing,
	\Cref{lem:confidnessB,lem:confidnessW} imply that \eqref{eq:B} and \eqref{eq:W}
	exclude the existence of any B- and W-configurations in \(G_{\ChordDescription}\).

	\(\Leftarrow\) Consider a geometric \srestricted (and hence in particular 
	\srestricted) drawing of \(G\) with \role{outer}-colored vertices being on the 
	boundary of the outer face. Because of \Cref{obs:oneoptimal-geom}, 
	we may assume that the drawing is crossing confined.
	Let \(D\) and \(M\) be the flagged chord-tied pairs, which we know that they
	exist because of \Cref{cor:varphi}.
	Let \(C\) be the set of vertices on the outer face of the planarization of that 
	drawing. Add the subdivision vertices of the whole edges that appear the outer 
	face to \(C\).	
	It is straightforward to verify using \Cref{cor:varphi}, crossing confinedness 
	and the characterization of geometric 1-planarity by forbidding 
	B- and W-configurations that setting the existentially quantified variables 
	of \(\varphi_{\text{geom}}\) to \(D\), \(M\) and \(C\) respectively, 
	shows \(\tilde{G} \models \varphi_{\text{geom}}\).
\end{proof}
    
Since the formula \(\varphi\) in \Cref{lem:geom-varphi} has constant length and
\(\tw(\tilde{G}) \in \mathcal{O}(\tw(G))\) because of \Cref{lem:twbound}, 
we can apply Courcelle's theorem to conclude that there is an \FPT-algorithm 
parameterized by treewidth that decides whether internally \(3\)-connected graphs 
are \(O\)-\geosrestricted. 
\Cref{thm:geom-algo} now follows from \Cref{thm:3con:geom}.

\section{Hard Cases for Constant Pathwidth}
\label{sec:hardness}

In this section we show that for the remaining subsets of crossing types, the recognition 
problem becomes \NP-complete in the geometric and non-geometric case, even if the graph 
has constant pathwidth. 

\begin{theorem}
\label{thm:main-npc}
	For \(\mathcal{S} \cap \{\arrowCt, \chairCt, \XCt\} \neq \emptyset\) deciding 
	\geosrestricted[ity] and \srestricted[ity] is \NP-complete and para-\NP-hard when 
	parameterized by the \tpWidth.
\end{theorem}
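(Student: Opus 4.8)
The plan is to prove membership in \NP\ and then give one polynomial-time reduction from \threePartition that is modularly adapted to whichever of \(\arrowCt,\chairCt,\XCt\) lies in \(\mathcal{S}\), and whose output has \tpWidth\ bounded by an absolute constant.

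For membership, a certificate is a \emph{combinatorial planarization}: a plane graph with a rotation system, together with a designation of some of its degree-\(4\) vertices as crossing vertices and a bijection from those vertices to unordered pairs of edges of the input graph \(G\). In polynomial time one verifies that this encodes a \(1\)-planar drawing of \(G\) (each edge of \(G\) is split at most once and two edges sharing an endpoint are never split at the same crossing vertex), that the induced subgraph on the four endpoints of each crossing pair realizes a crossing type in \(\mathcal{S}\), and — in the geometric case — that the drawing contains neither a B- nor a W-configuration, which by Thomassen's characterization~\cite{Thomassen88a} is exactly the condition for the encoded drawing to be equivalent to a straight-line one. Hence both \srestricted[ity] and \geosrestricted[ity] are in \NP, and since the instances constructed below have constant \tpWidth, \NP-hardness of recognition will directly yield para-\NP-hardness parameterized by \tpWidth.

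For hardness we reduce from \threePartition, which is \NP-hard in the strong sense; we may therefore assume the \(3m\) integers \(a_1,\dots,a_{3m}\) with \(\sum_i a_i=mB\) and \(B/4<a_i<B/2\) are polynomially bounded in \(m\). The graph \(G\) is assembled from three kinds of bounded-complexity building blocks, laid out along a line so that \(G\) admits a path decomposition of constant width (a sweep along the frame, keeping only \(O(1)\) interface vertices and one gadget under construction at a time; note that a \emph{fence}, being a bundle of internally disjoint \(2\)-paths between two fixed vertices \(u,v\), has pathwidth \(3\)). First, a \emph{frame} consisting of \(m\) consecutive \emph{pocket} gadgets, each a small internally \(3\)-connected planar piece whose essentially unique embedding exposes a distinguished face bounded by a fence \fence\ of exactly \(B\) internally disjoint \(2\)-paths; each of the \(2B\) fence edges may be crossed at most once. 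Second, \(3m\) \emph{item} gadgets, where item \(i\) carries \(a_i\) ``prongs'' and is designed (together with connectivity to the rigid part of the frame) so that in any \(\mathcal{S}\)-restricted drawing all \(a_i\) prongs must enter one and the same pocket and must each cross a fence edge of that pocket rather than being routed around it. Third, for the designated bad type \(\tau\in\mathcal{S}\cap\{\arrowCt,\chairCt,\XCt\}\), a constant-size \emph{crossing gadget} placed around each prong–fence crossing that adds a controlled set of edges among the four endpoints so that the only crossing type available there is \(\tau\); this is possible exactly because \(\tau\) induces no \(4\)-cycle on the four endpoints — in contrast to \(\fullCt,\almostFullCt,\bowtieCt\), which would force a spanning \(4\)-cycle and hence forbid such a prong–fence crossing altogether (this is the structural reason the tractable cases of \Cref{thm:main-algo} are precisely \(\mathcal{S}\subseteq\{\fullCt,\almostFullCt,\bowtieCt\}\)). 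The same skeleton serves all three bad types; only this constant-size gadget is swapped.

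Correctness has two directions. Given a yes-instance with triples \(T_1,\dots,T_m\), we draw the frame in its canonical planar embedding, route the prongs of the three items assigned to \(T_j\) into pocket \(j\), and since \(\sum_{i\in T_j}a_i=B\) the prongs can be interleaved to cross exactly \(B\) distinct fence edges there, each crossing made of type \(\tau\) by the crossing gadget; one checks by inspection that this drawing is realizable with straight-line segments (fences become pencils of nearly parallel segments, prongs transversal segments) and contains no B- or W-configuration, so it is a valid \geosrestricted drawing and a fortiori a valid \srestricted drawing. Conversely, from any \srestricted drawing of \(G\): internal \(3\)-connectivity of the frame forces its embedding, so the \(m\) pockets are genuinely distinct faces; \(1\)-planarity caps each pocket at \(B\) prong crossings (one per fence edge); and the crossing gadgets plus connectivity of each item gadget force all \(a_i\) prongs of item \(i\) into a single pocket and to actually cross \(a_i\) \emph{distinct} fence edges. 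A counting argument then forces every pocket to receive items of total size exactly \(B\), i.e.\ a valid partition. Combined with the trivial ``geometric implies topological'' implication, all three statements (geometric, topological, and the partition problem) are equivalent, establishing the theorem. The main obstacle, and where the bulk of the work lies, is this last rigidity step: proving in full topological detail that prongs cannot leak between pockets or run around the outside of the frame, that forced crossings cannot reuse fence capacity, and — case by case over \(\tau\in\{\arrowCt,\chairCt,\XCt\}\) — that no \emph{unintended} crossing of a permissible type can substitute for an intended prong–fence crossing.
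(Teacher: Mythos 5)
Your high-level plan (strongly \NP-hard \threePartition, constant-\tpWidth\ layout, a per-type constant-size modification to pin down the crossing type, \NP-membership via a verifiable combinatorial planarization) matches the paper's strategy, but as written the proposal has a genuine gap: every load-bearing step is asserted rather than proved, and you say so yourself (``the main obstacle, and where the bulk of the work lies, is this last rigidity step''). Concretely, you never specify the pocket, item, or crossing gadgets, and the claims that carry the reduction --- that all \(a_i\) prongs of an item must enter a single pocket, that a pocket can absorb at most \(B\) prong crossings, and that prongs cannot route around the frame --- are exactly the statements that require a quantitative argument. (Your capacity accounting is also off as stated: a face bounded by \(B\) internally disjoint \(2\)-paths has \(2B\) crossable edges, so without a further mechanism the pocket capacity is \(2B\), not \(B\).) The paper does this work via a counting argument over families of \KuraSub[s]: a fence built from bundles of \(\ell=12\) parallel \(2\)-paths admits so many \(K_5\)-subdivisions on its five branch vertices that no set of crossings avoiding the two single edges can resolve them all (\Cref{lem:fence}), and analogous \(K_5\)-subdivisions through the two wheel centres force each rim edge to be crossed by a splitter edge (\Cref{lem:npc-backward}). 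Some argument of this kind is indispensable and is entirely absent from your proposal.

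The second, more structural issue is your backward direction. You argue from ``any \(\mathcal{S}\)-restricted drawing'' and then must rule out, ``case by case over \(\tau\)'', that an \emph{unintended} crossing of some other permissible type substitutes for an intended one --- but \(\mathcal{S}\) may also contain \(\fullCt\), \(\almostFullCt\) or \(\bowtieCt\), so the case analysis is not over \(\{\arrowCt,\chairCt,\XCt\}\) alone, and controlling which \(4\)-tuples of vertices accidentally induce a permitted type throughout the construction is delicate. The paper sidesteps this entirely with an asymmetric reduction: if \(I\) is a yes-instance then \(G_I\) is \geosprimerestricted\ for the singleton \(\mathcal{S}'\subseteq\mathcal{S}\) (hence \(\mathcal{S}\)-restricted), while if \(I\) is a no-instance then \(G_I\) is not even \(1\)-planar, so the backward direction needs only \(1\)-planarity and is completely independent of \(\mathcal{S}\). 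Adopting that asymmetry would eliminate the hardest part of your plan; without it, or without the Kuratowski-counting machinery, the proposal does not yet constitute a proof.
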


\NP\ membership follows due to the fact that geometric 1-planar graphs are characterized 
by having drawings without B- and W-configurations. For \NP-hardness, we show for every 
singleton \(\mathcal{S}' \subset \{\arrowCt, \chairCt, \XCt\}\) how to transform any 
\threePartition instance \(I\) into a graph with constant  \tpWidth that is 
\geosprimerestricted if \(I\) is a yes-instance and not even 1-planar if \(I\) is a 
no-instance. This then directly yields a reduction from \threePartition to recognizing 
\srestricted[ity] and \geosrestricted[ity] for any \(\mathcal{S}\) with 
\(\mathcal{S} \cap \{\arrowCt, \chairCt, \XCt\}\ne \emptyset\). 

Our transformation builds upon Grigoriev and Bodlaender's proof that determining 
1-planarity is NP-complete \cite{GrigorievB07}. We provide a standalone presentation.

The problem \threePartition is given a set \(A\) of \(3m\) elements with \(m \geq 3\), 
a bound \(B\in \mathbb{N}\), and a size \(s(a) \in \mathbb{N}\) for each \(a\in A\) 
such that \(\sum_{a\in A} s(a) = mB\), and asks whether \(A\) can be partitioned 
into \(m\) disjoint subsets \(A_1,\dots,A_m\) of size three such that for 
\(1 \leq i \leq m\) we have \(\sum_{a\in A_i} s(a) = B\).  As \threePartition is 
strongly NP-complete, we can assume that \(B\) is polynomial in \(m\). 
For convenience, we furthermore assume that both \(B\) and \(m\) are even.

\begin{figure}
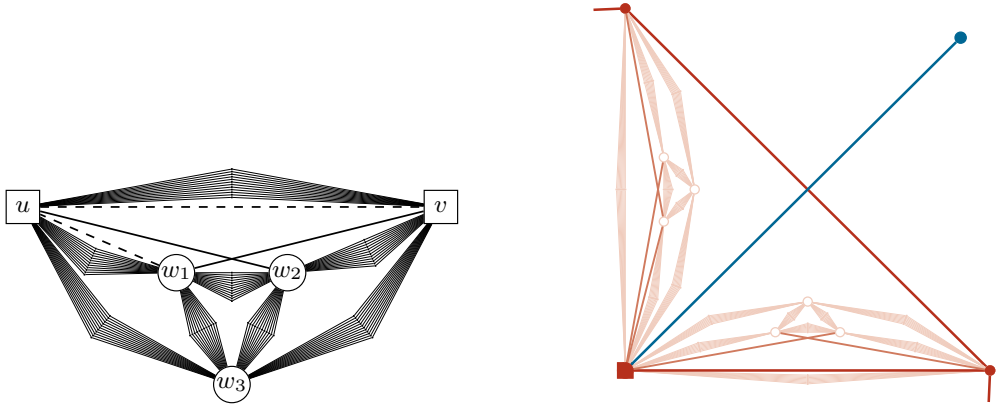

\centering
	\vspace{-.5em}
	\begin{subfigure}[t]{.4\textwidth}
		\centering
		\scalebox{1.1}{\fenceFig}
	\end{subfigure}
	\hfill
	\begin{subfigure}[t]{.5\textwidth}
		\centering
		\scalebox{1.2}{\closeUpFig}
	\end{subfigure}
	\caption{Left: Fence \(\fence\) between nodes \(u\) and \(v\). The single edges 
		are thick; the direct edges whose presence depends on \(\mathcal{S}'\) are dashed. 
		Right: Example of how a splitter edge (blue) crosses a rim edge. All three 
		crossings are \chairCt-crossings.}
	\label{fig:fence}
\end{figure}
        
Before we describe the transformation itself, we construct what we call \emph{fences}.
For any two nodes \(u\) and \(v\), the fence \fence between them is constructed as 
follows (see the left side of \Cref{fig:fence} for reference):
We start with a \(K_5\) on \(u\), \(v\) and three new nodes \(w_1\), \(w_2\), \(w_3\).
Next we replace all of its edges other than \(uw_2\) and \(vw_1\) by a \emph{bundle} 
of \(\ell = 12\) parallel paths of length two. We call \(uw_2\) and \(vw_1\) the 
\emph{single edges} of the fence. For \(\mathcal{S}' = \{\arrowCt\}\) we re-add the 
edges \(uv\) and \(uw_1\), and for \(\mathcal{S}' = \{\chairCt\}\) we re-add either 
the edge \(uv\) or \(uw_1\) (details follow later).
This ensures that the crossing between the single edges is an \(\mathcal{S}'\)-crossing.
The drawings depicted in the left side of \Cref{fig:fence} shows a \geosrestricted 
drawing of a fence where \(u\) and \(v\) share a face. The following lemma shows that 
we can treat fences as uncrossable edges in the context of 1-planar drawings.

\begin{lemma}
\label{lem:fence}
	Let \(G\) be a 1-planar graph and \(u,v \in V(G)\) be two nodes between 
	which there is a fence \(\fence \subseteq G\). Then, in every 1-planar drawing 
	of \(G\) the single edges of \fence cross each other \textup(and whereby \(u\) 
	and \(v\) share a face\textup).
\end{lemma}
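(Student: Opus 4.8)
The plan is to reason inside the restriction \(\mathcal{D}\) of the given \(1\)-planar drawing of \(G\) to the subgraph \(\fence\subseteq G\), and to argue by contradiction: suppose the single edges \(uw_2\) and \(vw_1\) do \emph{not} cross each other in \(\mathcal{D}\). Observe that \(\fence\) minus its two single edges contains, for every choice of one length-two path per bundle, a subdivision of the wheel \(W_4\): indeed \(K_5-\{uw_2,vw_1\}\) is precisely \(W_4\) with rim \(u,v,w_2,w_1\) (in this cyclic order) and hub \(w_3\), and the eight bundles are parallel subdivided copies of its eight edges. The core of the argument will be to exhibit one such subdivision \(S\) that is drawn \emph{without crossings} in \(\mathcal{D}\) and that, moreover, uses none of the (at most two) bundle-paths containing a crossing partner of \(uw_2\) or of \(vw_1\) — each single edge is crossed at most once by \(1\)-planarity, and not by the other single edge by our assumption.

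To find \(S\), I would select for each of the eight bundles one of its admissible length-two paths (at most two paths are forbidden in total, so every bundle has at least \(10\) admissible paths) uniformly at random and independently. A crossing between the two edges of one selected path is impossible since adjacent edges do not cross in a \(1\)-planar drawing. A fixed crossing between an edge of bundle \(k\) and an edge of bundle \(k'\neq k\) survives this random selection with probability at most \(1/(10\cdot 12)<1/100\); since each of the \(16\ell\) bundle edges (\(\ell=12\)) is crossed at most once, there are at most \(8\ell\le 96\) such inter-bundle crossings, so the expected number of surviving inter-bundle crossings is at most \(96/100<1\). Hence some selection yields a subdivision \(S\) of \(W_4\) no two of whose edges cross (this is exactly where a bundle size larger than \(8\) is needed, hence \(\ell=12\)); note that crossings of \(S\)-edges with edges outside \(S\), such as edges of \(G\setminus\fence\), are irrelevant to this property.

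Next I would invoke that a subdivision of the \(3\)-connected planar graph \(W_4\) has an essentially unique planar embedding: the faces of any planar drawing of \(S\) are the subdivided faces of \(W_4\), so the rim face \(R\), bounded by the subdivided \(4\)-cycle through \(u,v,w_2,w_1\), is the only face of \(S\) incident to both \(u\) and \(w_2\), and likewise the only one incident to both \(v\) and \(w_1\). Since \(uw_2\) is not crossed by \(vw_1\) (assumption), not by any bundle-path in \(S\) (construction), and any further edge crossing \(uw_2\) would lie outside \(S\), the curve \(uw_2\) crosses no edge of \(S\); hence it lies in the closed disk bounded by \(R\), and the same holds for \(vw_1\). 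But then \(uw_2\) and \(vw_1\) are two chords of the disk \(R\) whose endpoints interleave around its boundary circle, so by the Jordan curve theorem they must cross — contradicting the assumption. Therefore the single edges \(uw_2\) and \(vw_1\) cross each other, and by the standard fact that the endpoints of crossing edges in a \(1\)-planar drawing are cofacial (recalled in \Cref{sec:preliminaries}), \(u\) and \(v\) share a face.

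The main obstacle is the existence of the crossing-free \(W_4\)-subdivision \(S\): a priori a \(1\)-planar drawing of \(\fence\) sitting inside a larger graph could carry many crossings spread over the eight ``edge slots'' of the wheel, and the point of using \(\ell=12>8\) parallel length-two paths per slot is to pit quadratically many path choices against only linearly many crossings, so that all inter-bundle crossings can be dodged simultaneously. Once \(S\) is available, the embedding-uniqueness and Jordan-curve steps are routine.
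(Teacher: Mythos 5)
Your proof is correct, but it reaches the conclusion by a genuinely different route than the paper's. Both arguments run on the same combinatorial engine --- an averaging argument over the \(\ell^{8}\) ways of choosing one length-two path per bundle, which succeeds because the at most \(8\ell+O(1)\) crossings inside \fence are too few to spoil every choice when \(\ell=12\) --- but they extract different objects and finish with different key lemmas. The paper works with the family \(\mathcal{K}\) of \(K_5\)-subdivisions on the branch vertices \(u,v,w_1,w_2,w_3\), so the two single edges are themselves two of the ten Kuratowski paths; it shows that crossings involving at most one single edge ``resolve'' at most a \(\tfrac{10\ell+1}{\ell^2}\le\tfrac{11}{12}\) fraction of \(\mathcal{K}\), whence some \KuraSub remains unresolved unless the single edges cross, and non-planarity of \(K_5\) does the rest. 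You instead delete the single edges, recognize the remainder as a subdivided \(W_4\), extract a self-crossing-free copy \(S\) that also dodges the crossing partners of the single edges, and then invoke Whitney's unique-embedding theorem for the \(3\)-connected \(W_4\) together with a Jordan-curve argument: both single edges are confined to the rim face of \(S\), their endpoints interleave on its boundary \(4\)-cycle, so they must cross. Your version needs two extra (standard) topological facts where the paper needs only Kuratowski's theorem, but in exchange it makes the geometric reason for the forced crossing more transparent. The quantitative details check out: at most two bundle-paths are forbidden, so an inter-bundle crossing survives with probability at most \(\tfrac{1}{10\cdot 12}\) (worst case: both forbidden paths in one bundle), and \(96\cdot\tfrac{1}{100}<1\); crossings between distinct paths of the same bundle cannot survive since only one path per bundle is selected; and crossings of \(S\)-edges with edges outside \(S\) (including the optional direct edges \(uv\), \(uw_1\), which are adjacent to both single edges and hence cannot be their crossing partners) are indeed irrelevant to the planarity of the drawing of \(S\).
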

\begin{proof}
	Let \(\mathcal{K}\) be the set of \KuraSub[s] with \(\{u,v,w_1,w_2,w_3\}\) as 
	the Kuratowski nodes. Consider some 1-planar drawing of \fence and a crossing 
	involving only one of the single edges. The other edge has to be from one of the 
	bundles which replace a \(K_5\) edge \(e\).
	For every \KuraSub that is resolved by this crossing%
	\footnote{i.e.\ the crossing is between edges from non-adjacent Kuratowski paths 
	of the \KuraSub}, there are \(\ell -1\) corresponding \KuraSub[s] that differ only 
	in the path from the bundle replacing~\(e\). Therefore, such a crossing resolves 
	at most \(\frac{1}{\ell}\) of all the subdivisions in \(\mathcal{K}\). By a similar 
	argument, each crossing involving none of the single edges, including $uw_1$ and $uv$, 
	can resolve at most \(\frac{1}{\ell^2}\) of \(\mathcal{K}\). There are 
	\(8 \cdot 2 \cdot \ell\) non-single edges in the bundles, at most two additional 
	non-single direct edges, and two single edges. Thus, at most 
	\(2 \cdot \frac{1}{\ell} + \frac{2\cdot(8 \ell+1)}{2} \cdot \frac{1}{\ell^2} = 
	\frac{10 \ell+1}{\ell^2} \leq \frac{11}{12} < 1\) of \(\mathcal{K}\) is resolved 
	by crossings involving at most one of the single edges.
	Therefore, the single edges cross in every 1-planar drawing of \fence.
	As one of the single edges of \fence is incident to \(u\) and the other to \(v\) 
	and the single edges cross exactly each other, \(u\) and \(v\) are cofacial.
\end{proof}

\begin{figure}
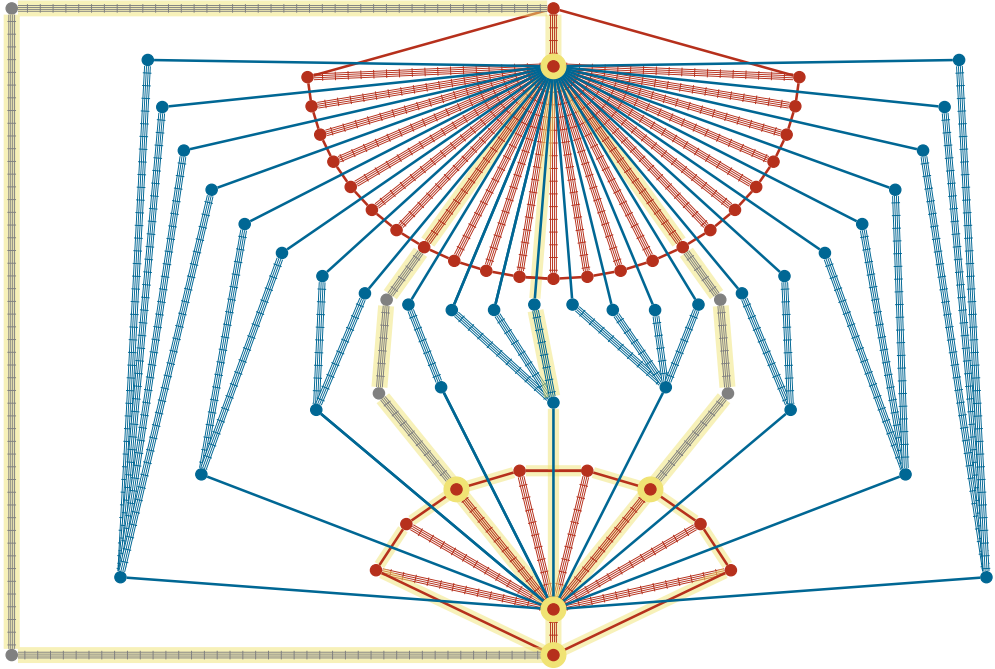

\centering
    \scalebox{1.2}{\instanceFig}
    \caption{Graph \(G_I\) based on \threePartition instance \(I\) with 
		\(A = \{1,2,2,2,3,3,3,4,4\}\), \(B=8\), and \(m=3\). The fences have a fence 
		pattern, the wheels are red, the splitters are blue, and the dividers are gray. 
		A \KuraSub of interest is highlighted in yellow.}
    \label{fig:instance}
\end{figure}

Given a \threePartition instance \(I\) we construct a graph \(G_I\) as follows; 
see \Cref{fig:instance} for an example. We start with two wheels with fences as radians; 
one of size \(3m\), the \emph{transmitter}, and one of size \(Bm\), the \emph{collector}.
For \(\mathcal{S}' = \{\chairCt\}\), every even-numbered radian fence has 
the edge \(uv\) and every odd-numbered has \(uw_1\), where in all cases \(u\)
is the center of the wheel. (Here we use that that \(m\) and \(B\) are even.)
The \emph{rim} edges of each of the wheels are those that are not incident to the
center of the wheel. Note that in a drawing, if an edge incident to the center of 
the wheel crosses a rim edge, then such a crossing is an \(\mathcal{S}'\)-crossing;
see \Cref{fig:instance}, right.
Next, we add a path of three fences 
between the \(3i\)-th node of the transmitter and the \(Bi\)-th node of the collector 
for \(1 \leq i \leq m\). We call the resulting \(m\) paths of fences between 
the wheel centers \emph{dividers}. For each element \(a\in A\) we add a 
\emph{splitter} \(Q_a\), which consists of a claw of size \(s(a)\) of fences and an edge 
between every degree-1 node of the claw and the collector center and 
an edge between the degree-\(s(a)\) node of the claw and the transmitter center. 
The next two lemmas show that \(I\) can be satisfyingly partitioned 
if and only if \(G_I\) is \geosprimerestricted.

\begin{lemma}
\label{lem:npc-forward}
    If \(I\) can be satisfyingly partitioned, then \(G_I\) is \geosprimerestricted.
\end{lemma}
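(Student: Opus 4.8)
The plan is to produce the required drawing explicitly from a satisfying partition \(A_1,\dots,A_m\) of \(A\) and then certify that it can be realized with straight lines. Fix a cyclic order of the triples; this order determines which arc of each wheel rim is associated with which triple, the dividers playing the role of separating spokes. First I would set up a \emph{skeleton layout} in which every fence of \(G_I\) occupies its own closed topological disk, with its two attachment vertices on the disk boundary and with pairwise disjoint interiors. Concretely: draw the transmitter and the collector wheels as two disjoint convex wheels — each hub in the interior, the \(3m\) (resp.\ \(Bm\)) rim vertices on a circle so that the rim is a convex polygon, and each radial spoke occupying a thin fence disk along its radius; route the \(m\) dividers (each a path of three fence disks) between the two wheels so that the complement of the two wheel polygons and the dividers splits into \(m\) topological rectangles \(R_1,\dots,R_m\), where one ``long side'' of \(R_i\) consists of the three transmitter rim edges of triple \(i\), the other ``long side'' of the \(B\) collector rim edges of triple \(i\), and the two ``short sides'' are dividers \(i-1\) and \(i\).

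Next I would populate \(R_i\) with the three splitters \(Q_a,Q_b,Q_c\) corresponding to \(A_i=\{a,b,c\}\). For each of these three splitters, place its claw center inside the cone spanned from the transmitter hub by one of the three transmitter rim edges of \(R_i\) — a distinct rim edge for each of the three splitters — so that the straight edge from the claw center to the transmitter hub crosses exactly that rim edge (being a segment from an exterior point to an interior point of the convex transmitter polygon, it meets the rim nowhere else, and a generic choice of the claw center in the open cone keeps it off the spoke disks). Place the \(s(a)\) leaves of \(Q_a\) (resp.\ \(s(b),s(c)\) leaves of \(Q_b,Q_c\)) inside the cones spanned from the collector hub by a block of \(s(a)\) (resp.\ \(s(b),s(c)\)) consecutive collector rim edges of \(R_i\), the three blocks being consecutive and jointly covering all \(B\) collector rim edges of \(R_i\); this is the only place where \(s(a)+s(b)+s(c)=B\) is used, and it is where the assumed partition enters. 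Draw each claw edge as a fence disk inside \(R_i\) (meeting the other fence disks only in shared vertices), draw each leaf-to-collector-hub edge and each claw-center-to-transmitter-hub edge as a straight segment, and finally fill each fence disk with the straight-line \(\mathcal{S}'\)-restricted fence drawing from the left of \Cref{fig:fence} (with the disk's two boundary vertices as the attachment vertices \(u,v\)); for \(\mathcal{S}'=\{\chairCt\}\) use, for each fence, the prescribed choice of the extra edge \(uv\) or \(uw_1\) that makes its single-edge crossing a \chairCt-crossing. This is the promised drawing; see \Cref{fig:instance}.

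It then remains to check three things. (i) \emph{1-planarity:} the only crossings are the single-edge crossing inside each fence and, for each claw edge and each claw leaf, the single crossing of a claw-to-hub edge with one rim edge; by the bijections set up above every rim edge is crossed at most once and every claw-to-hub edge at most once, fences are otherwise uncrossed, so no edge is crossed twice and no two adjacent edges cross. (ii) \emph{\(\mathcal{S}'\)-restriction:} each fence-internal crossing is an \(\mathcal{S}'\)-crossing by construction of the fence, and each claw-to-hub/rim crossing is an \(\mathcal{S}'\)-crossing because the claw-to-hub edge is incident to a wheel center (the observation preceding this lemma; see the right of \Cref{fig:fence}). (iii) \emph{Geometricity:} every edge outside the fence disks is already a straight segment, and inside each fence disk we placed a straight-line drawing, so with generic choices of coordinates the whole picture is straight-line; equivalently, it contains neither a B- nor a W-configuration — the only crossings are the two isolated kinds just listed, each drawn geometrically — so by Thomassen's criterion~\cite{Thomassen88a} it is equivalent to a geometric drawing. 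Hence \(G_I\) is \geosprimerestricted.

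I expect the main obstacle to be item (iii) together with the simultaneous geometric realizability of the skeleton: one must place the hubs, rim vertices, claw centers and claw leaves generically enough that all the straight claw-to-hub edges and rim edges cross exactly as intended, and that every fence disk is large enough to contain a straight-line copy of its fence while remaining disjoint from everything else. This is routine but needs care, and it is cleanest to build the layout ``fat from the start'' — reserving a convex region for each fence — rather than contracting fences to segments and fattening afterwards, since the geometric fence drawing of \Cref{fig:fence} is not contained in a thin neighborhood of the segment joining its two endpoints.
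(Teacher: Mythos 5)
Your proposal is correct and follows essentially the same route as the paper's proof: draw the two wheels disjointly, place the three claws of each triple \(A_i\) in the \(i\)-th region between the wheels, use \(\sum_{a\in A_i}s(a)=B\) to cross each rim edge by exactly one splitter edge, note that these crossings and the fence-internal ones are \(\mathcal{S}'\)-crossings by construction, and conclude geometric realizability from the absence of B- and W-configurations via Thomassen's criterion. Your version merely spells out the convex/cone layout that the paper leaves implicit.
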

\begin{proof}
	See \Cref{fig:instance} for an example of a \geosprimerestricted drawing that 
	could be the result of the following construction. Let \(A_1, \dots, A_m\) be the 
	satisfying partition. First, we draw \(G_I\) without the splitters 1-planarly 
	such that the wheels are drawn disjointly. In each of the resulting \(m\) distinct 
	faces between the wheels, we draw the claws corresponding to the three elements 
	of one of the triples in the partition.
    As each such face has \(B\) rim edges of the collector and \(3\) rim edges of 
	the transmitter, we can then draw the remaining splitter edges such that each 
	rim edge is crossed by exactly one splitter edge. By the construction of the radians, 
	specifically whether the direct edge \(uv\) was re-added therein, these crossings 
	are \(\mathcal{S}'\)-crossings. See the right side of \Cref{fig:fence} for an 
	example of a \geosprimerestricted subdrawing of such a crossing with 
	\(\mathcal{S}' = \{\chairCt\}\).
    Clearly, the resulting drawing does not contain any B- and W-configuration, 
	and thus there is an equivalent \geosprimerestricted drawing of the graph \(G_I\).
\end{proof}

For the other direction, we only require \(G_I\) to be 1-planar.
\begin{lemma}
\label{lem:npc-backward}
    If \(G_I\) is 1-planar then \(I\) can be satisfyingly partitioned.
\end{lemma}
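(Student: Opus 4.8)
The plan is to start from an arbitrary $1$-planar drawing $\mathcal{D}$ of $G_I$ and recover a satisfying partition from its combinatorics. The first step is to invoke \Cref{lem:fence}: in $\mathcal{D}$ the single edges of every fence cross only each other, so each fence acts as an uncrossable edge and is not touched by any other edge. Consequently the only edges of $G_I$ that can be involved in a crossing are the rim edges of the two wheels and the single edges joining the splitter claws to the two wheel centers; write $c_0$ for the collector center and $t_0$ for the transmitter center, and recall that the wheel spokes and the dividers are fences, hence uncrossable. Contracting every fence to its uncrossed edge, let $H_0$ be the plane subgraph consisting of the two wheel-spoke stars (with centers $c_0,t_0$) together with the $m$ dividers --- crucially, the rim edges are \emph{not} in $H_0$. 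Since $H_0$ is drawn without crossings, a short Euler-characteristic computation (it is connected) shows that $H_0$ has exactly $m$ faces, which we call \emph{cells}.

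Next I would pin down the shape of this skeleton. Because the spokes at $c_0$ are uncrossable and no rim edge can cross a spoke, each collector rim edge $c_jc_{j+1}$ can only be drawn with the spokes $c_0c_j$ and $c_0c_{j+1}$ consecutive in the rotation at $c_0$; hence that rotation is exactly the rim-cyclic order, and with the divider attachments at every $B$-th collector rim node (and every third transmitter rim node) this forces each cell to contain, in a ``fan'' $R_C^i$ of $B$ triangular regions around $c_0$, exactly $B$ collector rim edges, and in a fan $R_T^i$ of $3$ triangular regions around $t_0$, exactly $3$ transmitter rim edges; moreover $R_C^i$ and $R_T^i$ are disjoint, since a spoke from $t_0$ cannot reach a rim node through $R_C^i$ without crossing a fence. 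The same uncrossability principle shows that the contracted claw of each splitter $Q_a$, being connected, lies in a single cell; write $A_i$ for the set of elements $a$ whose splitter claw lies in the cell $F_i$. Furthermore the claw of $Q_a$ must lie in the ``main body'' of its cell, i.e.\ outside both fans: if it lay inside $R_C^i$ then the single edge $p_at_0$ would have to leave the $c_0$-fan (crossing one rim edge) and then enter the $t_0$-fan to reach $t_0$ (crossing a second rim edge), violating $1$-planarity, and symmetrically it cannot lie inside $R_T^i$.

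The last step is a counting argument within each cell $F_i$. For $a\in A_i$, the edge $p_at_0$ starts at $t_0$ inside one triangular region of $R_T^i$, so to reach $p_a$ (in the main body) it must leave that region by crossing one of the $3$ transmitter rim edges of $F_i$ --- it cannot cross a spoke --- and this is its unique crossing; likewise each of the $s(a)$ leaf edges of $Q_a$ reaches $c_0$ only by crossing one of the $B$ collector rim edges of $F_i$. Since each rim edge is crossed at most once by $1$-planarity, distinct such edges use distinct rim edges, so $\sum_{a\in A_i}s(a)\le B$ and $|A_i|\le 3$ for every cell. Summing over the $m$ cells and using $\sum_{a\in A}s(a)=mB$ and $|A|=3m$ forces equality everywhere, so $A_1,\dots,A_m$ partition $A$ into $m$ triples each of total size exactly $B$: a satisfying partition of $I$.

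I expect the main obstacle to be the rigidity step: rigorously deducing, from ``nothing crosses a fence'' plus $1$-planarity alone, that the wheel/divider skeleton together with the rim edges and the splitter claws is forced into the configuration described --- in particular controlling the rotation systems at $c_0$ and $t_0$, that each cell has precisely $B$ and $3$ available rim edges, and that each claw is confined to the main body of its cell. Once that is in place, the cell-by-cell count is routine.
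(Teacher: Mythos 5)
Your overall plan (fences are uncrossable, the fence-skeleton cuts the plane into $m$ cells, each cell can absorb at most three splitters of total size at most $B$, and the global sums $\sum_a s(a)=mB$, $|A|=3m$ force equality) is a reasonable reorganisation of the argument, and your final counting step is fine. But there is a genuine gap exactly where you flag it, and it is not a routine technicality: it is the step where the paper's proof does its real work. Your justification for the rigidity of the skeleton is ``no rim edge can cross a spoke, hence the spokes $c_0c_j$ and $c_0c_{j+1}$ must be consecutive in the rotation at $c_0$.'' This inference is invalid: a rim edge $c_jc_{j+1}$ between two spokes that are \emph{not} consecutive in the rotation can simply be routed around the \emph{endpoints} of the intervening spokes without crossing them, and a priori rim edges may also cross each other or cross bundle edges of fences. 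Without first excluding these possibilities you cannot conclude that each cell contains exactly $B$ collector and $3$ transmitter rim edges arranged as separating ``fans,'' nor that a splitter leaf edge reaching $c_0$ is forced to spend its unique crossing on a collector rim edge of its own cell --- and without that, the inequality $\sum_{a\in A_i}s(a)\le B$ has no proof. (A partial patch is available to you: since all edges of your $H_0$ are fences, the rim path strictly between two consecutive divider endpoints cannot leave a single face of $H_0$, which pins down the cyclic order of the dividers; but this still does not control the rotation of the spokes at the centers inside a cell or rule out rim--rim crossings.)

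The paper closes this gap by a different mechanism, which is the key idea missing from your proposal: for each splitter $Q_a$ it exhibits $K_5$ Kuratowski subdivisions whose branch vertices are the two wheel centers and three divider endpoints on the transmitter rim, in which the only crossable edges lying on \emph{non-adjacent} Kuratowski paths are transmitter rim edges versus the two non-fence splitter edges. Every such subdivision must be resolved by a crossing between a splitter edge and a rim edge, and a counting argument over all $3m$ elements then shows that \emph{every} transmitter rim edge is crossed by a splitter edge incident to the transmitter center (and symmetrically for the $Bm$ collector rim edges). Only after this is established does the rigidity follow for free: no rim edge has a crossing to spare, so the wheels-plus-dividers skeleton is drawn as a planar $3$-connected graph with a unique embedding, and the distribution of splitters into the $m$ regions gives the partition. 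If you want to keep your face-counting framework, you still need to import this Kuratowski (or an equivalent topological) argument to force the structure; as written, the proof is incomplete at its central step.
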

\begin{proof}
    Let \mcG be a 1-planar drawing of \(G_I\). By \Cref{lem:fence}, we can treat fences 
	as uncrossable edges in this proof.	Consider a $K_5$ \KuraSub where the nodes 
	are the wheel centers and three nodes that are both on the transmitter 
	wheel and an endpoint of a divider.	Three paths of the $K_5$ \KuraSub
	are on the transmitter wheel, and two paths each are on four disjoint paths 
	between the wheel centers: three through the three dividers and one through a 
	splitter \(Q_a\). In \Cref{fig:instance} such a \KuraSub is highlighted.
    The only non-fence, and thus 1-planarly crossable, edges in this \KuraSub 
	are the transmitter rim edges and two edges in the splitter path.
    As the transmitter rim edges are in adjacent Kuratowski paths, the crossing 
	that resolves this \KuraSub in \mcG has to be between a non-fence edge of the 
	splitter and a transmitter rim edge.
    As the transmitter rim consists of \(3m\) edges and there are multiple 
	such \KuraSub[s] for each of the \(3m\) elements of \(A\), we know that 
	in \mcG each transmitter rim edge is crossed by a splitter edge that is 
	adjacent to the transmitter center. By a similar argument, each of the \(Bm\) 
	collector rim edges is crossed by one of the \(Bm\) splitter edges adjacent 
	to the collector center.
	
    There are no other non-fence edges, so the only other crossings in \mcG are 
	the single crossing inside each fence. The embedding of the transmitter and 
	collector wheel and the dividers is unique (up to the internal embedding of 
	the fences) because it is \(3\)-connected and planar, when considering each 
	fence as a single edge. The exterior of the transmitter and collector wheels 
	is divided by the dividers into \(m\) regions. Each such region has three 
	transmitter and \(B\) collector rim edges on its border. 
	Thus, there have to be three splitters inside each such region and the 
	sum of the sizes of their corresponding elements has to be exactly \(B\), 
	yielding a satisfying partition of \(A\).
\end{proof}

Now we bound the \tpWidth of \(G_I\). 

\begin{lemma}
\label{lem:npc-treewidth}
    The constructed graph \(G_I\) has constant \tpWidth.
\end{lemma}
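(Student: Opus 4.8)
The plan is to bound $\pw(G_I)$ by an absolute constant that does not depend on the instance $I$, via a short chain of reductions. First I would observe that a single fence \fence is a graph on only a constant number of vertices: apart from $u,v,w_1,w_2,w_3$ and the at most two re-added direct edges, it consists of the $8\ell$ internal vertices of its bundles, and $\ell=12$ is fixed. Hence $\pw(\fence)=O(1)$, and trivially \fence has a path decomposition of constant width with $u$ and $v$ in every bag. I would then use the standard composition principle: if $G$ arises from $G'$ by replacing an edge $xy\in E(G')$ with a graph $F$ meeting the rest of $G$ only in $\{x,y\}$, then $\pw(G)\le\pw(G')+|V(F)|$, and doing this simultaneously for many copies of $F$ with pairwise disjoint interiors still costs only $\max|V(F)|$ — take a path decomposition of $G'$ and, after a bag containing both endpoints of a replaced edge, splice in the path decomposition of the corresponding $F$ with that bag unioned into every new bag. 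Letting $H$ be the graph obtained from $G_I$ by contracting every fence back to a single edge, it therefore suffices to prove $\pw(H)=O(1)$.

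Next I would describe $H$ concretely: it is the union of two wheels — one with hub $c_T$ and a $3m$-vertex rim cycle, one with hub $c_C$ and a $Bm$-vertex rim cycle — together with $m$ divider paths of length three, each joining a rim vertex of the first wheel to a rim vertex of the second, and $3m$ splitter stars, where the $a$-th is a $K_{1,s(a)}$ whose center is adjacent to $c_T$ and all of whose leaves are adjacent to $c_C$. Since $\pw(G)\le\pw(G-S)+|S|$ for any vertex set $S$, I would delete the set $S$ consisting of $c_T$, $c_C$, and one rim endpoint of the first divider on each of the two rim cycles, so $|S|=4$ is an absolute constant. The graph $H-S$ is then the disjoint union of: a \emph{generalized ladder} formed by the two now-broken rim cycles (two paths) joined by the remaining $m-1$ divider paths, together with the constant-size ``blobs'' left behind at each rim vertex by the deleted spoke and splitter edges; and the $3m$ splitter stars, which, being cut off from both hubs, are now pairwise disjoint and each of pathwidth $1$.

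Finally I would bound the pathwidth of the generalized ladder. By construction divider $i$ joins the $(3i)$-th transmitter rim vertex to the $(Bi)$-th collector rim vertex, so the divider endpoints occur in matching cyclic order along the two rim cycles. Thus the $m-1$ divider paths cut the ladder into regions, each bounded by two consecutive dividers, and two consecutive regions share only the constant-size divider path between them. Each region is just two subpaths of the rim-paths plus two bounding (constant-length) dividers and the constant-size pendant blobs, so with its $O(1)$ boundary vertices kept in every bag it has pathwidth $O(1)$; concatenating the regions' path decompositions along their constant-size interfaces gives a path decomposition of the whole ladder of constant width. Adding the disjoint decompositions of the $3m$ splitter stars and then putting the four vertices of $S$ back into every bag yields $\pw(H)=O(1)$, and hence $\pw(G_I)=O(1)$.

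The single genuine difficulty — the main obstacle — is that the two rim cycles have very different lengths ($3m$ versus $Bm$), so there is no way to sweep the two wheels simultaneously in lockstep; the resolution is precisely the observation that the constant-length divider paths serve as constant-size separators cutting $H$ into a \emph{linear} chain of constant-pathwidth pieces regardless of the cycle lengths, while the two high-degree hubs are absorbed for free into every bag at a bounded additive cost. Everything else is routine bookkeeping, and since the bundle size $\ell$ and the fence size are fixed, all the $O(1)$ bounds are absolute constants.
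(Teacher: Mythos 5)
Your proposal is correct and follows essentially the same route as the paper: contract each constant-size fence to a single edge, delete the two hubs (plus, as the paper also does, the two endpoints of one divider/connecting edge), observe that the splitter stars become disjoint pieces of pathwidth $1$, and bound the remaining two-rims-plus-dividers structure — which the paper identifies as a subdivision of an $m\times 2$ grid and you decompose explicitly into a chain of regions separated by the constant-size dividers. The extra detail you supply (the splicing argument for fence replacement and the region-by-region concatenation) is just an unpacking of the steps the paper states more tersely.
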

\begin{proof}
    Firstly, we replace each fence \fence\ by the single edge \(uv\); since each 
	fence has constant size, this changes the \tpWidth by a constant factor. Next, 
	we remove the two wheel centers; since we remove two vertices, this changes 
	the \tpWidth by at most two.
    The remaining graph consists of a disconnected set of claws and two cycles 
	connected by some \emph{connecting edges} that form a planar graph: the
	cyclic order of the endpoints of the connecting edges is consistent on
	both cycles. Clearly, the claws have constant \tpWidth. Lastly, the two 
	cycles with their connecting edges also have constant \tpWidth: removing 
	the two vertices of some arbitrary connecting edge, we are left with 
	a subdivision of an \(m \times 2\) grid graph that has constant \tpWidth.
\end{proof}

It is obvious that the graph \(G_I\) can be constructed in polynomial time
because \threePartition is strongly NP-complete: since \(B\) is bounded
by a polynomial in \(m\), the size of \(G_I\) is also bounded by a polynomial
in \(m\).
\Cref{thm:main-npc} immediately follows from 
\Cref{lem:npc-forward,lem:npc-backward,lem:npc-treewidth}.

\section{Conclusions}
We have shown the following dichotomy for recognizing 1-planar graphs with restricted
types of crossings:
\begin{itemize}
\item If \(\mathcal{S} \subseteq \{\fullCt,\almostFullCt,\bowtieCt\}\),
  it is \FPT\ parameterized by the treewidth to recognize \(\mathcal{S}\)-restricted 
  1-planar graphs.
\item If \(\mathcal{S} \cap \{\arrowCt,\chairCt,\XCt\} \neq \emptyset\), then it is 
  \NP-hard to recognize 1-planar graphs that are \srestricted, even for graphs
  that have the treewidth bounded by constant.
\end{itemize}
The same dichotomy carries over to the recognition of 
\geosrestricted graphs.

Our positive algorithms, parameterized by the treewidth, rely on Courcelle's theorem and 
are thus mainly of theoretical interest. It would be interesting to obtain algorithms 
that would be closer to becoming practical. 

In fact, it is unclear whether the parametrization by the treewidth is needed to obtain 
the positive results. Brandenburg~\cite{Brandenburg15,Brandenburg19} has shown that 
\(\{\fullCt\}\)-restricted 1-planar graphs are recognizable in polynomial time.
For the other cases of nonempty \(\mathcal{S} \subseteq \{\fullCt,\almostFullCt,\bowtieCt\}\),
the recognition of \(\mathcal{S}\)-restricted 1-planar graphs with unbounded treewidth
is open. Since our reductions to the case of internally 3-connected graphs 
(see \Cref{sec:3con}) do not assume anything about the treewidth, 
it suffices to consider the case of internally 3-connected graphs. 
We leave this problem, and the variant with straight-line drawings, open for future research.

Another direction for future research would be to consider 2-planar graphs, that is,
graphs that have a drawing where each edge participates in at most two crossings.
A similar characterization for 2-planar graphs with restricted types of crossings
seems very challenging.

\FloatBarrier
\bibliographystyle{plainurl}
\bibliography{references}
\end{document}